\documentclass[12pt]{article}
\usepackage{amsmath,amssymb,amsthm,mathtools,geometry}
\usepackage{graphicx,enumitem,color,threeparttable,booktabs,multirow,makecell}
\usepackage[title]{appendix}
\usepackage[hyphens]{url}
\usepackage{hyperref}
\newtheorem{theorem}{Theorem}

\newtheorem{lemma}{Lemma}
\newtheorem{assumption}{Assumption}

\newtheorem{algorithm}{Algorithm}

\allowdisplaybreaks[4]

\newcommand{\E}{\mathrm{E}}
\newcommand{\rP}{\mathrm{P}}
\newcommand{\Var}{\mathrm{Var}}
\newcommand{\Cov}{\mathrm{Cov}}

\newcommand{\rd}{\mathrm{d}}
\newcommand{\pto}{\stackrel{p}{\longrightarrow}}
\newcommand{\dto}{\stackrel{d}{\longrightarrow}}
\newcommand{\wc}{\mathcal{W}_{k,k^\prime}^c}

\usepackage[round]{natbib}
\newcommand{\anon}{1}

\begin{document}

\def\spacingset#1{\renewcommand{\baselinestretch}%
{#1}\small\normalsize} \spacingset{1}


\if1\anon
{
  \title{\bf Double/Debiased Machine Learning for Continuous Treatment Effects in Panel Data with Endogeneity}
  \author{Peikai Wu\\
    Department of Statistics and Data Science, Fudan University, China\\
    Kuan Sun\\
    Department of Statistics and Data Science, Fudan University, China\\
    and \\
    Zhiguo Xiao \\
    Department of Statistics and Data Science, Fudan University, China}
  \maketitle
} \fi

\if0\anon
{
  \bigskip
  \bigskip
  \bigskip
  \begin{center}
    {\LARGE\bf Double/Debiased Machine Learning for Panel Data in the Presence of Endogeneity}
\end{center}
  \medskip
} \fi

\bigskip
\begin{abstract}
We propose a double/debiased machine learning framework to estimate average derivative effects in nonparametric panel models with two-way fixed effects. It extends instrumental variable methods to panel settings, handles continuous treatments and various forms of endogeneity, and introduces a cross-fitting scheme to restore independence after eliminating time fixed effects. A penalized GMM debiasing term enables automatic debiased machine learning with endogeneity. Our estimators for contemporaneous, dynamic, and aggregated effects are consistent and asymptotically normal with a valid variance estimator. Simulations show reduced regularization bias and accurate confidence intervals. An application to ECLS-K data reveals rich dynamics in the effect of family SES on childhood BMI. 
\end{abstract}

\noindent%
{\it Keywords:} two-way fixed effects; nonparametric instrumental variables; average derivative; Neyman orthogonality; cross-fitting; Riesz representer
\vfill

\newpage
\spacingset{1.8} 

\section{Introduction}
\label{sec-intro}
Researchers are frequently interested in the effect of a continuous, time-varying treatment in panel data. As linear model  specifications risk misspecification, semiparametric and nonparametric panel models (see \citet{rodriguez2017nonparametric,su2011nonparametric} for reviews) have garnered increasing attention. In these models, the treatment effect is defined as a functional of the structural function that determines how the treatment, covariates, and other variables generate the outcome. The problem is typically high-dimensional, and endogeneity is ubiquitous: regressors may be correlated with the contemporaneous error, uncorrelated with the contemporaneous error but correlated with future errors, or include lagged outcomes. The combination of model flexibility, high-dimensionality, and endogeneity calls for estimation tools that support both consistent estimation and valid inference for the treatment effect.

The advent of machine learning (ML) has opened new avenues for flexible estimation of high-dimensional statistical objects, including the structural function. However, regularization and overfitting introduce bias in ML estimators, complicating both consistent estimation and valid inference for treatment effects. Debiasing is therefore indispensable, but few studies have addressed it in the panel data context. Notable exceptions include \citet{belloni2016inference}, who use a Lasso variant to select important variables and then conduct a final OLS regression of the outcome on the treatment using the selected variables as controls, yielding a consistent treatment-effect estimator under suitable conditions. \citet{klosin2022estimating,semenova2023inference,clarke2025double} instead adopt the double machine learning (DML) framework of \citet{chernozhukov2018double}, which augments the original moment condition with a debiasing term so that the resulting moment condition is less sensitive to bias in the ML estimator. With the exception of \citet{klosin2022estimating}, however, these works confine attention to partially linear models in which only the confounding effect can be nonlinear. Among them, only \citet{semenova2023inference} explicitly accommodate endogeneity in dynamic panels. Moreover, all the aforementioned works only allow for individual fixed effects, but not time fixed effects. Given the potential for omitted global confounders, it is more appropriate to include both individual and time fixed effects in the structural function.

To address these gaps, we develop a framework for estimating and conducting inference on the average derivative of the structural function in panel data with regressors that are not strictly exogenous. We consider both a static and a dynamic panel model, where the structural functions are nonparametric and incorporate additive two-way fixed effects (TWFE). These functions depend on contemporaneous and historical treatments and covariates, thereby capturing dynamic treatment effects and rich effect heterogeneity. Individual and time fixed effects are eliminated by differencing along the time and individual dimensions, respectively, and ML is then employed to estimate the differenced structural function. The use of ML is motivated by two considerations: it can flexibly approximate complex functions, and its built-in regularization helps mitigate the ill-posedness problem inherent in nonparametric estimation under endogeneity.

To debias ML estimators, we must specify a suitable debiasing term. \citet{klosin2022estimating} derive such a term in an exogenous nonparametric panel model with only individual fixed effects and estimate it by adapting the auto-DML framework of \citet{chernozhukov2022automatic}, originally designed for cross-sectional data. To our knowledge, however, no prior work has addressed debiased ML estimation in panel data with both endogenous regressors and time fixed effects---features that substantially complicate the problem. We therefore first derive the form of the debiasing term in our setting and then extend the auto-DML framework accordingly. Because endogeneity precludes the use of the machine learners considered in \citet{chernozhukov2022automatic}, we instead estimate the debiasing term via penalized GMM \citep{caner2018high}, drawing on the standard GMM approach to endogeneity in linear models. This preserves the central appeal of auto-DML---the ability to estimate the highly complex debiasing term using only known moment conditions, without solving for it explicitly. We further derive the convergence rate of the estimated debiasing term and establish the asymptotic theory for our debiased estimator of the average derivative.

We evaluate the proposed framework through a Monte Carlo study and an empirical application. The simulation results demonstrate that our estimator effectively removes the regularization bias of the underlying ML estimator, provides accurate variance estimates, and yields confidence intervals with empirical coverage near the nominal level. Moreover, it remains robust to first-stage misspecification and to the mistaken neglect of endogeneity. In the empirical application, we use the ECLS-K dataset to examine how family socioeconomic status, household size, and fast food prices shape early childhood BMI. The results show that the contemporaneous effect of family socioeconomic status changes sign as children age—a finding that helps reconcile opposing conclusions in prior studies regarding the direction of this effect.

Our paper resolves a set of challenges that, taken together, distinguish it from existing work and makes several contributions to related strands of literature. First, we extend the DML framework to nonparametric panel models with TWFE in the presence of endogeneity---a setting that, to our knowledge, has not been studied before. In this novel context, we derive the explicit form of the debiasing term and use it to construct Neyman orthogonal moment conditions. A further complication unique to TWFE panels arises in the cross-fitting step: removing time fixed effects requires demeaning across units within each period, an operation that mechanically introduces correlation across the demeaned data and thereby breaks the independence on which standard cross-fitting relies. We resolve this with an innovative cross-fitting scheme that sets aside an additional fold dedicated to demeaning, thereby restoring the sample independence required for the asymptotic theory.
 
Second, the debiasing term is essentially the influence function of the original moment condition \citep{ichimura2022influence}; we show that it factors into the product of a Riesz representer and the differenced error. The Riesz representer admits no closed-form solution in our setting. Existing estimation strategies \citep[e.g.,][]{chernozhukov2021automatic,chernozhukov2022riesznet,chernozhukov2020adversarial} are confined to exogenous regressors and cannot be transported to our setting. We extend the auto-DML idea by estimating the Riesz representer via penalized GMM, delivering an estimator that operates under both endogeneity and TWFE. Along the way, we derive the mean-squared convergence rate of penalized GMM under conditions different from---and milder than---those in \citet{caner2018high,bakhitov2022automatic}, recovering across a broad range of situations the same rate available in the cross-sectional exogenous benchmark. This also extends the classical nonparametric instrumental variables framework of \citet{newey2003instrumental} to the panel data setting.
 
Third, our framework accommodates heterogeneous and dynamic effects of continuous treatments, overcoming a well-known limitation of conventional TWFE estimators, which are largely confined to estimating contemporaneous, homogeneous effects of binary treatments, have been shown to deliver negatively-weighted aggregates of treatment effects under heterogeneity and are criticized for relying on assumptions unlikely to hold exactly \citep{de2020two,callaway2021difference,arkhangelsky2024causal}. Meanwhile, we retain both individual and time fixed effects, which helps address the issue of omitted variables and aligns with classical panel model practices \citep{mundlak1978pooling}. 
 
Fourth, dynamic panel models suffer from endogeneity due to the inclusion of lagged dependent variables, giving rise to the well-known Nickell bias. While the classical Arellano-Bond estimator resolves this issue in linear models, our framework addresses this bias based on orthogonalized Arellano-Bond-type moment conditions. Moreover, our specification does not impose linearity restrictions, thereby mitigating the misspecification risk inherent in conventional dynamic panel models and yielding a refined and extended version of the classic Arellano–Bond estimator.

The rest of this paper is organized as follows. Section \ref{sec2} introduces the framework of this paper, including notation, models, estimands, and basic assumptions. In Section \ref{sec3} we describe the estimation procedure, i.e., the algorithm to estimate the debiasing term, and then build the debiased estimator. Section \ref{sec4} discusses the mean-square convergence rate of penalized GMM and establishes the asymptotic theory of our debiased estimators. Simulation results and an empirical application regarding the determinants of early childhood BMI are presented in Sections \ref{sec5} and \ref{sec6}, and Section \ref{sec7} concludes. Additional theoretical and simulation results are provided in Appendices A–C, and the proofs of all results are available in the supplemental appendix.

\section{Framework Setup}
\label{sec2}
Under the cross-sectional setting, \citet{newey2003instrumental} and \citet{newey2013nonparametric} have studied the endogeneity problem through the classical nonparametric instrumental variables (NPIV) framework
\begin{equation}
  \label{ivm}
  Y = \gamma_0(X) + \varepsilon, \quad \E(\varepsilon \mid Z) = 0,
\end{equation}
where $Y$ is the outcome,  $X$ is the regressor and is correlated with the error term $\varepsilon$, and $Z$ is the instrumental variable. In this model and the models below, we slightly abuse notation by using $\gamma_0(\cdot)$ to denote an unknown function with the appropriate input dimension. We will extend this model to the panel data setting to study the causal effects of a continuous treatment. 

Specifically, we consider a balanced panel dataset\footnote{The extension to unbalanced panels is conceptually straightforward and requires only technical modifications including applying the differencing, fold-specific demeaning, and debiased moments to observed period pairs, and setting unavailable moment contributions to zero.} with $N$ individuals and $T$ periods, indexed by $i=1,\ldots,N$ and $t=1,\ldots,T$, respectively. For individual $i$ in period $t$, let $Y_{it}$ denote the scalar outcome, $D_{it}$ the scalar continuous treatment, $X_{it}$ the vector of time-varying covariates, and $I_{it}$ the vector of time-varying instrumental variables. $C_i$ is the vector of time-invariant covariates. In the models below, $\mu_i$ and $\lambda_t$ represent the individual and time fixed effects, respectively, and $\varepsilon_{it}$ is the idiosyncratic error term. To simplify the notation for the exposition, we define $\overline{V}_{it} = (V_{i1},\ldots,V_{it})^\prime$, $\overline{W}_t = (W_1,\ldots,W_t)^\prime$, $\underline{V}_{it} = (V_{it},\ldots,V_{iT})^\prime$, and $\overline{V}_{i,j:k} = (V_{ij},V_{i,j+1},\ldots,V_{iT})^\prime$ for some variables $V_{it}$ and $W_t$. Following the literature, we assume that the data $(\overline{D}_{iT},\overline{X}_{iT},\overline{I}_{iT},\overline{\varepsilon}_{iT},C_i,\mu_i)_{i=1}^N$ are independent and identically distributed for different individuals, which is referred to as the i.i.d. assumption.

Endogeneity in panel data manifests in various forms. In static panel models, time-varying covariates may be sequentially exogenous, meaning they are correlated with future error terms, or completely endogenous, meaning they are correlated with the error terms in all periods. These scenarios lead to the first model studied in this paper, which is a nonparametric static panel model with additive two-way fixed effects: 
\begin{equation}
  \label{mod1}
  \begin{gathered}
    Y_{it} = \gamma_0(\overline{X}_{i,t-q:t},\overline{D}_{i,t-q:t},C_i) + \mu_i + \lambda_t + \varepsilon_{it}, \\
    \E(\varepsilon_{it} \mid \overline{X}_{1it},\overline{D}_{it},\overline{I}_{it},C_i,\mu_i,\lambda_t) = 0,
  \end{gathered}
\end{equation}
where $X_{1it}$ is a subvector of $X_{it}$. The integer $q \geq 0$ is specified by the user, indicating that treatments and time-varying covariates between period $t-q$ and $t$ may affect the outcome via $\gamma_0(\cdot)$. Since some time-varying covariates may respond dynamically to historical outcome variables, we assume they are sequentially exogenous rather than strictly exogenous, as indicated by the mean independence between $\overline{X}_{1it}$ and $\varepsilon_{it}$. Meanwhile, some time-varying covariates might be completely endogenous, and there exist instruments $\overline{I}_{it}$ that are mean independent of $\varepsilon_{it}$, so we may utilize these instruments for the identification and estimation. Moreover, it should be noted that \eqref{mod1} is a flexible model as it does not impose any functional restrictions on the treatment effect and the confounding effect of covariates. 

In dynamic panel models, the lagged outcome variables, as part of the regressors, are bound to be correlated with historical error terms, making the regressors endogenous. Moreover, as in static panel models, the time-varying covariates can be sequentially exogenous or completely endogenous. Hence, we consider the following nonparametric dynamic panel model with additive two-way fixed effects, 
\begin{equation}
  \label{mod2}
  \begin{gathered}
    Y_{it} = \gamma_0(\overline{Y}_{i,t-1-p:t-1},\overline{X}_{i,t-q:t},\overline{D}_{i,t-q:t},C_i) + \mu_i + \lambda_t + \varepsilon_{it},\\ 
    \E(\varepsilon_{it} \mid \overline{Y}_{i,t-1},\overline{X}_{1it},\overline{D}_{it},\overline{I}_{it},C_i,\mu_i,\lambda_t) = 0,
  \end{gathered}
\end{equation}
where the integer $p \geq 0$ is selected by the user, stipulating the order of autoregressive effects of $Y_{it}$. The interpretation of \eqref{mod2} is similar to that of \eqref{mod1} and is thus omitted for simplicity. 

We now turn to the discussion of causal estimands. In practice, the unknown function itself is rarely an object of interest, the parameter of interest would typically be a functional of $\gamma_0(\cdot)$ of the form  
\[
\theta_0 = \E[g(W_{it},\gamma_0)], 
\]
which is an expectation of some functional $g(W_{it},\gamma_0)$ over the distribution of data, where $W_{it} = (\overline{Y}_{it},\overline{D}_{it},\overline{X}_{it},C_i)^\prime$. Following this, for $s\in\{0,1,\ldots,q\}$, we will estimate the average causal response of the form  
\[
\theta_{0t}(s) = \E\left[\frac{\partial Y_{it}}{\partial D_{i,t-s}}\right] = \E\left[\frac{\partial \gamma_0}{\partial D_{i,t-s}}\right],
\]
where $\gamma_0 = \gamma_0(\overline{X}_{i,t-q:t},\overline{D}_{i,t-q:t},C_i)$ under the static panel model, and $\gamma_0 = \gamma_0(\overline{Y}_{i,t-1-p:t-1}, \allowbreak \overline{X}_{i,t-q:t},\overline{D}_{i,t-q:t},C_i)$ under the dynamic panel model. In the cross-sectional setting or panel setting without dynamic treatment effects, $\theta_{0t}(0)$ is a very natural estimand when treatments are continuous and has been widely studied in many existing works \citep{ai2007estimation,callaway2024difference,klosin2022estimating,xiao2025causal}. It measures the average marginal effect of the contemporaneous treatment on the current outcome (period $t$'s outcome), which is as fundamental as the classical average treatment effect estimand under the binary treatment scenario. Because our framework is designed for panel data allowing dynamic treatment effects, we extend $\theta_{0t}(0)$ to $\theta_{0t}(s)$ for $s\in\{0,1,\ldots,q\}$, so we are able to explore the average marginal effect of historical treatments on the current outcome. 

After defining $\theta_{0t}(s)$, it is reasonable to consider weighted averages of $\theta_{0t}(s)$'s that measure certain aggregated effects. For example, we may consider the aggregated average marginal effect of contemporaneous and historical treatments on the current outcome, 
$
\theta_{0t} = \sum_{s=0}^q w_s \theta_{0t}(s),
$
and the aggregated average marginal effect of contemporaneous treatment across periods, 
$
\theta_{0}(0) = \sum_{t=1}^T w_t \theta_{0t}(0),
$
where $w_s$'s and $w_t$'s are user-defined weights.

\section{Estimation}
\label{sec3}
\subsection{Overview}
\label{sec31}
In the cross-sectional setting, the nonparametric instrumental regression model \eqref{ivm} has been extensively studied in the existing literature, and readers can refer to \citet{horowitz2011applied,chen2016methods,centorrino2017additive} for some reviews. The most prominent challenge in the estimation of this model is the ill-posedness problem. Note that the conditional expectation condition in \eqref{ivm} yields the integral equation 
\[
\E[Y \mid Z] = \int \gamma_0(x) f(x \mid z) \rd x,
\]
where $f$ represents the conditional PDF of $X$ given $Z$. This is an integral Fredholm equation of the first kind, and solving for $\gamma_0$ directly is ill-posed because it involves inverting a compact operator, leading to the mapping from $\E[Y \mid Z]$ to $\gamma_0$ being not continuous. This brings about severe estimation issues. For example, a small estimation error of the conditional expectation can cause a huge fluctuation in the solution of $\gamma_0$, indicating that we cannot simply plug estimators of $\E[Y \mid Z]$ and $f(x \mid z)$ into the integral equation and then approximately solve $\gamma_0$. Another issue arising therefrom is that the estimation of $\gamma_0$ exhibits a slower convergence rate than the standard nonparametric regression. 

A common solution to the ill-posedness problem is regularization, which leverages a penalty term to avoid estimation of higher-order terms that induce variance explosion. For example, \citet{florens2003inverse,carrasco2000generalization,hall2005nonparametric,carrasco2007linear} use the Tikhonov regularization. \citet{newey2003instrumental} resort to a different method based on sieve estimation under regularization, \citet{florens2016regularizing} consider the Bayesian regularization approach, and \citet{singh2019kernel} studies the reproducing kernel Hilbert spaces regularization.

Noting the critical role of regularization in machine learning (ML), recent literature has attempted to employ ML to tackle the estimation of $\gamma_0$. ML methods enjoy the following advantages over traditional methods. First, compared with kernel regression or sieve approximation, machine learner like neural networks, random forests, and boosting methods can provide more flexible functional form approximation for the conditional expectations $\E[Y\mid Z]$ and $\E[X\mid Z]$ that may appear in the estimation procedure. Second, machine learners exploit sophisticated regularization schemes that are data-driven to control the complexity of ML estimators and simultaneously alleviate the ill-posedness problem. Third, when the instrument $Z$ is high-dimensional, the convergence rate under traditional regularization decreases, which is a manifestation of the curse of dimensionality. In contrast, machine learners can more effectively extract information from the data, thereby constructing more powerful instruments. The implementation of ML estimators is beyond the scope of this article; interested readers can refer to the references in the review of ML instrumental variable estimators of \citet{bakhitov2022automatic} for further details. 

Suppose we obtain an ML estimator $\widehat{\gamma}$ and the parameter of interest $\beta_0$ satisfies the moment condition $\E[m(Y,X,Z,\gamma_0) - \beta_0] = 0$. By plugging $\widehat{\gamma}$ into the empirical version of the moment condition, we obtain a naive estimator $\widehat{\beta} = \sum_{i=1}^N m(Y_i,X_i,Z_i,\widehat{\gamma})/N$. Nonetheless, as we pointed out before, ML estimators are built under regularization, which would typically generate biased estimators. In the literature, the bias introduced by regularization is called regularization bias. Worse still, ML estimators generally suffer from another serious issue -- overfitting, which thereby gives rise to overfitting bias. Hence, the ML estimator $\widehat{\gamma}$ is typically accompanied by non-negligible biases, rendering the plug-in estimator $\widehat{\beta}$ also biased and not $\sqrt{N}$-consistent, and the asymptotic inference invalid \citep{chernozhukov2017double,chernozhukov2018double,chernozhukov2022locally}. Hence, \citet{chernozhukov2018double} propose the DML framework to build debiased estimators that are $\sqrt{N}$-consistent and asymptotically normal under mild regularity conditions. The core argument of this framework is that the regularization bias and overfitting bias can be removed by Neyman orthogonal moment conditions and sample splitting, respectively.

Briefly put, the original moment condition $\E[m(Y,X,Z,\gamma_0) - \beta_0] = 0$ is sensitive to the disturbance in $\gamma_0$ as its Gateaux derivative with respect to $\gamma_0$ is not zero. Conversely, by adding a debiasing term to the original moment condition, the Gateaux derivative of the Neyman orthogonal moment condition vanishes, so the bias of $\widehat{\gamma}$ has no first-order influence on the debiased estimator. Consequently, the negative impact of regularization bias can be greatly reduced. 

The overfitting bias refers to that if the same data are used to construct the function estimator $\widehat{\gamma}$ and then calculate the debiased estimator for the parameter of interest, certain remainders in the asymptotic analysis for the debiased estimator will not vanish, giving rise to non-negligible biases. \citet{chernozhukov2018double} provide illustrative examples of the overfitting bias. However, if data of some individuals are used to construct $\widehat{\gamma}$, and data of different individuals are utilized to estimate the parameter of interest, then those remainders will become asymptotically negligible under the i.i.d. assumption, thereby eliminating the overfitting bias. Therefore, we first split the sample along the individual dimension into two parts, use data in the first part to build $\widehat{\gamma}$, and finally use data in the second part to construct the debiased estimator. This procedure is termed sample splitting. Nevertheless, sample splitting may result in efficiency loss because only part of the data are used in the final estimation step. Thus, in the estimation algorithm we adopt an upgraded version of sample splitting called cross-fitting, where every part of the sample will be used in the final estimation step to recover full efficiency.

When we extend the cross-sectional NPIV model to panel NPIV models like \eqref{mod1} and \eqref{mod2}, the ill-posedness problem continues to exist. Hence, machine learners with regularization remain powerful tools for the construction of good estimators for $\gamma_0$. Undoubtedly, like the cross-sectional case, plugging the ML estimator $\widehat{\gamma}$ into the panel moment condition would give an estimator of interest parameter that is inconsistent and not asymptotically normal. However, the DML framework has not been adequately studied in the panel case, and no prior work has considered the application of DML to nonparametric panel data models with two-way fixed effects in the presence of endogeneity to derive debiased estimators for functional parameters of interest. This is the problem we address below. It is noteworthy that a big challenge therein lies in the estimation of an additional function -- contained in the debiasing term for the construction of the Neyman orthogonal moment condition -- that generally has no analytical solution. Thus, drawing on existing work, we adopt an approach that relies solely on the moment conditions for the estimation of $\theta_{0t}(s)$ and does not require any information about the function itself to estimate this additional function. Next, we present the technical details for the estimation part of our framework. 

\subsection{Estimation Algorithm}
In this section, we first discuss detailed steps for the estimation of the estimand $\theta_{0t}(s)$ under the static panel model \eqref{mod1}. 

We commence with eliminating the individual fixed effect by differencing. Specifically, we subtract the model for period $t-s-1$ from the model for period $t$ to obtain  
\begin{equation}
  \label{fd}
  \Delta Y_{it} = \Delta \gamma_0(\overline{X}_{i,t-q:t},\overline{D}_{i,t-q:t},C_i) + \Delta \lambda_t + \Delta \varepsilon_{it},
\end{equation}
where $\Delta Y_{it} = Y_{it} - Y_{i,t-s-1}$, $\Delta \lambda_t = \lambda_t - \lambda_{t-s-1}$, $\Delta \varepsilon_{it} = \varepsilon_{it} - \varepsilon_{i,t-s-1}$, $\Delta \gamma_0(\overline{X}_{i,t-q:t}, \allowbreak \overline{D}_{i,t-q:t},C_i) \allowbreak = \gamma_0(\overline{X}_{i,t-q:t},\overline{D}_{i,t-q:t},C_i) - \gamma_0(\overline{X}_{i,t-q-s-1:t-s-1},\overline{D}_{i,t-q-s-1:t-s-1},C_i)$. \\Note that the difference operator $\Delta$ actually depends on $s$, but we omit $s$ for the sake of notational simplicity. 

Then, we randomly split $N$ individuals into $K$ folds $\{\mathcal{F}_1,\ldots,\mathcal{F}_K\}$ such that each individual is in one and only one fold. Next, for a $|A|$-element set $A = \{a_1,a_2,\ldots,a_{|A|}\} \subseteq \{1,\ldots,K\}$ with $a_1 < a_2 < \cdots < a_{|A|}$, we define a mapping $\pi(\cdot,A): A \mapsto A$ that selects another fold for each fold in $A$,
\[
\pi(a_j,A) = \left\{\begin{array}{ll}
  a_{j+1} & \text{if } j < |A|,\\
  a_{1} & \text{if } j = |A|.\\
\end{array}\right.
\]
For the fold $\mathcal{F}_k$, we choose another fold $\mathcal{F}_{k^\prime}$, where $k^\prime = \pi(k,\{1,\ldots,K\})$, and then subtract the cross-sectional mean of fold $\mathcal{F}_{k^\prime}$ from each term in \eqref{fd} to eliminate the time fixed effect, 
\[
  \Delta Y_{it}^* = \Delta Y_{it} - \frac{1}{|\mathcal{F}_{k^\prime}|}\sum_{j\in\mathcal{F}_{k^\prime}} \Delta Y_{jt} = \Delta \gamma_0^*(\overline{X}_{i,t-q:t},\overline{D}_{i,t-q:t},C_i) + \Delta \varepsilon_{it}^*,
\]
where $i \in \mathcal{F}_k$ (this is implicitly presumed in the following demonstration), 
\begin{gather*}
  \Delta \gamma_0^*(\overline{X}_{i,t-q:t},\overline{D}_{i,t-q:t},C_i) = \Delta \gamma_0(\overline{X}_{i,t-q:t},\overline{D}_{i,t-q:t}, C_i) \\
  - \frac{1}{|\mathcal{F}_{k^\prime}|}\sum_{j\in\mathcal{F}_{k^\prime}} \Delta \gamma_0(\overline{X}_{j,t-q:t},\overline{D}_{j,t-q:t},C_j), \text{ and }  \Delta \varepsilon_{it}^* = \Delta \varepsilon_{it} - \frac{1}{|\mathcal{F}_{k^\prime}|}\sum_{j\in\mathcal{F}_{k^\prime}} \Delta \varepsilon_{jt}.
\end{gather*}
Regarding the mean independence of the error term $\Delta \varepsilon_{it}^*$, it follows from $\E(\varepsilon_{it} \mid \overline{X}_{1it},\overline{D}_{it},\allowbreak \overline{I}_{it},C_i,\mu_i,\lambda_t) = 0$ that $\E(\Delta \varepsilon_{it} \mid \overline{X}_{1,i,t-s-1},\overline{D}_{i,t-s-1},\overline{I}_{i,t-s-1},C_i) = 0$. Therefore, by the i.i.d. assumption, we have 
\begin{equation}
  \label{eqer}
  \E(\Delta \varepsilon_{it}^* \mid \overline{X}_{1,i,t-s-1},\overline{D}_{i,t-s-1},\overline{I}_{i,t-s-1},C_i) = 0.
\end{equation}

By our definition of $\Delta \gamma^*_0$, it is easy to verify that 
\begin{equation}
  \label{eqraw}
  \theta_{0t}(s) = \E\left[\frac{\partial \gamma_0(\overline{X}_{i,t-q:t},\overline{D}_{i,t-q:t},C_i)}{\partial D_{i,t-s}}\right] = \E\left[\frac{\partial \Delta \gamma_0^*(\overline{X}_{i,t-q:t},\overline{D}_{i,t-q:t},C_i)}{\partial D_{i,t-s}}\right].
\end{equation}
Therefore, defining $W_{it}^* = (W_{jt})_{j\in\{i\}\cup\mathcal{F}_{k^\prime}}$ and $\displaystyle m(W_{it}^*,\Delta \gamma_0^*) = \frac{\partial \Delta \gamma_0^*(\overline{X}_{i,t-q:t},\overline{D}_{i,t-q:t},C_i)}{\partial D_{i,t-s}}$, if we have a nuisance function estimator $\widehat{\gamma}$ for $\gamma_0$, we can construct a plug-in estimator for $\theta_{0t}(s)$ by the sample average 
\[
\widehat{\theta}_{0t}^p(s) = \frac{1}{N}\sum_{i=1}^N m(W_{it}^*,\Delta \widehat{\gamma}^*).
\]
Nonetheless, as we mentioned before, the plug-in estimator is sensitive to the first-order bias in the estimator $\widehat{\gamma}$. If the first-order bias does not vanish at $\sqrt{N}$-rate, which is typically the case if we use ML to build $\widehat{\gamma}$, the resulting estimator $\widehat{\theta}_{0t}^p(s)$ is not consistent and the asymptotic inference is not valid as well. Consequently, to establish the asymptotic theory, we will add a debiasing term to $\widehat{\theta}_{0t}^p(s)$ to counteract the influence of the first-order bias. 

Let $\gamma(\cdot)$ denote a function of $\overline{X}_{i,t-q:t},\overline{D}_{i,t-q:t},C_i$ different from the true value $\gamma_0(\cdot)$, we observe that generally the Gateaux derivative of $\E[m(\cdot)]$ in the direction $\gamma$ is not zero, i.e., 
\[
\frac{\partial \E[m(W_{it}^*,\Delta \gamma_0^* + r\Delta \gamma^*)]}{\partial r}\bigg|_{r=0} = \E[m(W_{it}^*,\Delta\gamma^*)] \neq 0,
\]
which verifies the existence of the influence of the first-order bias. 

Define $V_{it} = (\overline{X}_{i,t-q:t},\overline{D}_{i,t-q:t},C_i)^\prime$ and $Z_{it} = (\overline{X}_{1,i,t-s-1},\overline{D}_{i,t-s-1},\overline{I}_{i,t-s-1},C_i)^\prime$. To construct orthogonal conditions, we assume there is a function $\alpha_0(Z_{it})$ with $\E[\alpha_0(Z_{it})]^2 < \infty$ such that 
\begin{equation}
  \E[\partial h(V_{it},V_{i,t-s-1})/\partial D_{i,t-s}] = \E[\alpha_0(Z_{it})h(V_{it},V_{i,t-s-1})]
  \label{eqrie0}
\end{equation}
for all $h$ such that $\E[h^2] < \infty$ and the derivative and its expectation exist. Setting $h \equiv 1$, we have $\E[\alpha_0(Z_{it})]=0$. Thus, under the i.i.d. assumption and taking $h$ as $\Delta \gamma$, \eqref{eqrie0} and our cross-fold demeaning scheme would imply 
\begin{equation}
  \label{eqrie}
  \E[m(W_{it}^*,\Delta \gamma^*)] = \E[\alpha_0(Z_{it})\Delta \gamma^*(V_{it})]
\end{equation}
for all $\Delta \gamma$ such that $\E[\Delta \gamma^*(V_{it})]^2 < \infty$ and $\E[m(W_{it}^*,\Delta \gamma^*)]$ exists.

By the Riesz representation theorem, if $\E[\partial \Delta \gamma(V_{it})/\partial D_{i,t-s}]$ is a mean square continuous functional of $\Delta \gamma$, i.e, $\E[\partial \Delta \gamma(V_{it})/\partial D_{i,t-s}] < C\sqrt{\E[\Delta \gamma(V_{it})]^2}$ for all $\Delta \gamma$, then there exists $g(V_{it},V_{i,t-s-1})$ with $\E[g(V_{it},V_{i,t-s-1})]^2 < \infty$ such that $\E[\partial \Delta \gamma(V_{it})/\partial D_{i,t-s}] = \E[g(V_{it},V_{i,t-s-1}) \allowbreak \Delta \gamma(V_{it})]$. In this case, the existence of $\alpha_0(Z_{it})$ satisfying $g(V_{it},V_{i,t-s-1}) \allowbreak = \E[\alpha_0(Z_{it}) \mid V_{it},V_{i,t-s-1}]$ would imply the existence of $\alpha_0(Z_{it})$ satisfying \eqref{eqrie}. It is noteworthy that requiring $g(V_{it},V_{i,t-s-1}) = \E[\alpha_0(Z_{it}) \mid V_{it},V_{i,t-s-1}]$ is essentially the same as Assumption 5 of \citet{ichimura2022influence}, which is a standard assumption for the construction of orthogonal conditions in the presence of endogeneity and is necessary for the existence of a $\sqrt{N}$-consistent estimator of $\theta_{0t}(s)$ as discussed in \citet{severini2012efficiency}. 

By adding the debiasing term $\alpha_0(Z_{it})[\Delta Y_{it}^* - \Delta \gamma_0^*(V_{it})]$ to the original moment condition \eqref{eqraw}, we obtain the orthogonal moment condition 
\begin{equation}
  \label{eqor}
  \E[\psi(W_{it}^*,\theta_{0t}(s),\Delta \gamma^*_0,\alpha_0)] = 0,
\end{equation}
where $\psi(W_{it}^*,\theta,\Delta \gamma^*,\alpha) = m(W_{it}^*,\Delta \gamma^*) - \theta + \alpha(Z_{it})[\Delta Y_{it}^* - \Delta \gamma^*(V_{it})]$. It follows from \eqref{eqer} that $\E[\alpha_0(Z_{it})[\Delta Y_{it}^* - \Delta \gamma^*_0(V_{it})]] = \E[\alpha_0(Z_{it}) \E[\Delta \varepsilon_{it}^* \mid Z_{it}]] = 0$, so \eqref{eqor} is a valid moment condition by means of which we can estimate $\theta_{0t}(s)$. Moreover, the Gateaux derivative of $\E[\psi(\cdot)]$ in the direction of $(\Delta \gamma^*, \alpha)$ is zero, 
\[
\begin{aligned}
  & \frac{\partial \E[\psi(W_{it}^*,\theta_{0t}(s),\Delta \gamma^*_0 + r \Delta \gamma^*,\alpha_0+r\alpha)]}{\partial r}\bigg|_{r=0} \\
  & = \E[m(W_{it}^*,\Delta \gamma^*)] - \E[\alpha_0(Z_{it})\Delta \gamma^*(V_{it})] + \E[\alpha(Z_{it})[\Delta Y_{it}^* - \Delta \gamma_0^*(V_{it})]] = 0,
\end{aligned}
\]
where the first two terms cancel each other out because of \eqref{eqrie}, and the last term equals zero because of \eqref{eqer}. Therefore, the bias of nuisance function estimators $\Delta \widehat{\gamma}_0^*$ and $\widehat{\alpha}_0$ has no first-order influence on the orthogonal moment condition, making it possible to build a debiased estimator of $\theta_{0t}(s)$ based on \eqref{eqor}.

The remaining problem is how to estimate the nuisance function $\alpha_0(Z_{it})$. As discussed in \citet{bakhitov2022automatic}, in the nonparametric instrumental variables setting, $\alpha_0(Z_{it})$ is very hard to derive or even unknown, so we had better estimate $\alpha_0(Z_{it})$ by a general and flexible approach. \citet{chernozhukov2022locally} and \citet{chernozhukov2022automatic} propose such an approach which they term ``automatic estimation'' under the cross-sectional setting with exogenous regressors, and \citet{bakhitov2022automatic} extends it to the endogenous regressors scenario. In the following, we extend their approaches to our setting. 

Note that the Gateaux derivative of $\E[\psi(\cdot)]$ in the direction of $\Delta \gamma^*$ is 
\begin{equation}
  \label{eqpop}
  \frac{\partial \E[\psi(W_{it}^*,\theta_{0t}(s),\Delta \gamma^*_0 + r \Delta \gamma^*,\alpha_0)]}{\partial r}\bigg|_{r=0} = \E[m(W_{it}^*,\Delta \gamma^*)] - \E[\alpha_0(Z_{it})\Delta \gamma^*(V_{it})] = 0.
\end{equation}
As pointed out by \citet{chernozhukov2022locally}, this can be viewed as a population moment condition for the estimation of $\alpha_0(Z_{it})$. Specifically, we first replace $\alpha_0(Z_{it})$ by a sieve and then estimate the sieve parameters using the sample analogue of \eqref{eqpop} with different choices of $\Delta \gamma^*$. Hence, we assume $\alpha_0(Z_{it})$ can be approximated by $b(Z_{it})^\prime \rho$, where $b(Z_{it}) = (b_1(Z_{it}),\ldots,b_p(Z_{it}))^\prime$ is a $p$-dimensional dictionary containing functions of $Z_{it}$, and we also prepare a $q$-dimensional dictionary $d(V_{it}) = (d_1(V_{it}),\ldots,d_q(V_{it}))^\prime$ consisting of functions of $V_{it}$. In practice, we will standardize $b(Z_{it})$ and $d(V_{it})$ to achieve better estimation performance. Then, for the $q$ functions $d_1(V_{it})$ to $d_q(V_{it})$, we have $q$ moment conditions, 
\begin{equation}
  \label{eqsam}
  \E[m(W_{it}^*, \Delta d_j^*) - \Delta d_j^*(V_{it})b(Z_{it})^\prime\rho] = 0, \quad j=1,\ldots q,
\end{equation}
where $\Delta d_j^*$ is defined in step 2(b) of Algorithm \ref{ag1}. Note that although $\Delta d_j^*$ is defined in a manner similar to $\Delta \gamma_0^*$, the fold used to subtract the cross-sectional mean is selected in a different way to ensure the data for constructing $\widehat{\alpha}_k$ are independent of the data for the estimation of $\theta_{0t}(s)$, complying with the core principle of cross-fitting. Under the condition that $q \geq p$, it is straightforward to obtain a standard GMM estimator $\widehat{\rho}$ based on $q$ moment conditions in \eqref{eqsam}. However, sometimes $\alpha_0(Z_{it})$ is high-dimensional in the sense that $p > n$. To allow for such a high-dimensional case, we may resort to the regularization technique. To this end, we follow the high-dimensional linear GMM framework of \citet{caner2018high} to construct a penalized GMM estimator for $\rho$, denoted by $\widehat{\rho}$. Then, $\widehat{\alpha}_0(Z_{it}) = b(Z_{it})^\prime \widehat{\rho}$ is our estimator for $\alpha_0(Z_{it})$ and a debiased estimator for $\theta_{0t}(s)$ can be easily built based on \eqref{eqor}. We summarize all the calculation details in the following algorithm, where the cross-fitting technique is incorporated as well to facilitate the development of the asymptotic theory. 
\begin{algorithm}[Debiased estimator of $\theta_{0t}(s)$] \label{ag1} \ 
\begin{enumerate}[leftmargin=0.5cm]
  \item Randomly split $N$ individuals $\{1,\ldots,N\}$ into $K$ folds $\{\mathcal{F}_1,\ldots,\mathcal{F}_K\}$ of about equal size such that each individual is in one and only one fold. The size of fold $\mathcal{F}_k$ is $n_k$. 
  \item For each $k \in \{1,\ldots,K\}$:
  \begin{enumerate}[leftmargin=0.5cm]
    \item Choose another fold $\mathcal{F}_{k^\prime}$, where $k^\prime = \pi(k,\{1,\ldots,K\})$, and define \\$\mathcal{F}_{k,k^\prime}^c = \{1,\ldots,N\} \backslash (\mathcal{F}_k\cup\mathcal{F}_{k^\prime})$. 
    \item Construct (machine learning) estimators $\widehat{\gamma}_k$ and $\widehat{\alpha}_k$ using data of individuals in $\mathcal{F}_{k,k^\prime}^c$. To be specific, $\widehat{\gamma}_k$ can be constructed by, for example, penalized GMM or DeepIV \citep{hartford2017deep}. Subsequently, for each $k^* \in \{1,\ldots,K\}\backslash\{k,k^\prime\}$, define $k^{*\prime} = \pi(k^*,\{1,\ldots,K\}\backslash\{k,k^\prime\})$, and for each $i \in \mathcal{F}_{k^*}$, define 
    \[
    \Delta d_j^*(V_{it}) = \Delta d_j(V_{it}) - \frac{1}{n_{k^{*\prime}}} \sum_{m\in\mathcal{F}_{k^{*\prime}}} \Delta d_j(V_{mt}), \quad j=1,\ldots,q,
    \]
    and $\Delta d^*(V_{it}) = (\Delta d_1^*(V_{it}),\ldots,\Delta d_q^*(V_{it}))^\prime$. We further define  
    \begin{gather}
      \widehat{G}_k = \frac{1}{N-n_k-n_{k^\prime}} \sum_{i\in\mathcal{F}_{k,k^\prime}^c} \Delta d^*(V_{it})b(Z_{it})^\prime, \notag \\
      \widehat{M}_k = \frac{1}{N-n_k-n_{k^\prime}}\sum_{i\in\mathcal{F}_{k,k^\prime}^c} (m(W_{it}^*,\Delta d_1^*),\ldots,m(W_{it}^*,\Delta d_q^*))^\prime, \notag \\
      \text{and } \widehat{\rho}_k = \mathop{\arg\min}_{\rho\in\mathbb{R}^p} \big(\widehat{M}_k - \widehat{G}_k\rho\big)^\prime \widehat{\Omega} \big(\widehat{M}_k - \widehat{G}_k\rho\big) + 2 r \|\rho\|_1, \label{eqop}
    \end{gather}
    where \eqref{eqop} is the penalized sample analogue of \eqref{eqsam} with the penalty term $2 r \|\rho\|_1$. Then, $\widehat{\alpha}_k$ is obtained by $\widehat{\alpha}_k = b(Z_{it})^\prime \widehat{\rho}_k$.
  \end{enumerate}
  \item Plug $\widehat{\gamma}_k$ and $\widehat{\alpha}_k$ of each fold into the sample analogue of \eqref{eqor} to construct the debiased estimator for $\theta_{0t}(s)$, i.e.,
  \[
  \widehat{\theta}_{0t}^d(s) = \frac{1}{N} \sum_{k=1}^K \sum_{i\in\mathcal{F}_k} m(W_{it}^*,\Delta \widehat{\gamma}_k^*) + \widehat{\alpha}_k(Z_{it})[\Delta Y_{it}^* - \Delta \widehat{\gamma}_k^*(V_{it})].
  \]
\end{enumerate}
\end{algorithm}

Based on $\widehat{\theta}_{0t}^d(s)$, it is straightforward to build debiased estimators for $\theta_{0t}$ and $\theta_0(0)$, as demonstrated in Appendix \ref{appa}.

Next, for the dynamic panel model \eqref{mod2}, the algorithm for the construction of $\widehat{\theta}_{0t}^d(s)$ would be very similar to the above one, and the debiased estimators for $\theta_{0t}$ and $\theta_0(0)$ are easily obtained as well. To save space, we relegate the technical details to Appendix \ref{appb}.

\section{Asymptotic Theory}
\label{sec4}
We develop the asymptotic theory for the debiased estimator $\widehat{\theta}_{0t}^d(s)$ under fixed $T$ and $N \to \infty$. \citet{chernozhukov2022locally} have developed the asymptotic theory for the debiased estimator under the cross-sectional data setting. In this section, we extend their theory to the panel data setting with two-way fixed effects, and explicitly take into account the impact of endogeneity. It can be shown that the asymptotic behavior of $\widehat{\theta}_{0t}^d(s)$ depends on the convergence rate of the function estimators $\widehat{\gamma}_k$ and $\widehat{\alpha}_k$. The desired convergence rate of $\widehat{\gamma}_k$ is not difficult to obtain as we may use common machine learners to build $\widehat{\gamma}_k$, while the fast convergence rate of many common machine learners has been widely established. The convergence rate of $\widehat{\alpha}_k$ is more challenging to establish, and we focus on this issue in what follows.

Let $\{\varepsilon_N\}$ denote a sequence that converges to zero no faster than $\sqrt{\ln(q)/N}$. We now introduce some assumptions for the mean square convergence rate of $\widehat{\alpha}_k$. In the following, for a vector $x \in \mathbb{R}^p$, let $\|x\|_1$, $\|x\|_2$, $\|x\|_\infty$ denote the $L_1$, $L_2$, $L_\infty$ norm of $x$, respectively. For a matrix $A$, we define $\|A\|_{\infty} = \max_{i,j}|A_{ij}|$. And for a random variable $X$, let $\|X\| = \sqrt{\E(X^2)}$. For an index set $J \subset \{1,\ldots,p\}$ and a vector $x \in \mathbb{R}^p$, we define $x_J = (x_i)_{i\in J}$ as the subvector of $x$ whose elements have indices in $J$. $J^c$ is the complementary set of $J$, i.e., $J^c = \{1,\ldots,p\} \backslash J$.
\begin{assumption}
  \label{asscp}
  There exists a positive definite matrix $\Omega$ such that $\|\widehat{\Omega} - \Omega\|_\infty = O_p(\varepsilon_N)$. In addition, $\lambda_{max}(\Omega) < C_\Omega < \infty$ uniformly in $N$, where $\lambda_{max}(\Omega)$ is the largest eigenvalue of $\Omega$.
\end{assumption}
\noindent This is a standard assumption in the GMM literature. In general situations, the convergence rate for $\widehat{\Omega}$ can be satisfied. The bounded largest eigenvalue assumption is also very common, and it is noteworthy that bounded largest eigenvalue would imply $\|\Omega\|_\infty = O(1)$. 

\begin{assumption}
  \label{asscs}
  There exists $C_1>0$ and $\xi > 0$ such that for each positive integer $m \leq C_1 \varepsilon_N^{-2/(2\xi+1)}$, there is $\rho$ with no more than $m$ nonzero elements such that 
  \[
  \|\alpha_0(Z_{it}) - b(Z_{it})^\prime \rho\|^2 \leq C_1 m^{-2\xi}.
  \]
\end{assumption}
\noindent This assumption controls the mean square approximation error from using the linear combination $b(Z_{it})^\prime \rho$ to approximate $\alpha_0(Z_{it})$. As commented by \citet{chernozhukov2022automatic}, it does not require $\alpha_0(Z_{it})$ to be equal to some sparse linear combination, i.e., $\alpha_0(Z_{it})$ does not have to be strictly sparse. However, it does not rule out the scenario where $\alpha_0(Z_{it})$ is exactly a sparse linear combination, which makes the assumption general enough to accommodate many settings. Moreover, there is no need to know which elements in $b(Z_{it})$ that controls the approximation error. Thus, it allows for a very high-dimensional $b(Z_{it})$ when we believe relatively few important regressors that we may fail to know their identity would give a good approximation. 

Next, note that $\widehat{\rho}_k$ in \eqref{eqop} is defined by the sample averages $\widehat{G}_k$ and $\widehat{M}_k$, so we put forward the following assumption to control the convergence rate of these sample averages. It would imply that $\widehat{G}_k$ and $\widehat{M}_k$ will converge to their expectations at a rate of $O_p(\varepsilon_N)$. 
\begin{assumption}
  \label{assbd}
  There are constants $C_b$, $C_d$, and $C_m$ such that 
  \[
  \max_{1\leq j \leq p} |b_j(Z_{it})| \leq C_b, \ \max_{1\leq j \leq q} |d_j(V_{it})| \leq C_d, \text{ and } \max_{1\leq j\leq q}\Big|\frac{\partial d_{j}(V_{it})}{\partial D_{i,t-s}}\Big| \leq C_m,
  \]
  almost surely.
\end{assumption}

Let $\lambda_{max}(B)$ denote the largest eigenvalue of $B = \E[b(Z_{it})b(Z_{it})^\prime]$, and let $\lambda_{min}(G^\prime \Omega G)$ and $\lambda_{max}(G^\prime \Omega G)$ denote the smallest and largest eigenvalue of $G^\prime \Omega G$, respectively, where $G = \E[\Delta d^*(V_{it})b(Z_{it})^\prime]$. Define a constant $C^* = (64 C_U C_\Omega q C_d^2+1) C_1$. We then introduce a sparse eigenvalue condition that is essentially the same as the sparse eigenvalue condition considered in much of the Lasso literature \citep{chernozhukov2022automatic,bickel2008simultaneous,belloni2010least,rudelson2012reconstruction}. It will be used to put a bound on the estimation error of $\widehat{\rho}_k$, hence a bound on the error of the estimator $\widehat{\alpha}_k$. 
\begin{assumption}
  \label{assei}
  $\lambda_{max}(B) < C_B < \infty$ holds uniformly in $N$. $\lambda_{min}(G^\prime \Omega G) > C_L > 0$ and $\lambda_{max}(G^\prime \Omega G) < C_U < \infty$ hold uniformly in $N$. Moreover, there exists $C_E>0$ such that with probability approaching one, for $\rho\neq 0$ and $k=1,\ldots,K$, 
  \[
  \min_{J:|J|\leq 2C^*\varepsilon_N^{-2/(2\xi+1)}}\min_{\rho:\|\rho_{J^c}\|_1 \leq 3\|\rho_J\|_1} \frac{\rho^\prime \widehat{G}_k^\prime \widehat{\Omega} \widehat{G}_k\rho}{\rho_J^\prime \rho^{}_J} \geq C_E.
  \]
\end{assumption}

Under these assumptions, we can derive the mean square convergence rate of $\widehat{\alpha}_k(Z_{it})$ as stated in the following theorem, where $F_0$ represents the true distribution of the data. 
\begin{theorem}
  \label{th1}
  If Assumptions \ref{asscp} to \ref{assei} are satisfied and $\varepsilon_N = o(r)$, then for all $k$, 
  \[
  \int [\widehat{\alpha}_k(z_{it}) - \alpha_0(z_{it})]^2 F_0(\rd z_{it}) = O_p(r^2 \varepsilon_N^{-2/(2\xi+1)}).
  \]
\end{theorem}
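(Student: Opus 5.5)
We follow the Lasso-type oracle-inequality argument of \citet{chernozhukov2022automatic}, adapted to the GMM quadratic form as in \citet{caner2018high}. Write $M = \E[(m(W_{it}^*,\Delta d_1^*),\ldots,m(W_{it}^*,\Delta d_q^*))^\prime]$ and $G = \E[\Delta d^*(V_{it}) b(Z_{it})^\prime]$.

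\textbf{Step 1: concentration.} Independence across individuals is restored by the fold-specific demeaning inside $\mathcal{F}_{k,k^\prime}^c$. Each summand of $\widehat{G}_k$ and $\widehat{M}_k$ is bounded by Assumption \ref{assbd}, since $|\Delta d_j^*| \le 4C_d$ and $|m(\cdot,\Delta d_j^*)| \le 4C_m$. A Hoeffding bound plus a union bound over $pq$ entries then gives
\[
\|\widehat{G}_k - G\|_\infty = O_p(\varepsilon_N), \qquad \|\widehat{M}_k - M\|_\infty = O_p(\varepsilon_N).
\]
Combined with Assumption \ref{asscp}, this yields $\|\widehat{G}_k^\prime\widehat{\Omega}(\widehat{M}_k - \widehat{G}_k\rho^*)\|_\infty = O_p(\varepsilon_N) + (\text{approximation term})$. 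Here $\rho^*$ is chosen as follows. Set $m^* = \lfloor C_1\varepsilon_N^{-2/(2\xi+1)}\rfloor$, and let $\rho^*$ be the $m^*$-sparse vector from Assumption \ref{asscs}, so that $\|\alpha_0 - b^\prime\rho^*\|^2 \le C_1 m^{*-2\xi} \asymp \varepsilon_N^{4\xi/(2\xi+1)}$. Two points need care:
\begin{itemize}
\item By the Riesz identity \eqref{eqrie} applied with $\Delta\gamma = d_j$, we have $M = \E[\Delta d^* \alpha_0(Z_{it})]$. Hence $M - G\rho^* = \E[\Delta d^*(\alpha_0 - b^\prime\rho^*)]$.
\item By Cauchy--Schwarz each coordinate is at most $4C_d\|\alpha_0 - b^\prime\rho^*\|$. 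So $(M - G\rho^*)^\prime\Omega(M - G\rho^*) \le 16 C_\Omega q C_d^2 \|\alpha_0 - b^\prime\rho^*\|^2$, which is the source of the constant $C^*$.
\end{itemize}

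\textbf{Step 2: basic inequality.} Compare the objective \eqref{eqop} at $\widehat{\rho}_k$ and at $\rho^*$, and set $\delta = \widehat{\rho}_k - \rho^*$. This gives
\[
\delta^\prime\widehat{G}_k^\prime\widehat{\Omega}\widehat{G}_k\delta \le 2\delta^\prime\widehat{G}_k^\prime\widehat{\Omega}(\widehat{M}_k - \widehat{G}_k\rho^*) + 2r(\|\rho^*\|_1 - \|\widehat{\rho}_k\|_1).
\]
The cross term splits into a stochastic part, bounded by $\|\delta\|_1 O_p(\varepsilon_N) \le (r/2)\|\delta\|_1$ with probability approaching one because $\varepsilon_N = o(r)$, and an approximation part. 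The approximation part is bounded by Cauchy--Schwarz in the $\widehat{\Omega}$-norm as $\tfrac12\delta^\prime\widehat{G}_k^\prime\widehat{\Omega}\widehat{G}_k\delta + O(\text{approx})$.

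\textbf{Step 3: cone condition and rate.} Follow the standard case split:
\begin{itemize}
\item If the approximation term dominates, then $\|\delta\|_1$ is controlled directly.
\item Otherwise $\delta$ lies in the cone $\|\delta_{J^c}\|_1 \le 3\|\delta_J\|_1$, with $J$ the support of $\rho^*$ enlarged so that $|J| \le 2C^*\varepsilon_N^{-2/(2\xi+1)}$. Assumption \ref{assei} then gives $\delta^\prime\widehat{G}_k^\prime\widehat{\Omega}\widehat{G}_k\delta \ge C_E\|\delta_J\|_2^2$. This yields $\|\delta_J\|_2 \lesssim r\sqrt{|J|}$ and $\|\delta\|_1 \lesssim r|J|$.
\end{itemize}
Since $\varepsilon_N^{4\xi/(2\xi+1)} \le r^2\varepsilon_N^{-2/(2\xi+1)}$ (using $\varepsilon_N \le r$), both cases give $\|\delta\|_2^2 = O_p(r^2\varepsilon_N^{-2/(2\xi+1)})$.

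\textbf{Step 4: conclusion.} Finally,
\[
\int(\widehat{\alpha}_k - \alpha_0)^2\,\rd F_0 \le 2\delta^\prime B\delta + 2\|b^\prime\rho^* - \alpha_0\|^2 \le 2C_B\|\delta\|_2^2 + O(\varepsilon_N^{4\xi/(2\xi+1)}),
\]
which is $O_p(r^2\varepsilon_N^{-2/(2\xi+1)})$. This uses the independence of $\widehat{\rho}_k$ from the evaluation point, together with $\lambda_{max}(B) < C_B$.

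\textbf{Main obstacle.} The hardest part is Step 3. Handling the approximation bias term $\delta^\prime\widehat{G}_k^\prime\widehat{\Omega}(M - G\rho^*)$ needs the population eigenvalue bounds on $G^\prime\Omega G$ to transfer between $\widehat{G}_k,\widehat{\Omega}$ and $G,\Omega$ without losing the cone structure. The plan is to absorb it via the $\widehat{\Omega}$-inner-product Cauchy--Schwarz inequality, and to bound the sample version of $(M - G\rho^*)^\prime\Omega(M - G\rho^*)$ by its population counterpart plus $O_p(\varepsilon_N)$. A second difficulty is verifying the Step 1 concentration despite the cross-fold demeaning, but the $\pi$-pairing of folds within $\mathcal{F}_{k,k^\prime}^c$ should make the summands row-wise independent.
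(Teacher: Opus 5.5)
Your route is correct in outline but differs from the paper's.

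The paper never compares $\widehat\rho_k$ with the sparse approximant itself. It introduces the population GMM coefficient $\overline\rho$ and a population-Lasso target $\rho_\star$, which minimizes the $G'\Omega G$-distance to $\overline\rho$ plus the penalty $2\varepsilon_N\sum_{j\in J_0^c}|v_j|$. The KKT conditions for $\rho_\star$ give $\|G'\Omega(M-G\rho_\star)\|_\infty=\|G'\Omega G(\overline\rho-\rho_\star)\|_\infty\le\varepsilon_N$. So the whole empirical score at $\rho_\star$ is $O_p(\varepsilon_N)=o_p(r)$, and the cone condition follows in one line, with no approximation term and no case split. The price is two extra lemmas: Lemma~\ref{lemma9} shows $\rho_\star$ is sparse (this is where $C_U$ and the form of $C^*$ come from), and Lemma~\ref{lemma3} shows $b'\rho_\star$ is close to $\alpha_0$ (via $C_L$ and $C_B$).

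You instead take the $m^*$-sparse approximant as comparator. Its sparsity is then free, you only need $C^*\ge C_1$ rather than $C_U$, and your Step 4 controls the bias directly through Assumption~\ref{asscs}. In exchange you must carry the term $2\delta'\widehat G_k'\widehat\Omega(M-G\rho^*)$. Its sup-norm is in general only $O(\varepsilon_N^{2\xi/(2\xi+1)})$, which need not be $o(r)$, so it cannot be absorbed into the $r\|\delta\|_1$ budget. Your fix, Cauchy--Schwarz in the $\widehat\Omega$-metric followed by a case split, does work. Set $A=2(M-G\rho^*)'\widehat\Omega(M-G\rho^*)$; then $\tfrac12\delta'\widehat G_k'\widehat\Omega\widehat G_k\delta+\tfrac{3r}{2}\|\delta_{J_0^c}\|_1\le\tfrac{5r}{2}\|\delta_{J_0}\|_1+A$. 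Either $A\le\tfrac r2\|\delta_{J_0}\|_1$, and $\delta$ lies in the cone with constant $2\le 3$. Or $\|\delta\|_2^2\le\|\delta\|_1^2\le 36A^2/r^2=O_p(\varepsilon_N^{8\xi/(2\xi+1)}r^{-2})$, which is $(\varepsilon_N/r)^4$ times the target rate.

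A few steps still need tightening.
\begin{enumerate}
\item The stochastic part of the score contains $(\widehat G_k-G)\rho^*$, so you need $\|\rho^*\|_1=O(1)$, which you never show. The argument leading to \eqref{eqrs} transfers: $G\rho^*=M-(M-G\rho^*)$ gives $\rho^{*\prime}G'\Omega G\rho^*=O(1)$, hence $\|\rho^*\|_2=O(1)$ by $C_L$.
\item $A$ differs from its population version by at most $\|\widehat\Omega-\Omega\|_\infty\|M-G\rho^*\|_1^2$, not by an additive $O_p(\varepsilon_N)$. An additive $O_p(\varepsilon_N)$ would be too large in your second case when $r$ is close to $\varepsilon_N$.
\item In the cone case, $\|\delta\|_1\lesssim r|J_0|$ alone only gives $\|\delta\|_2^2\lesssim r^2|J_0|^2$. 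You need the enlargement $\widetilde J=J_0\cup N$, with $N$ the $|J_0|$ largest coordinates of $\delta_{J_0^c}$, together with Lemma~\ref{lemmabuh}, exactly as in Lemma~\ref{lemma10}.
\item The demeaned summands of $\widehat G_k$ are not independent, since every unit in $\mathcal F_{k^*}$ shares the mean of $\mathcal F_{k^{*\prime}}$. The concentration bound still holds, either via the decomposition \eqref{eqdg} into an i.i.d.\ average minus a product of two fold means, or by conditioning on $\mathcal F_{k^{*\prime}}$.
\end{enumerate}
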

\citet{chernozhukov2022automatic} develop an analogous result under the cross-sectional setting with exogenous regressors, and \citet{bakhitov2022automatic} develops the convergence rate under the cross-sectional setting in the presence of endogeneity. This theorem extends their results to the panel data setting with two-way fixed effects and endogeneity. It should be noted that we employ conditions that differ from those in \citet{bakhitov2022automatic} yet remain mild, thereby avoiding the issue of a relatively slow convergence rate in a wide range of situations and obtaining a rate identical to that under the cross-sectional setting with exogenous regressors.

Subsequently, we turn to the asymptotic theory of the debiased estimator $\widehat{\theta}_{0t}^d(s)$. We give conditions under which $\sqrt{N}$-consistency and asymptotic normality of $\widehat{\theta}_{0t}^d(s)$ can be established, thereby enabling us to conduct statistical inference in a standard way. 

We first introduce the following boundedness assumption. 
\begin{assumption}
  \label{assa}
  $|\alpha_0(Z_{it})| \leq C_\alpha < \infty$ and $\E[(\Delta \varepsilon_{it})^2 \mid Z_{it}] \leq C_z < \infty$ almost surely. In addition, $\displaystyle \E\Big[\frac{\partial \gamma_0(V_{it})}{\partial D_{i,t-s}}\Big]^2 < \infty$. 
\end{assumption}
\noindent This is a purely technical assumption that ensures certain mean-squared remainders vanish and the variance of the orthogonal moment condition exists so that we may apply the CLT. 

Next, we impose some mild mean square consistency conditions for the function estimators $\widehat{\gamma}_k$ and $\widehat{\alpha}_k$. 
\begin{assumption}
  \label{assm}
  For all $k$, $\displaystyle \int \bigg[\frac{\partial \widehat{\gamma}_k(v_{it})}{\partial D_{i,t-s}} - \frac{\partial \gamma_0(v_{it})}{\partial D_{i,t-s}}\bigg]^2 F_0(\rd v_{it}) \pto 0$, \\ $\int [\Delta \widehat{\gamma}_k(v_{it}) - \Delta \gamma_0(v_{it})]^2 F_0(\rd \overline{v}_{i,t-1:t}) \pto 0$, and $\int [\widehat{\alpha}_k(z_{it}) - \alpha_0(z_{it})]^2 F_0(\rd z_{it}) \pto 0$.
\end{assumption}
\noindent First, this assumption requires that when $\widehat{\gamma}_k$ is plugged into the original moment condition, it will converge in mean square to its true value. Moreover, the differenced estimator $\Delta \widehat{\gamma}_k$ is mean square consistent. Despite the existence of endogeneity, these conditions are easily satisfied, as they allow for the ML estimator to converge at a very slow rate. By Theorem \ref{th1}, we only require $r^2 \varepsilon_N^{-2/(2\xi+1)} \to 0$ for the consistency of $\widehat{\alpha}_k$. 

\begin{assumption}
  \label{asscr}
  Let $F_{0,V_{i,t-1:t} \mid z_{it}}$ denote the conditional distribution of $V_{i,t-1:t}$ given $Z_{it} = z_{it}$, for all $k$,
  \[
  \begin{aligned}
    & \sqrt{N} \bigg\{\int [\widehat{\alpha}_k(z_{it}) - \alpha_0(z_{it})]^2 F_0(\rd z_{it})\bigg\}^{1/2} \\
    & \hspace{2cm} \bigg\{\int \Big\{\int [\Delta \widehat{\gamma}_k(v_{it}) - \Delta \gamma_0(v_{it})]F_{0,V_{i,t-1:t} \mid z_{it}}(\rd v_{i,t-1:t}) \Big\}^2 F_0(\rd z_{it})\bigg\}^{1/2} \pto 0.
  \end{aligned}
  \]
\end{assumption}
\noindent The above expression is equivalent to $\sqrt{N}\|\widehat{\alpha}_k(Z_{it})-\alpha_0(Z_{it})\|\|\E[\Delta \widehat{\gamma}_k(V_{it}) - \Delta \gamma_0(V_{it}) \mid Z_{it}]\| \allowbreak \pto 0$, which requires that the product of the convergence rates of $\|\widehat{\alpha}_k(Z_{it})-\alpha_0(Z_{it})\|$ and $\|\E[\Delta \widehat{\gamma}_k(V_{it}) - \Delta \gamma_0(V_{it}) \mid Z_{it}]\|$ should be faster than $\sqrt{N}$. For $\widehat{\alpha}_k(Z_{it})$, we use the mean square norm and its fast convergence rate is easy to achieve. However, for $\Delta \widehat{\gamma}_k$, we use the projected norm (i.e., the norm of the conditional expectation) rather than the mean square norm. This is because, as mentioned in section \ref{sec31}, the ill-posedness problem will result in very slow convergence rate of the mean square norm. However, it is easy to obtain fast convergence rate of the projected norm \citep[e.g.,][]{blundell2007semi,chen2012estimation,dikkala2020minimax}, as it avoids the ill-posedness. Hence, generally only assumptions like Assumption \ref{asscr} regarding the convergence rate of the projected norm can be satisfied in practice. 
\begin{theorem}
  \label{th2}
  If Assumptions \ref{asscp} to \ref{asscr} are satisfied and $\varepsilon_N = o(r)$, 
  \[
  \sqrt{N} \Big(\widehat{\theta}_{0t}^d(s) - \theta_{0t}(s)\Big) \dto N(0,\Psi),
  \]
  where $\displaystyle \Psi = \E\Big(\frac{\partial \gamma_0(V_{it})}{\partial D_{i,t-s}} - \theta_{0t}(s) + \big[\alpha_0(Z_{it}) - \E[\alpha_0(Z_{it})]\big]\Delta \varepsilon_{it}\Big)^2$. 
\end{theorem}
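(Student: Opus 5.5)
The plan is to follow the structure of the DML asymptotic argument of \citet{chernozhukov2022locally}, adapted to the cross-fold demeaning. The estimator is decomposed into an oracle term and several remainders, and each remainder is shown to be $o_p(N^{-1/2})$. Fix a fold $k$, write $\widehat{\gamma}=\widehat{\gamma}_k$, $\widehat{\alpha}=\widehat{\alpha}_k$, and condition throughout on the data in $\mathcal{F}_{k,k^\prime}^c$. For $i\in\mathcal{F}_k$ the summands then depend only on $\{i\}\cup\mathcal{F}_{k^\prime}$, which is independent of the estimation sample. The decomposition is
\[
\psi(W_{it}^*,\theta_{0t}(s),\Delta\widehat{\gamma}^*,\widehat{\alpha})-\psi(W_{it}^*,\theta_{0t}(s),\Delta\gamma_0^*,\alpha_0)=R_{1i}+R_{2i}+R_{3i},
\]
with the three pieces
\[
R_{1i}=m(W_{it}^*,\Delta\widehat{\gamma}^*-\Delta\gamma_0^*)-\alpha_0(Z_{it})[\Delta\widehat{\gamma}^*-\Delta\gamma_0^*](V_{it}),
\]
\[
R_{2i}=[\widehat{\alpha}-\alpha_0](Z_{it})\,\Delta\varepsilon_{it}^*,
\qquad
R_{3i}=-[\widehat{\alpha}-\alpha_0](Z_{it})[\Delta\widehat{\gamma}^*-\Delta\gamma_0^*](V_{it}).
\]
Consistency of $\widehat{\alpha}$ is available from Theorem \ref{th1} together with Assumption \ref{assm}. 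The condition $\varepsilon_N=o(r)$ is used only so that Theorem \ref{th1} applies.

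\textbf{Remainder terms.} For $R_1$ and $R_2$, their conditional means are zero: for $R_1$ by \eqref{eqrie} applied with the fixed direction $\Delta\widehat{\gamma}-\Delta\gamma_0$, and for $R_2$ by \eqref{eqer}. The conditional second moments are $o_p(1)$. For $R_1$ this follows from Assumption \ref{assm} (mean-square consistency of the derivative and of $\Delta\widehat{\gamma}$) together with $|\alpha_0|\le C_\alpha$. For $R_2$ it follows from $\E[(\Delta\varepsilon_{it})^2\mid Z_{it}]\le C_z$ and $\|\widehat{\alpha}-\alpha_0\|\to_p0$. The complication is that the summands over $i\in\mathcal{F}_k$ share the demeaning fold $\mathcal{F}_{k^\prime}$, so they are not independent. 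I would therefore split each summand into an individual-$i$ part and a fold-mean part. The variance of $n_k^{-1}\sum_{i}$ of the individual part is $O(1/N)\cdot o_p(1)$. The fold-mean part factors as a product of two independent averages, one over $\mathcal{F}_k$ and one over $\mathcal{F}_{k^\prime}$, each centered or of bounded moment, so it is $O_p(N^{-1})\cdot o_p(1)$ or better. Chebyshev's inequality conditional on the estimation sample then gives $\sum_{i\in\mathcal{F}_k}R_{ji}/N=o_p(N^{-1/2})$ for $j=1,2$. For $R_3$, I would bound the conditional mean by iterated expectations on $Z_{it}$ and Cauchy--Schwarz, which gives $\|\widehat{\alpha}-\alpha_0\|\,\|\E[\Delta\widehat{\gamma}-\Delta\gamma_0\mid Z_{it}]\|$ (the demeaned cross terms again factor). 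This is $o_p(N^{-1/2})$ by Assumption \ref{asscr}. The centered part of $R_3$ has vanishing variance using boundedness of $\widehat{\alpha}$ (from Assumption \ref{assbd} and $\|\widehat{\rho}\|_1=O_p(1)$ in the proof of Theorem \ref{th1}) and Assumption \ref{assm}.

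\textbf{Oracle term and CLT.} It remains to find the limit of $N^{-1/2}\sum_k\sum_{i\in\mathcal{F}_k}\psi(W_{it}^*,\theta_{0t}(s),\Delta\gamma_0^*,\alpha_0)$. This sum is not i.i.d. because of the demeaning, so I would expand the fold means. Expanding $m(W_{it}^*,\Delta\gamma_0^*)$, individual $j$ enters as $\partial\gamma_0(V_{jt})/\partial D_{j,t-s}$ from its own fold. It also enters as minus the derivative of $\gamma_0$ evaluated at the other unit's data, but that term vanishes because $\gamma_0$ at unit $j$ does not depend on $D_{i,t-s}$. The debiasing piece contributes $\alpha_0(Z_{jt})\Delta\varepsilon_{jt}$ from its own fold. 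In addition, since $\mathcal{F}_{k}$ demeans with $\mathcal{F}_{k^\prime}=\pi(k)$ and each fold is the demeaner of exactly one fold, $j$'s error appears as $-\Delta\varepsilon_{jt}\cdot n_{k}^{-1}\sum_{i\in\mathcal{F}_k}\alpha_0(Z_{it})$. By the LLN this equals $-\E[\alpha_0]\Delta\varepsilon_{jt}+o_p(1)\Delta\varepsilon_{jt}$, and with equal fold sizes the $o_p(1)$ error is negligible after the $N^{-1/2}$ scaling. The oracle sum is therefore $N^{-1/2}\sum_j\{\partial\gamma_0(V_{jt})/\partial D_{j,t-s}-\theta_{0t}(s)+[\alpha_0(Z_{jt})-\E\alpha_0(Z_{jt})]\Delta\varepsilon_{jt}\}+o_p(1)$. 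This is an i.i.d. mean-zero sum with finite variance $\Psi$ by Assumption \ref{assa}, so the Lindeberg--L\'evy CLT applies. The $\E[\alpha_0]$ correction is formally zero by \eqref{eqrie0}, but I would keep it as in the statement.

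\textbf{Main obstacle.} I expect the hardest part to be the dependence bookkeeping created by the demeaning fold. This enters both in the variance bounds for $R_1$--$R_3$, where $\mathcal{F}_{k^\prime}$ is shared across all $i\in\mathcal{F}_k$, and in the oracle expansion. My first approach would be to write every term as a U-statistic-type double sum over $\mathcal{F}_k\times\mathcal{F}_{k^\prime}$ and apply a Hoeffding decomposition, which makes the degenerate parts $O_p(N^{-1})$.
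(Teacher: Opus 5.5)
Your proposal is correct and is essentially the paper's proof: your $R_1$ is the paper's $\widehat{R}_{1ki}+\widehat{R}_{2ki}$ (conditionally mean zero given $\wc$ by \eqref{eqrie}), your $R_2$ and $R_3$ are its $\widehat{R}_{3ki}$ and $\widehat{\delta}_i$, the shared-demeaning-fold dependence is controlled through conditional second moments (the paper computes the within-fold covariances such as $\Cov(\widehat{R}_{2ki},\widehat{R}_{2kj}\mid\wc)$ directly, which is equivalent to your splitting off a product of independent fold averages), the product term uses Assumption \ref{asscr} together with $\|\widehat{\rho}_k\|_1=O_p(1)$, and the oracle sum is regrouped exactly as in the paper to produce the $-\E[\alpha_0(Z_{it})]\Delta\varepsilon_{it}$ correction before the CLT. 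Two cosmetic points: the demeaning weight on unit $j$'s error is $n_{k^\prime}^{-1}$ rather than $n_k^{-1}$, and $\varepsilon_N=o(r)$ enters through $\|\widehat{\rho}_k\|_1=O_p(1)$ (the paper's Lemma \ref{lemma13}) rather than through Theorem \ref{th1}, because consistency of $\widehat{\alpha}_k$ is already imposed directly in Assumption \ref{assm}.
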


We also provide a consistent estimator for the asymptotic variance.
\begin{theorem}
  \label{prop1}
  If Assumptions \ref{asscp} to \ref{asscr} are satisfied and $\varepsilon_N = o(r)$, 
  \[
  \widehat{\Psi} = \frac{1}{N} \sum_{k=1}^K \sum_{i\in\mathcal{F}_k} \bigg\{\frac{\partial \widehat{\gamma}_k(V_{it})}{\partial D_{i,t-s}} - \widehat{\theta}_{0t}^d(s) + \big[\widehat{\alpha}_k(Z_{it}) - \frac{1}{n_k}\sum_{j\in\mathcal{F}_k}\widehat{\alpha}_k(Z_{jt})\big][\Delta Y_{it}^* - \Delta \widehat{\gamma}_k^*(V_{it})]\bigg\}^2
  \]
  satisfies $\widehat{\Psi} \pto \Psi$.
\end{theorem}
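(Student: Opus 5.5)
The plan is to show that $\widehat{\Psi}$ is, up to $o_p(1)$, the sample average of the squared \emph{oracle} score
\[
\psi_{it} = \frac{\partial \gamma_0(V_{it})}{\partial D_{i,t-s}} - \theta_{0t}(s) + \big[\alpha_0(Z_{it}) - \E\alpha_0(Z_{it})\big]\Delta \varepsilon_{it},
\]
whose sample mean tends to $\Psi$ by the law of large numbers. First, $\E\alpha_0(Z_{it})=0$ (take $h\equiv 1$ in \eqref{eqrie0}). Next, write the estimated score for $i\in\mathcal{F}_k$ as
\[
\widehat{\psi}_{it} = \frac{\partial \widehat{\gamma}_k(V_{it})}{\partial D_{i,t-s}} - \widehat{\theta}_{0t}^d(s) + \big[\widehat{\alpha}_k(Z_{it}) - \bar{\widehat{\alpha}}_k\big]\big[\Delta Y_{it}^* - \Delta\widehat{\gamma}_k^*(V_{it})\big],
\]
where $\bar{\widehat{\alpha}}_k$ is the fold mean of $\widehat{\alpha}_k(Z_{jt})$. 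We use the elementary inequality
\[
\Big|\frac1N\sum \widehat{\psi}_{it}^2 - \frac1N\sum\psi_{it}^2\Big| \le \frac1N\sum(\widehat{\psi}_{it}-\psi_{it})^2 + 2\Big(\frac1N\sum\psi_{it}^2\Big)^{1/2}\Big(\frac1N\sum(\widehat{\psi}_{it}-\psi_{it})^2\Big)^{1/2}.
\]
The term $\frac1N\sum\psi_{it}^2$ is $O_p(1)$ because $\Psi<\infty$ by Assumption \ref{assa}. So it suffices to show $\frac1N\sum(\widehat{\psi}_{it}-\psi_{it})^2 \pto 0$. Since $K$ is fixed, we can work fold by fold.

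Within fold $k$, decompose $\widehat{\psi}_{it}-\psi_{it}$ into five pieces:
\begin{enumerate}[label=(\roman*)]
\item $\partial\widehat{\gamma}_k/\partial D_{i,t-s} - \partial\gamma_0/\partial D_{i,t-s}$;
\item $\theta_{0t}(s) - \widehat{\theta}_{0t}^d(s)$;
\item $(\widehat{\alpha}_k(Z_{it})-\alpha_0(Z_{it}))\,\Delta\widehat{\varepsilon}_{it}^*$, where $\Delta\widehat{\varepsilon}_{it}^* = \Delta Y_{it}^* - \Delta\widehat{\gamma}_k^*(V_{it})$;
\item $\alpha_0(Z_{it})\big(\Delta\widehat{\varepsilon}_{it}^* - \Delta\varepsilon_{it}\big)$;
\item $-\bar{\widehat{\alpha}}_k\,\Delta\widehat{\varepsilon}_{it}^*$.
\end{enumerate}
By $(a_1+\dots+a_5)^2\le 5\sum a_j^2$, each piece can be handled separately.

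\textbf{Pieces (i) and (ii).} Piece (ii) is $o_p(1)$ by Theorem \ref{th2}. For piece (i), condition on the complement data $\mathcal{F}^c_{k,k'}$, which determine $\widehat{\gamma}_k$. Cross-fitting and the i.i.d. assumption make the fold-$k$ observations independent of these data, so the conditional expectation of the fold average equals $\int[\partial\widehat{\gamma}_k-\partial\gamma_0]^2\,dF_0$. This is $o_p(1)$ by Assumption \ref{assm}, and the conditional Markov inequality transfers it to the sample average.

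\textbf{Piece (iv).} We have $\Delta\widehat{\varepsilon}_{it}^*-\Delta\varepsilon_{it} = -(\Delta\widehat{\gamma}_k-\Delta\gamma_0)^*(V_{it}) - \bar{\Delta\varepsilon}_{k',t}$, where the bar denotes the fold-$k'$ mean. The mean term is $O_p(N^{-1/2})$ by Assumption \ref{assa} and the CLT. For the demeaned difference, again condition on $\mathcal{F}^c_{k,k'}$. Both fold $k$ and the demeaning fold $k'$ are independent of $\widehat{\gamma}_k$, and $|\alpha_0|\le C_\alpha$. The average of squares is then bounded by a constant times $\int(\Delta\widehat{\gamma}_k-\Delta\gamma_0)^2\,dF_0$ plus the square of a fold-$k'$ mean of the same quantity. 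Both are $o_p(1)$ by Assumption \ref{assm}.

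\textbf{Piece (v).} Since $\E\alpha_0=0$, we have $|\bar{\widehat{\alpha}}_k|\le |\overline{\widehat{\alpha}_k-\alpha_0}| + |\bar\alpha_{0}| = o_p(1)$, using Theorem \ref{th1} (or Assumption \ref{assm}) and the LLN. Also $\frac1N\sum(\Delta\widehat{\varepsilon}_{it}^*)^2=O_p(1)$ by the bounds for piece (iv) together with $\E[(\Delta\varepsilon_{it})^2\mid Z]\le C_z$. Hence piece (v) is $o_p(1)$.

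\textbf{Main obstacle: piece (iii).} This is the product $(\widehat{\alpha}_k-\alpha_0)^2(\Delta\widehat{\varepsilon}^*_{it})^2$. Here only an $L_2$ rate is available for $\widehat{\alpha}_k$, and $\Delta\widehat{\varepsilon}^*$ is not bounded. First split $(\Delta\widehat{\varepsilon}^*)^2 \le 2(\Delta\varepsilon_{it}^*)^2 + 2[(\Delta\widehat{\gamma}_k-\Delta\gamma_0)^*]^2$. For the error part, conditioning on the complement data and then on $Z_{it}$ gives $\E[(\widehat{\alpha}_k-\alpha_0)^2\,\E(\Delta\varepsilon^2\mid Z)] \le C_z\|\widehat{\alpha}_k-\alpha_0\|^2 \to 0$. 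The demeaning-fold contribution is handled as in piece (iv). For the $\gamma$ part, $\widehat{\alpha}_k-\alpha_0$ is uniformly bounded: $|b_j|\le C_b$ by Assumption \ref{assbd}, $|\alpha_0|\le C_\alpha$, and $\|\widehat{\rho}_k\|_1=O_p(1)$, which I would extract from the Lasso-type argument behind Theorem \ref{th1} (the cone condition together with sparsity $\varepsilon_N^{-2/(2\xi+1)}$). With that bound, the $\gamma$ part is at most $O_p(1)$ times the quantity already controlled in piece (iv). If the uniform bound on $\widehat{\rho}_k$ fails, the fallback is to trim $\widehat{\alpha}_k$ at a level $C_\alpha+1$. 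This changes nothing asymptotically, because trimming only reduces the $L_2$ error toward $\alpha_0$.

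Combining the five pieces gives $\frac1N\sum(\widehat{\psi}_{it}-\psi_{it})^2=o_p(1)$. Together with $\frac1N\sum\psi_{it}^2\pto\Psi$, this yields $\widehat{\Psi}\pto\Psi$.
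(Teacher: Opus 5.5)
Your proof is correct and is essentially the paper's: it makes the same empirical-$L_2$ comparison of $\widehat{\psi}_i$ with an oracle score, kills each remainder by conditioning on the cross-fitting complement and applying Assumption \ref{assm} with the conditional Markov inequality, controls the $\widehat{\alpha}_k$-times-$\gamma$-error product through $\|\widehat{\rho}_k\|_1=O_p(1)$, uses Theorem \ref{th2} for $\widehat{\theta}_{0t}^d(s)$, and handles the cross term by Cauchy--Schwarz; the only difference is bookkeeping, since you compare directly with the non-demeaned score using $\E[\alpha_0(Z_{it})]=0$, while the paper passes through the demeaned score $\widetilde{\psi}_i$ and then shows $\frac{1}{N}\sum_i\widetilde{\psi}_i^2=\frac{1}{N}\sum_i\psi_i^2+o_p(1)$. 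Two caveats: the bound $\|\widehat{\rho}_k\|_1=O_p(1)$ is the paper's Lemma \ref{lemma13}, which comes from the basic inequality \eqref{eqb4} ($\|\widehat{\rho}_k\|_1\lesssim\|\rho_\star\|_1$) rather than from the cone condition and sparsity (these alone give only $\|\widehat{\rho}_k-\rho_\star\|_1=O_p(r\varepsilon_N^{-2/(2\xi+1)})$), and your trimming fallback would not prove the stated result, because it changes the statistic $\widehat{\Psi}$ itself.
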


The asymptotic theory for estimators $\widehat{\theta}_{0t}$ and $\widehat{\theta}_0(0)$ is discussed in Appendix \ref{appa}.

\section{Simulations}
\label{sec5}
We conduct a Monte Carlo study to demonstrate the performance of our debiased estimator. 

First, we introduce the data generating process. We consider three choices for the number of individuals $N=1000,2500,5000$ and three choices for the number of time-varying covariates $L=2,5,10$. The number of time periods is $T=10$. The structural function $\gamma_0(V_{it})$ includes nonlinearities, interactions, and correlations of treatment effects over time. Specifically, the outcome is generated by 
\[
Y_{it} = D_{it} + D_{it}^2 + D_{it}X_{it1} + 0.5D_{i,t-1} + 0.5D_{i,t-1}^2 + 0.5 D_{i,t-1}X_{i,t-1,1} + X_{it}^\prime \xi + \alpha_i + \lambda_t + \varepsilon_{it},
\]
where $\xi = (1^{-2},2^{-2},\ldots,L^{-2})^\prime$, $\{\alpha_i\}_{i=1}^N \stackrel{i.i.d.}{\sim} N(0,1)$, $\{\lambda_t\}_{t=1}^T \stackrel{i.i.d.}{\sim} N(0,1)$, $\{\varepsilon_{it}\}_{i=1,t=1}^{N,T} \stackrel{i.i.d.}{\sim} N(0,1)$. $\{\varepsilon^X_{itl}\}_{i=1,t=1,l=1}^{N,T,L} \stackrel{i.i.d.}{\sim} N(\alpha_i,1)$, $X_{itl}=\varepsilon^X_{itl} + 0.5\varepsilon_{i,t-1}$, $\forall i,t,l$. $\{\varepsilon_{it}^D\}_{i=1,t=1}^{N,T} \stackrel{i.i.d.}{\sim} N(1,1)$, $D_{it} = X_{it}^\prime \xi + \varepsilon_{it}^D$, $\forall i,t$, and $\varepsilon_{i0} = 0$.

The three parameters of interest are $\theta_{0T}(0)$, $\theta_{0T}(1)$, and $\theta_{0T} = 0.5\theta_{0T}(0)+0.5\theta_{0T}(1)$, whose true value is 3, 1.5, and 2.25, respectively.

To save space, this section focuses on the estimation of $\theta_{0T}(0)$ with $N=2500$ and $L=5$ under the above data generating process. Simulation results for other estimators, alternative specifications, and a different data generating process are provided in Appendix \ref{appc}. To implement the DML estimator $\widehat{\theta}_{0T}^d(0)$, we use penalized GMM to estimate $\gamma_0(V_{iT})$, where the regressors include $\{D_{iT}^j,D_{i,T-1}^j,X_{iT}^j\}_{j=1}^3$ and $\{D_{iT}^j X_{iT}^k, D_{i,T-1}^j X_{i,T-1}^k\}_{j=1,k=1}^{2,2}$, and the instruments include $\{D_{i,T-1}^j,D_{i,T-2}^j,X_{i,T-1}^j,X_{i,T-2}^j\}_{j=1}^3$ and $\{D_{i,T-1}^j X_{i,T-1}^k, \\D_{i,T-2}^j X_{i,T-2}^k\}_{j=1,k=1}^{2,2}$. To estimate $\alpha_0(Z_{iT})$, we set the dictionary $b(Z_{iT})$ to be $\{D_{i,T-1}^j, \\ D_{i,T-2}^j,X_{i,T-1}^j,X_{i,T-2}^j, D_{i,T-1}^j X_{i,T-1}^k, D_{i,T-2}^j X_{i,T-2}^k\}_{j=1,k=1}^{3,3}$ with the intercept, and set the dictionary $d(V_{iT})$ to be polynomials with full interactions of $\{D_{iT},D_{i,T-1},X_{iT},X_{i,T-1}\}$ of degree no greater than three. The weighting matrices in the above penalized GMM are set to the identity matrix. For the cross-fitting, we set the number of folds to $K=5$. Moreover, all the hyperparameters are tuned by cross validation. 

We compare our debiased estimator $\widehat{\theta}_{0t}^d(s)$ using penalized GMM for the estimation of $\gamma_0(V_{iT})$, which we call DPGMM, with the plug-in estimator $\widehat{\theta}_{0t}^p(s)$ using penalized GMM for the estimation of $\gamma_0(V_{iT})$ as well, which we call PGMM. To demonstrate the robustness of the proposed algorithm, we make two more comparisons. First, using GMM with untransformed variables without higher-order polynomials and interactions for the estimation of $\gamma_0(V_{iT})$ while keeping everything else unchanged in Algorithm \ref{ag1}, we compare the debiased estimator, which we call DGMM, with the plug-in estimator, which we call GMM. This is to test the debiasing ability of the debiased estimator under severely misspecified first-stage model (since higher-order terms are dropped). Second, using Lasso with transformed variables for the estimation of $\gamma_0(V_{iT})$ while keeping everything else unchanged in Algorithm \ref{ag1}, we compare the debiased estimator, which we call DLasso, with the plug-in estimator, which we call Lasso. This is to test the debiasing ability of the debiased estimator when endogeneity is wrongly ignored (since Lasso assumes that the errors are exogenous). 

The simulation results on 500 simulated samples are reported in Table \ref{tb1}. We can observe that our estimator (DPGMM) has the best performance. Compared with the non-debiased estimator PGMM, our estimator removes nearly all of the bias and has lower standard deviation, leading to lower MSE and significantly improved coverage of confidence intervals. This effectively demonstrates the debiasing and variance reducing power of our algorithm. Moreover, the results of GMM and Lasso indicate that when the model is misspecified or when the endogeneity is ignored, the estimates would be severely biased. However, the debiasing term in DGMM and DLasso can still markedly reduce the bias of the final estimate in these cases. In addition, our variance estimator can provide a relatively accurate estimate of the true variance as well. It can be seen that when not debiased, the variance estimator will significantly underestimate the variance, which results in poor coverage that is far from the nominal level. 

\begin{table}[ht]
  \setlength{\tabcolsep}{12pt}
  \centering
  \caption{Comparison of the performance of DML and other estimators}
  \label{tb1}
  \begin{threeparttable}
  \begin{tabular}{ccccccc}
    \toprule
    & DPGMM & PGMM & DGMM & GMM & DLasso & Lasso \\
    \midrule
    True Value & 3 & 3 & 3 & 3 & 3 & 3 \\
    Estimate & 2.9941 & 2.9226 & 3.1446 & 3.7623 & 3.0081 & 2.8974 \\
    Bias & 0.0059 & 0.0774 & 0.1446 & 0.7623 & 0.0081 & 0.1026 \\
    Std. Dev. & 0.0704 & 0.0832 & 0.0991 & 0.1044 & 0.0738 & 0.0614 \\
    MSE & 0.0050 & 0.0129 & 0.0307 & 0.5920 & 0.0055 & 0.0143 \\
    Est. S.D. & 0.0721 & 0.0570 & 0.0806 & 0.0366 & 0.0671 & 0.0534 \\
    Coverage & 0.948 & 0.66 & 0.53 & 0 & 0.922 & 0.522 \\
    \bottomrule
  \end{tabular}
  \begin{tablenotes}[flushleft]
    \footnotesize
    \item Note: The results are based on 500 simulated samples. ``Estimate'' represents the average of 500 estimates, ``Bias'' is the absolute difference between the average of the 500 estimates and the true value. ``Std. Dev.'' is the standard deviation of 500 estimates, ``MSE'' is the average of the squared errors of 500 estimates. ``Est. S.D.'' is the average of the 500 estimates of the asymptotic standard deviation. For DPGMM, the asymptotic variance estimator is given in Theorem \ref{prop1}. For DGMM and DLasso, asymptotic variance estimators can be similarly constructed. For PGMM, GMM, and Lasso, asymptotic variance estimators are all $\sum_{i=1}^N [\partial \widehat{\gamma}_k(V_{it})/\partial D_{it} - \widehat{\theta}_{0T}^p(0)]^2/N$. ``Coverage'' is the proportion of 500 95\% confidence intervals (based on the asymptotic variance estimator) that contain the true value. 
  \end{tablenotes}
  \end{threeparttable}    
\end{table}

We also plot the distributions of the estimation errors in Figure \ref{fg1}. Compared with the undebiased methods, the distributions of the estimation errors of the debiased methods are closer to 0 and also narrower. Furthermore, they are very close to the normal distribution, which provides evidence for the correctness of the established asymptotic theory.

\begin{figure}[ht]
  \includegraphics[width=\textwidth]{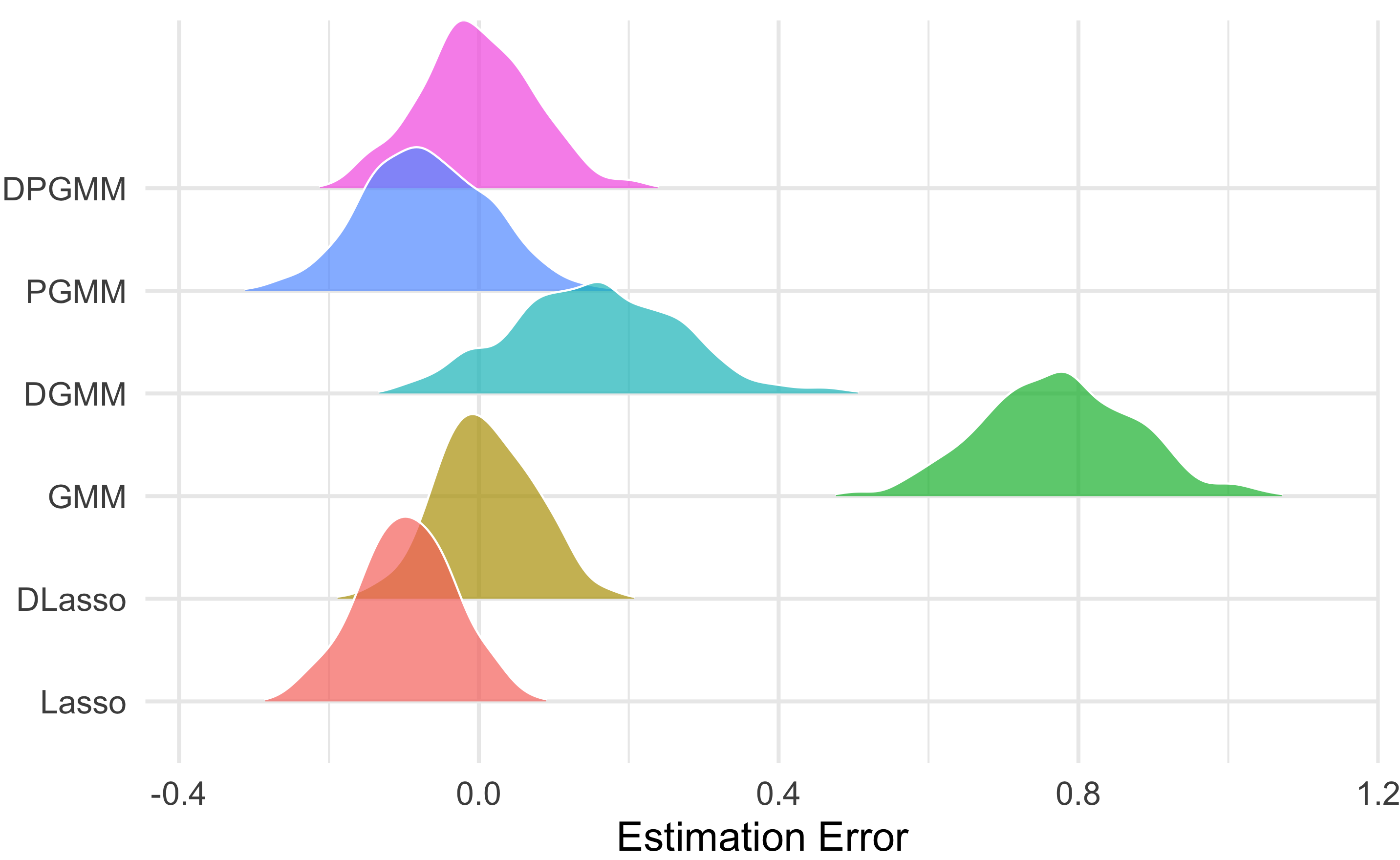}
  \caption{Distributions of the estimation errors of the estimators}
  \label{fg1}  
\end{figure}

In summary, the simulation results indicate that our estimator avoids the model misspecification and high variance issues of GMM, as well as the problem of neglecting endogeneity of Lasso. Meanwhile, it successfully addresses the adverse effects caused by regularization in penalized GMM through the debiasing term.

\section{Empirical application}
\label{sec6}
\subsection{Background}
Understanding the evolution of BMI during early childhood is of great significance for the study of early childhood human capital. Recently, \citet{millimet2015persistence} investigate the persistence and mobility of BMI among American children and find that the mobility of BMI shows a U-shaped relationship with age and the persistence of BMI significantly increases after children enter primary school. \citet{geserick2018acceleration} study the acceleration of BMI in early childhood and the risk of sustained obesity. They find that among obese adolescents, the most rapid weight gain had occurred between 2 and 6 years of age; most children who were obese at that age were obese in adolescence. In this application, we provide further details on this topic. We use the proposed framework to estimate the effects of factors that may influence early childhood BMI so that we can identify what factors influence early childhood BMI and measure the strength of their influence. 

The analysis is based on the Early Childhood Longitudinal Survey, Kindergarten Cohort dataset (ECLS-K), which is also used in \citet{millimet2017dynamic}. It is a panel dataset including data on over 20,000 American students from roughly 1,000 schools who entered kindergarten during the 1998–99 school year. It collects information on BMI, family socioeconomic status, household size, fast food price index, the number of children’s books, the presence of the mother in the household, whether the student exercises at least three days a week, whether the student usually eats a school-provided lunch, and health insurance status for students in fall and spring kindergarten, fall and spring first grade, spring third grade, spring fifth grade and spring eighth grade. Due to data missingness, following \citet{millimet2017dynamic}, we drop the fall first-grade wave and use a balanced subset containing $N=9155$ individuals and $T=6$ periods.\footnote{The dataset is publicly available at \url{https://doi.org/10.15456/jae.2022326.0703753341}.} 

\subsection{Empirical results}
We want to estimate the effect of three candidate treatment variables -- family socioeconomic status (SES), household size (HHS), fast food price index (FFP) -- on the outcome variable, BMI. All variables other than the treatment and outcome will be treated as covariates. First, we construct three nonparametric static panel models with TWFE, where each of the above three variables serves as the treatment in turn. However, \citet{millimet2017dynamic} point out that in studies examining the evolution of early childhood human capital, dynamic panel models are quite commonplace, and existing literature regards BMI as an outcome jointly influenced by both concurrent and historical health-related inputs. Accordingly, we also estimate three nonparametric dynamic panel models with TWFE, again treating each of the three variables as the treatment in turn. To estimate the structural function, we prepare dictionaries containing covariates, cubic polynomials of the current and first-order lagged treatments, and interactions between treatments and covariates. For the dynamic panel models, the dictionaries additionally include the first-order lagged outcome variable. We estimate the average derivatives of the outcome with respect to both the concurrent and first-order lagged treatment across multiple time periods, so we can investigate the lagged effects of the treatment and the temporal trends of treatment effects. 

The estimation results are reported in Table \ref{tb2}, from which several key insights emerge. First, the empirical results from the static and dynamic panel models are broadly comparable in both the direction and statistical significance, with the dynamic one producing weaker magnitudes on some significant effects. Second, regardless of the model specification, the contemporaneous effect of family socioeconomic status on BMI exhibits an inverted U-shaped pattern over time. Specifically, this effect is significantly negative in period 3, becomes insignificant in period 4, turns significantly positive and peaks in period 5, and then declines in period 6 while remaining significantly positive. The one-period lagged effect of family socioeconomic status shows a steady increase from period 4 to period 6, shifting from significantly negative to significantly positive. Turning to household size, under both model specifications, its contemporaneous effect is only significant in periods 3 and 4 -- strongly positive in period 3 but weakly negative in period 4. Its one-period lagged effect is insignificant across all periods. Lastly, under the static panel model, the contemporaneous effect of fast food price is significantly positive only in period 3, while its one-period lagged effect is significantly positive only in period 4. In contrast, both the contemporaneous and one-period lagged effects of fast food price become insignificant when the dynamic panel model is applied.

\begin{table}[ht]
  \setlength{\tabcolsep}{4pt}
  \centering
  \caption{Estimates and standard errors of various average derivatives}
  \label{tb2}
  \begin{threeparttable}
  \begin{tabular}{ccccccccc}
    \toprule
    Model & Treatment & $\theta_{06}(0)$ & $\theta_{06}(1)$ & $\theta_{05}(0)$ & $\theta_{05}(1)$ & $\theta_{04}(0)$ & $\theta_{04}(1)$ & $\theta_{03}(0)$ \\
    \midrule

    \multirow{6}{*}{Static}
    & \multirow{2}{*}{SES}
    & 0.1776 & 0.1817 & 0.4607 & 0.0006 & 0.0289 & -0.4319 & -0.1714 \\
    & & (0.0487) & (0.0370) & (0.0535) & (0.0425) & (0.0911) & (0.0884) & (0.0348) \\

    & \multirow{2}{*}{HHS}
    & 0.0097 & -0.0041 & 0.0018 & -0.0399 & -0.0574 & -0.0081 & 0.1482 \\
    & & (0.0139) & (0.0126) & (0.0132) & (0.0383) & (0.0298) & (0.0483) & (0.0459) \\

    & \multirow{2}{*}{FFP}
    & -0.0134 & 0.1937 & -0.0019 & 0.2349 & -0.0565 & 0.9174 & 0.5806 \\
    & & (0.0971) & (0.1460) & (0.2511) & (0.6352) & (0.5490) & (0.4032) & (0.1403) \\
    \hline

    \multirow{6}{*}{Dynamic}
    & \multirow{2}{*}{SES}
    & 0.1829 & 0.1261 & 0.5070 & -0.0102 & -0.0811 & -0.3668 & -0.1702 \\
    & & (0.0490) & (0.0349) & (0.0525) & (0.0319) & (0.0546) & (0.0252) & (0.0238) \\

    & \multirow{2}{*}{HHS}
    & 0.0093 & -0.0039 & 0.0024 & -0.0365 & -0.0539 & 0.0254 & 0.1198 \\
    & & (0.0140) & (0.0119) & (0.0133) & (0.0325) & (0.0295) & (0.0153) & (0.0301) \\

    & \multirow{2}{*}{FFP}
    & -0.0328 & 0.1865 & 0.0389 & 0.0634 & -0.1332 & 0.4669 & 0.5811 \\
    & & (0.0972) & (0.1412) & (0.2475) & (0.6099) & (0.4897) & (0.4916) & (0.4398) \\
    \bottomrule
  \end{tabular}

  \begin{tablenotes}[flushleft]
    \footnotesize
    \item Note: Numbers not in parentheses are the estimates, and numbers in parentheses are the standard errors. $\theta_{03}(1)$, $\theta_{02}(0)$, $\theta_{02}(1)$, $\theta_{01}(0)$, $\theta_{01}(1)$ are not estimated because data before period 1 are unavailable, so some variables necessary in their estimation cannot be calculated.
  \end{tablenotes}
  \end{threeparttable}
\end{table}

Taken together, these findings indicate that family socioeconomic status exerts the strong-est influence on BMI, followed by household size, while the effect of fast food price is the weakest. Moreover, the attenuation of some coefficients in the dynamic panel model indicates that these effects may be partially explained by past outcomes or omitted time-series components, underscoring the appropriateness of accounting for dynamics in the estimation. Nonetheless, the similarity between the two models' results reinforces the robustness of the findings. It should also be noted that the mechanisms underlying these relationships fall beyond the scope of this study and are therefore not explored further. Nevertheless, our finding that the effect of family socioeconomic status reverses direction as children age may partially explain the puzzle mentioned by \citet{millimet2017dynamic}: some studies report a significant positive effect of family socioeconomic status on BMI, whereas others report a significant negative effect. We argue that these seemingly contradictory findings stem from the fact that existing studies typically estimate a single coefficient for the variable, thereby capturing only the time-aggregated effect of this variable. Because different estimation methods apply different implicit weights across time periods, opposing results are a natural consequence. This highlights a key advantage of our framework. By allowing the estimation of period-specific effects rather than merely the aggregated one, our framework uncovers the temporal dynamics of the effect and sidesteps the long-standing debate over its direction. Finally, the relatively weak or insignificant effects of household size and fast food price are consistent with the conclusions in the existing literature.

\section{Conclusion}
\label{sec7}
This paper develops a double/debiased machine learning framework for estimating average derivative effects in nonparametric panel data models with two-way fixed effects and endogenous regressors. The framework accommodates static and dynamic specifications, sequentially exogenous and completely endogenous covariates, lagged outcomes, and continuous time-varying treatments. To handle the correlation introduced by cross-sectional demeaning, we propose a cross-fitting scheme that restores the independence required by the asymptotic theory; to handle the lack of a closed-form debiasing term under endogeneity, we estimate the Riesz representer via penalized GMM. We establish the convergence rate of this estimator, prove $\sqrt{N}$-consistency and asymptotic normality of the debiased estimators of contemporaneous, dynamic, and aggregated effects, and provide a consistent variance estimator. Monte Carlo evidence and an application to the ECLS-K data confirm the framework's effectiveness.

Two extensions warrant further investigation. First, our theory is developed under fixed $T$ and $N \to \infty$; extending it to a joint $(N, T) \to \infty$ regime would require addressing the incidental parameters problem for time fixed effects. Second, in settings where credible instruments are unavailable, developing a counterpart of our framework under alternative identifying restrictions---such as parallel trends or proxy variables---would broaden the reach of automatic debiasing in panel applications.

\begin{appendices}
\renewcommand{\theequation}{S.\arabic{equation}}
\renewcommand{\thetheorem}{S.\arabic{theorem}}
\renewcommand{\theassumption}{S.\arabic{assumption}}
\renewcommand{\thecorollary}{S.\arabic{corollary}}
\renewcommand{\thetable}{S.\arabic{table}}
\renewcommand{\thefigure}{S.\arabic{figure}}
\renewcommand{\theHequation}{S.\arabic{equation}}
\renewcommand{\theHtheorem}{S.\arabic{theorem}}
\renewcommand{\theHassumption}{S.\arabic{assumption}}
\renewcommand{\theHcorollary}{S.\arabic{corollary}}
\renewcommand{\theHtable}{S.\arabic{table}}
\renewcommand{\theHfigure}{S.\arabic{figure}}
\setcounter{equation}{0}
\setcounter{theorem}{0}
\setcounter{assumption}{0}
\setcounter{corollary}{0}
\setcounter{table}{0}
\setcounter{figure}{0}

\section{Estimators for \texorpdfstring{$\theta_{0t}$}{theta} and \texorpdfstring{$\theta_0(0)$}{theta}}
\label{appa}
In section \ref{sec3} and \ref{sec4} of the main text, we present the estimation and asymptotic theory for $\widehat{\theta}_{0t}^d(s)$ under given $t$ and $s$. For notational simplicity, we omit subscripts $s$ and $t$ for the functions in the main text. In this appendix, we no longer omit the subscripts. First, the estimand is $\theta_{0t}(s) = \E\left[m_s(W_{it}^*,\Delta \gamma_0^*)\right]$, where $\displaystyle m_s(W_{it}^*,\Delta \gamma_0^*) = \frac{\partial \Delta \gamma_0^*(\overline{X}_{i,t-q:t},\overline{D}_{i,t-q:t},C_i)}{\partial D_{i,t-s}}$. The function $\alpha_0(Z_{it})$ is now denoted by $\alpha_{0ts}(Z_{it})$ and will satisfy 
\[
  \E[m_s(W_{it}^*,\Delta \gamma^*)] = \E[\alpha_{0ts}(Z_{it})\Delta \gamma^*(V_{it})] \text{ for all } \gamma \text{ with } \E[\Delta \gamma^*(V_{it})]^2 < \infty.
\]
The orthogonal moment condition becomes $\E[\psi_s(W_{it}^*,\theta_{0t}(s),\Delta \gamma^*_0,\alpha_{0ts})] = 0$, where \\ $\psi_s(W_{it}^*,\theta,\Delta \gamma^*,\alpha) = m_s(W_{it}^*,\Delta \gamma^*) - \theta + \alpha(Z_{it})[\Delta Y_{it}^* - \Delta \gamma^*(V_{it})]$. In Algorithm \ref{ag1}, $\widehat{G}_k$ is now denoted as $\widehat{G}_{kt}$ and we define 
\[
\widehat{M}_{kts} = \frac{1}{N-n_k-n_{k^\prime}}\sum_{i\in\mathcal{F}_{k,k^\prime}^c} (m_s(W_{it}^*,\Delta d_1^*),\ldots,m_s(W_{it}^*,\Delta d_q^*))^\prime
\]
and $\widehat{\rho}_{kts} = \mathop{\arg\min}_{\rho\in\mathbb{R}^p} \big(\widehat{M}_{kts} - \widehat{G}_{kt}\rho\big)^\prime \widehat{\Omega} \big(\widehat{M}_{kts} - \widehat{G}_{kt}\rho\big) + 2 r \|\rho\|_1$. The estimator for $\alpha_{0ts}(Z_{it})$ is $\widehat{\alpha}_{kts} = b(Z_{it})^\prime \widehat{\rho}_{kts}$ and finally the estimator for $\theta_{0t}(s)$ is 
\[
  \widehat{\theta}_{0t}^d(s) = \frac{1}{N} \sum_{k=1}^K \sum_{i\in\mathcal{F}_k} m_s(W_{it}^*,\Delta \widehat{\gamma}_k^*) + \widehat{\alpha}_{kts}(Z_{it})[\Delta Y_{it}^* - \Delta \widehat{\gamma}_k^*(V_{it})].
\]

We first consider the estimator for $\theta_{0t}$, which can be easily constructed by plugging $\widehat{\theta}_{0t}^d(s)$'s into the definition of $\theta_{0t}$. Specifically, we run Algorithm \ref{ag1} for $s=0,\ldots,q$ to obtain $\widehat{\theta}_{0t}^d(0), \ldots, \widehat{\theta}_{0t}^d(q)$, and then build the estimator $\widehat{\theta}_{0t}$ as 
\[
\widehat{\theta}_{0t} = \sum_{s=0}^q w_s \widehat{\theta}_{0t}^d(s).
\]
We have the following result regarding its asymptotic normality. 
\begin{theorem}
  \label{ths1}
  If Assumption \ref{asscp} to \ref{asscr} are satisfied for $s=0,\ldots,q$ and $\varepsilon_N = o(r)$, 
  \[
  \sqrt{N} \Big(\widehat{\theta}_{0t} - \theta_{0t}\Big) \dto N(0,w^\prime \Sigma w),
  \]
  where $w=(w_0,\ldots,w_q)^\prime$, and 
  \[
  \Sigma = \Var\begin{pmatrix}
  \frac{\partial \gamma_0(V_{it})}{\partial D_{i,t}} - \theta_{0t}(0) + \alpha_{0t0}(Z_{it})\Delta \varepsilon_{it} - \E[\alpha_{0t0}(Z_{it})]\Delta \varepsilon_{it} \\
  \cdots \\
  \frac{\partial \gamma_0(V_{it})}{\partial D_{i,t-q}} - \theta_{0t}(q) + \alpha_{0tq}(Z_{it})\Delta \varepsilon_{it} - \E[\alpha_{0tq}(Z_{it})]\Delta \varepsilon_{it} 
\end{pmatrix}.
  \]
\end{theorem}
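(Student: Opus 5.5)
The plan is to reduce Theorem \ref{ths1} to a joint asymptotically linear representation of the vector $(\widehat{\theta}_{0t}^d(0),\ldots,\widehat{\theta}_{0t}^d(q))^\prime$ and then apply the Cramér--Wold device. The natural route is to reuse, for each fixed $s$, the intermediate expansion that underlies Theorem \ref{th2}, rather than only its distributional conclusion. Concretely, the proof of Theorem \ref{th2} must establish, under Assumptions \ref{asscp}--\ref{asscr} and $\varepsilon_N=o(r)$,
\[
\sqrt{N}\big(\widehat{\theta}_{0t}^d(s)-\theta_{0t}(s)\big)=\frac{1}{\sqrt{N}}\sum_{i=1}^N \phi_s(W_{i}) + o_p(1),
\]
with $\phi_s = \partial\gamma_0(V_{it})/\partial D_{i,t-s}-\theta_{0t}(s)+\{\alpha_{0ts}(Z_{it})-\E[\alpha_{0ts}(Z_{it})]\}\Delta\varepsilon_{it}$.

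The centering by $\E[\alpha_{0ts}]$ should arise from the cross-fold demeaning. The term $\widehat{\alpha}_k(Z_{it})\Delta\varepsilon^*_{it}$ contains $-\widehat{\alpha}_k(Z_{it})\,n_{k'}^{-1}\sum_{j\in\mathcal{F}_{k'}}\Delta\varepsilon_{jt}$. After summing over $i\in\mathcal{F}_k$, this contributes $-\bar\alpha\,\Delta\varepsilon_{jt}$ to each individual $j$ in the demeaning fold. The same accounting moves the demeaned $m$ term onto the un-demeaned derivative. Because the map $\pi$ makes each fold the demeaning fold of exactly one other fold, every individual ends up with one copy of $\phi_s$.

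I will state this linearization as a lemma. Its proof follows the proof of Theorem \ref{th2}: the remainder terms are bounded fold by fold, conditioning on $\mathcal{F}^c_{k,k'}$, using Assumptions \ref{assm} and \ref{asscr} together with Theorem \ref{th1}. I take the lemma as given for each $s$, since the assumptions are imposed for every $s=0,\ldots,q$.

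Next I sum the expansions with weights $w_s$. This gives $\sqrt{N}(\widehat{\theta}_{0t}-\theta_{0t}) = N^{-1/2}\sum_i w^\prime\phi(W_i)+o_p(1)$, where $\phi=(\phi_0,\ldots,\phi_q)^\prime$. Since $q$ is fixed, a finite sum of $o_p(1)$ terms is $o_p(1)$. The summands $w^\prime\phi(W_i)$ are i.i.d. across $i$ by the i.i.d. assumption, and they have mean zero. For the mean-zero property, $\E[\partial\gamma_0/\partial D_{i,t-s}]=\theta_{0t}(s)$ holds by definition, and $\E[\Delta\varepsilon_{it}\mid Z_{it}]=0$. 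Here each $Z_{it}$ depends on $s$, but every one of them is generated by information at or before $t-s-1$ for the appropriate difference, so the conditional mean-zero property from \eqref{eqer} applies component-wise.

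Finite variance follows from Assumption \ref{assa}: $\alpha_{0ts}$ is bounded, $\E[\Delta\varepsilon^2\mid Z]\le C_z$, and the derivative has a finite second moment. The Lindeberg--Lévy CLT then gives convergence to $N(0,\Var(w^\prime\phi))=N(0,w^\prime\Sigma w)$, since $\Var(w^\prime\phi)=w^\prime\Var(\phi)w$ and $\Var(\phi)=\Sigma$ as displayed, because each $\phi_s$ is mean zero.

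One subtlety needs care. The differencing operator $\Delta$ and the fold pairing depend on $s$, but the folds $\mathcal{F}_k$ and the map $\pi$ are common across $s$. The linearizations are therefore defined on the same sample, and the joint statement holds without further coupling arguments.

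The main obstacle is the lemma, specifically the demeaning-induced term. One must show that $N^{-1}\sum_k\sum_{i\in\mathcal{F}_k}(\widehat{\alpha}_k(Z_{it})-\alpha_0(Z_{it}))\,n_{k'}^{-1}\sum_{j\in\mathcal{F}_{k'}}\Delta\varepsilon_{jt}$ is $o_p(N^{-1/2})$, and that replacing $n_k^{-1}\sum_{i}\alpha_0(Z_{it})$ by $\E\alpha_0$ costs $o_p(N^{-1/2})$. My approach is to factor each term into a product of two averages over independent folds. Conditionally on $\mathcal{F}^c_{k,k'}$, the first factor is $O_p(\|\widehat{\alpha}_k-\alpha_0\|)=o_p(1)$ by Theorem \ref{th1}, and the second is $O_p(N^{-1/2})$ by Chebyshev's inequality using $C_z$. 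The extra demeaning fold in Algorithm \ref{ag1} is exactly what makes this factorization valid.
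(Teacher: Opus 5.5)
Your proposal is correct and follows essentially the same route as the paper, which likewise takes the per-$s$ asymptotically linear representation from the proof of Theorem \ref{th2}, combines the finitely many expansions (so the remainders stay $o_p(1)$), and concludes with a CLT and Slutsky's theorem. The only difference is cosmetic: you apply the univariate Lindeberg--L\'evy CLT directly to $w^\prime\phi(W_i)$, whereas the paper applies the multivariate CLT to the stacked vector and then maps it through $w^\prime$. As a minor aside, the common fold assignment across $s$ is not actually needed, since the leading terms $N^{-1/2}\sum_i\phi_s(W_i)$ do not depend on the folds.
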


Then, to construct the estimator for $\theta_0(0)$, we run Algorithm \ref{ag1} for $t=1,\ldots,T$ to obtain $\widehat{\theta}_{01}(0), \ldots, \widehat{\theta}_{0T}(0)$, and then construct the estimator $\widehat{\theta}_0(0)$ as 
\[
\widehat{\theta}_0(0) = \sum_{t=1}^T w_t \widehat{\theta}_{0t}(0).
\]
We similarly have the following asymptotic property. 
\begin{theorem}
  \label{ths2}
  If Assumption \ref{asscp} to \ref{asscr} are satisfied for $t=1,\ldots,T$ and $\varepsilon_N = o(r)$, 
  \[
  \sqrt{N} \Big(\widehat{\theta}_0(0) - \theta_0(0)\Big) \dto N(0,w^\prime \Sigma w),
  \]
  where $w=(w_1,\ldots,w_T)^\prime$, and 
  \[
  \Sigma = \Var\begin{pmatrix}
  \frac{\partial \gamma_0(V_{i1})}{\partial D_{i1}} - \theta_{01}(0) + \alpha_{010}(Z_{i1})\Delta \varepsilon_{i1} - \E[\alpha_{010}(Z_{i1})]\Delta \varepsilon_{i1} \\
  \cdots \\
  \frac{\partial \gamma_0(V_{iT})}{\partial D_{iT}} - \theta_{0T}(0) + \alpha_{0T0}(Z_{iT})\Delta \varepsilon_{iT} - \E[\alpha_{0T0}(Z_{iT})]\Delta \varepsilon_{iT} 
  \end{pmatrix}.
  \]
\end{theorem}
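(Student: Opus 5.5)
The statement to prove is Theorem \ref{ths2}. The plan is to reduce it to a joint asymptotic linear representation of the vector $(\widehat{\theta}_{01}(0),\ldots,\widehat{\theta}_{0T}(0))^\prime$. The result then follows from the Cramér--Wold device and a multivariate CLT across i.i.d.\ individuals. Because $T$ is fixed, linearity of $\theta_0(0)=\sum_t w_t\theta_{0t}(0)$ reduces everything to showing, for each $t$,
\[
\sqrt{N}\big(\widehat{\theta}_{0t}(0)-\theta_{0t}(0)\big)=\frac{1}{\sqrt N}\sum_{i=1}^N \phi_{t}(W_{i}) + o_p(1),
\]
where $\phi_t(W_i)=\partial\gamma_0(V_{it})/\partial D_{it}-\theta_{0t}(0)+\{\alpha_{0t0}(Z_{it})-\E[\alpha_{0t0}(Z_{it})]\}\Delta\varepsilon_{it}$. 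Given this, $\sqrt N(\widehat\theta_0(0)-\theta_0(0))=N^{-1/2}\sum_i\sum_t w_t\phi_t(W_i)+o_p(1)$. The summands are i.i.d.\ across $i$ with mean zero, since $\E[\alpha_{0t0}(Z_{it})\Delta\varepsilon_{it}]=0$ by \eqref{eqer} and $\E\Delta\varepsilon_{it}=0$. Their variance is $w^\prime\Sigma w$, which is finite by Assumption \ref{assa}. Lindeberg--Lévy then gives the limit $N(0,w^\prime\Sigma w)$.

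The per-$t$ expansion is exactly what the proof of Theorem \ref{th2} produces, with $s=0$, applied under the hypotheses for each $t$. The main work is to reproduce that expansion and to check that it holds jointly. I would decompose $\widehat\theta_{0t}(0)-\theta_{0t}(0)$ fold by fold into four pieces: the oracle term $\psi(W_{it}^*,\theta_{0t}(0),\Delta\gamma_0^*,\alpha_{0t0})$; a term linear in $\Delta\widehat\gamma_k^*-\Delta\gamma_0^*$; a term linear in $\widehat\alpha_k-\alpha_0$; and the product term $-(\widehat\alpha_k-\alpha_0)(\Delta\widehat\gamma_k^*-\Delta\gamma_0^*)$. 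The linear terms have conditional mean zero given the training folds $\mathcal F_{k,k^\prime}^c$, by \eqref{eqrie} and \eqref{eqer}. Their conditional variances vanish by Assumption \ref{assm}, with Theorem \ref{th1} supplying the $\widehat\alpha_k$ rate, and Assumption \ref{assa} bounding the relevant second moments. The product term is controlled via Cauchy--Schwarz after conditioning on $Z_{it}$, using Assumption \ref{asscr}.

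The remaining step turns the oracle term into $\phi_t$, and this is where the cross-fold demeaning must be handled. The demeaned error $\Delta\varepsilon^*_{it}$ contains the fold-$k^\prime$ mean $n_{k^\prime}^{-1}\sum_{j\in\mathcal F_{k^\prime}}\Delta\varepsilon_{jt}$. When this is multiplied by $\alpha_0(Z_{it})$ and summed over $i\in\mathcal F_k$, it contributes approximately $-\E[\alpha_0]\cdot(\text{mean of }\Delta\varepsilon_{jt}\text{ over }\mathcal F_{k^\prime})$. Since $\pi$ is a bijection, reassigning these contributions to the individuals $j$ who generated them produces the centering $-\E[\alpha_{0t0}]\Delta\varepsilon_{jt}$, using roughly equal fold sizes. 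The analogous mean of $\Delta\gamma_0$ cancels exactly in $m$.

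I expect the main obstacle to be verifying that these reassignment errors are $o_p(N^{-1/2})$. The error is a product of two independent centered fold averages, $(\bar\alpha_k-\E\alpha_0)\cdot\overline{\Delta\varepsilon}_{k^\prime}=O_p(N^{-1})$, which is small enough. The same argument must also be applied with $\widehat\gamma_k$ in place of $\gamma_0$, and here I would use independence of folds $k$, $k^\prime$ and the training set. Because all remainders are $o_p(1)$ individually for each of the finitely many $t$, their finite sum is also $o_p(1)$, so joint validity requires no extra argument beyond the union over $t$.
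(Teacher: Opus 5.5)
Your architecture is the paper's. The paper proves this theorem by deferring to the proof of Theorem \ref{ths1}: it stacks the per-period linearizations obtained in the proof of Theorem \ref{th2}, applies a multivariate CLT, and then applies $w^\prime$. Your Cram\'er--Wold route through the i.i.d.\ summands $\sum_t w_t\phi_t(W_i)$ is equivalent. Your four-term decomposition, and the fold reassignment of $\E[\alpha_{0t0}(Z_{it})]\,n_{k^\prime}^{-1}\sum_{j\in\mathcal{F}_{k^\prime}}\Delta\varepsilon_{jt}$, also coincide with the paper's.

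The gap is in the product remainder $\widehat\delta_i=[\widehat\alpha_k(Z_{it})-\alpha_0(Z_{it})][\Delta\gamma_0^*(V_{it})-\Delta\widehat\gamma_k^*(V_{it})]$.

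\emph{What your argument covers.} Cauchy--Schwarz in the projected norm together with Assumption \ref{asscr} controls only its conditional mean, $\sqrt N\,\E(\widehat\delta_i\mid\wc)$.

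\emph{What is still needed.} The centered sum $N^{-1/2}\sum_{i\in\mathcal{F}_k}\{\widehat\delta_i-\E(\widehat\delta_i\mid\wc)\}$ must also vanish. This requires $\int[\widehat\alpha_k-\alpha_0]^2[\Delta\widehat\gamma_k^*-\Delta\gamma_0^*]^2\,\rd F_0\pto 0$. It also requires handling the $O(1/n_{k^\prime})$ within-fold covariances created by the shared fold-$k^\prime$ mean, which the paper computes explicitly.

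\emph{Why your stated tools do not deliver it.} This mixed second moment does not follow from the separate $L^2$ consistency statements in Assumption \ref{assm}. Nor can you bound $\E(|\widehat\delta_i|\mid\wc)$ by a product of $L^2$ norms instead: under ill-posedness, $\|\Delta\widehat\gamma_k-\Delta\gamma_0\|$ converges slowly. That is exactly why Assumption \ref{asscr} is stated in the projected norm.

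\emph{How the paper closes it.} Lemma \ref{lemma13} uses the basic inequality of the penalized GMM problem to get $\|\widehat\rho_k\|_1=O_p(1)$. By Assumption \ref{assbd}, this gives $\sup_z|\widehat\alpha_k(z)|\le C_b\|\widehat\rho_k\|_1=O_p(1)$. Combined with $|\alpha_0|\le C_\alpha$, the integral above is then $O_p(1)\cdot o_p(1)$.

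This is where Assumptions \ref{asscp}--\ref{assei} and $\varepsilon_N=o(r)$ genuinely enter the proof, for each $t$ through $\widehat\rho_{kt0}$. They do not enter via the rate in Theorem \ref{th1}, since the consistency and product rate of $\widehat\alpha_k$ are assumed directly in Assumptions \ref{assm} and \ref{asscr}. With this uniform bound on $\widehat\alpha_k$ added, your argument goes through.
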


\section{Estimation for the dynamic panel model}
\label{appb}
In this section, we first present the estimation details for the estimand $\theta_{0t}(s)$ under the dynamic panel model \eqref{mod2}.

First, we subtract the model for period $t-s-1$ from the model for period $t$ to obtain  
\begin{equation}
  \label{fdd}
  \Delta Y_{it} = \Delta \gamma_0(\overline{Y}_{i,t-1-p:t-1},\overline{X}_{i,t-q:t},\overline{D}_{i,t-q:t},C_i) + \Delta \lambda_t + \Delta \varepsilon_{it},
\end{equation}
where $\Delta Y_{it} = Y_{it} - Y_{i,t-s-1}$, $\Delta \lambda_t = \lambda_t - \lambda_{t-s-1}$, $\Delta \varepsilon_{it} = \varepsilon_{it} - \varepsilon_{i,t-s-1}$, $\Delta \gamma_0(\overline{Y}_{i,t-1-p:t-1},\overline{X}_{i,t-q:t}, \allowbreak \overline{D}_{i,t-q:t},C_i) = \gamma_0(\overline{Y}_{i,t-1-p:t-1},\overline{X}_{i,t-q:t},\overline{D}_{i,t-q:t},C_i) - \gamma_0(\overline{Y}_{i,t-p-s-2:t-s-2},\overline{X}_{i,t-q-s-1:t-s-1},\\ \overline{D}_{i,t-q-s-1:t-s-1},C_i)$. Then, we subtract the cross-sectional mean of fold $\mathcal{F}_{k^\prime}$ from each term in \eqref{fdd} to eliminate the time fixed effect,
\[
  \Delta Y_{it}^* = \Delta Y_{it} - \frac{1}{|\mathcal{F}_{k^\prime}|}\sum_{j\in\mathcal{F}_{k^\prime}} \Delta Y_{jt} = \Delta \gamma_0^*(\overline{Y}_{i,t-1-p:t-1},\overline{X}_{i,t-q:t},\overline{D}_{i,t-q:t},C_i) + \Delta \varepsilon_{it}^*,
\]
where 
\begin{gather*}
  \begin{aligned}
    \Delta \gamma_0^*(\overline{Y}_{i,t-1-p:t-1},\overline{X}_{i,t-q:t},\overline{D}_{i,t-q:t},C_i) = & \Delta \gamma_0(\overline{Y}_{i,t-1-p:t-1},\overline{X}_{i,t-q:t},\overline{D}_{i,t-q:t}, C_i) \\
    & - \frac{1}{|\mathcal{F}_{k^\prime}|}\sum_{j\in\mathcal{F}_{k^\prime}} \Delta \gamma_0(\overline{Y}_{j,t-1-p:t-1},\overline{X}_{j,t-q:t},\overline{D}_{j,t-q:t},C_j),
  \end{aligned}
  \\
  \text{and }\Delta \varepsilon_{it}^* = \Delta \varepsilon_{it} - \frac{1}{|\mathcal{F}_{k^\prime}|}\sum_{j\in\mathcal{F}_{k^\prime}} \Delta \varepsilon_{jt}.
\end{gather*}
In the dynamic panel scenario, it follows from $\E(\varepsilon_{it} \mid \overline{Y}_{i,t-1},\overline{X}_{1it},\overline{D}_{it},\overline{I}_{it},C_i,\mu_i,\lambda_t) = 0$ that $\E(\Delta \varepsilon_{it} \mid \overline{Y}_{i,t-s-2},\overline{X}_{1,i,t-s-1},\allowbreak \overline{D}_{i,t-s-1},\overline{I}_{i,t-s-1},C_i) = 0$. Therefore, by the i.i.d. assumption, we have 
\begin{equation}
  \E(\Delta \varepsilon_{it}^* \mid \overline{Y}_{i,t-s-2},\overline{X}_{1,i,t-s-1},\overline{D}_{i,t-s-1},\overline{I}_{i,t-s-1},C_i) = 0.
\end{equation}

By the difinition of $\Delta \gamma_0^*$, it is easy to see that 
\begin{align*}
  \theta_{0t}(s) & = \E\left[\frac{\partial \gamma_0(\overline{Y}_{i,t-1-p:t-1},\overline{X}_{i,t-q:t},\overline{D}_{i,t-q:t},C_i)}{\partial D_{i,t-s}}\right] \\
  & = \E\left[\frac{\partial \Delta \gamma_0^*(\overline{Y}_{i,t-1-p:t-1},\overline{X}_{i,t-q:t},\overline{D}_{i,t-q:t},C_i)}{\partial D_{i,t-s}}\right],
\end{align*}
so we may estimate $\theta_{0t}(s)$ based on the RHS. 

Define $\displaystyle m(W_{it}^*,\Delta \gamma_0^*) = \frac{\partial \Delta \gamma_0^*(\overline{Y}_{i,t-1-p:t-1},\overline{X}_{i,t-q:t},\overline{D}_{i,t-q:t},C_i)}{\partial D_{i,t-s}}$, \\$V_{it} = (\overline{Y}_{i,t-1-p:t-1},\overline{X}_{i,t-q:t},\overline{D}_{i,t-q:t},C_i)^\prime$ and $Z_{it} = (\overline{Y}_{i,t-s-2},\overline{X}_{1,i,t-s-1},\overline{D}_{i,t-s-1},\overline{I}_{i,t-s-1},C_i)^\prime$. Then, everything would be the same as the static panel case, and we may leverage Algorithm \ref{ag1} to estimate $\theta_{0t}(s)$. 

After obtaining the estimator $\widehat{\theta}^d_{0t}(s)$, we can easily derive the plug-in estimators for $\theta_{0t}$ and $\theta_0(0)$, just as in the static panel case. For brevity, we omit their explicit presentation here. 

Finally, it is noteworthy that the asymptotic theory developed in section \ref{sec4} and Appendix \ref{appa} applies to the dynamic panel case as well. 

\section{Supplementary simulation results}
\label{appc}
In this appendix, we first present simulation results regarding the estimation of $\theta_{0T}(1)$ and $\theta_{0T}$. The data generating process and implementation details are identical to those in the main text, and the simulation results are reported in Table \ref{tb3} and \ref{tb4}, respectively. 

\begin{table}[ht]
  \setlength{\tabcolsep}{12pt}
  \centering
  \caption{Performance DML and other estimators on the estiamtion of $\theta_{0T}(1)$}
  \label{tb3}
  \begin{threeparttable}
  \begin{tabular}{ccccccc}
    \hline\hline
    & DPGMM & PGMM & DGMM & GMM & DLasso & Lasso \\
    \hline
    True Value & 1.5 & 1.5 & 1.5 & 1.5 & 1.5 & 1.5 \\
    Estimate & 1.4971 & 1.4745 & 1.5306 & 1.8851 & 1.4886 & 1.4154 \\
    Bias & 0.0029 & 0.0255 & 0.0306 & 0.3851 & 0.0114 & 0.0846 \\
    Std. Dev. & 0.0429 & 0.0539 & 0.0693 & 0.0785 & 0.0406 & 0.0372 \\
    MSE & 0.0018 & 0.0036 & 0.0057 & 0.1545 & 0.0018 & 0.0085 \\
    Est. S.D. & 0.0426 & 0.0285 & 0.0656 & 0.0173 & 0.0379 & 0.0259 \\
    Coverage & 0.954 & 0.664 & 0.9 & 0 & 0.928 & 0.182 \\
    \hline
  \end{tabular}
  \begin{tablenotes}[flushleft]
    \footnotesize
    \item Note: Same as those of Table \ref{tb1}.
  \end{tablenotes}
  \end{threeparttable}    
\end{table}

\begin{table}[ht]
  \setlength{\tabcolsep}{12pt}
  \centering
  \caption{Performance DML and other estimators on the estiamtion of $\theta_{0T}$}
  \label{tb4}
  \begin{threeparttable}
  \begin{tabular}{ccccccc}
    \hline\hline
    & DPGMM & PGMM & DGMM & GMM & DLasso & Lasso \\
    \hline
    True Value & 2.25 & 2.25 & 2.25 & 2.25 & 2.25 & 2.25 \\
    Estimate & 2.2477 & 2.1972 & 2.3335 & 2.8286 & 2.2579 & 2.1593 \\
    Bias & 0.0023 & 0.0528 & 0.0835 & 0.5786 & 0.0079 & 0.0907 \\
    Std. Dev. & 0.0414 & 0.0517 & 0.0616 & 0.0628 & 0.0426 & 0.0338\\
    MSE & 0.0017 & 0.0055 & 0.0108 & 0.3387 & 0.0019 & 0.0094 \\
    Est. S.D. & 0.0438 & 0.0344 & 0.0537 & 0.0236 & 0.0390 & 0.0318\\
    Coverage & 0.956 & 0.614 & 0.636 & 0 & 0.952 & 0.208\\
    \hline
  \end{tabular}
  \begin{tablenotes}[flushleft]
    \footnotesize
    \item Note: Same as those of Table \ref{tb1}.
  \end{tablenotes}
  \end{threeparttable}    
\end{table}

It can be seen that consistent with the results for $\theta_{0T}(0)$ presented in the main text, our debiased estimator DPGMM exhibits the smallest bias and MSE for both the one-period lagged effect or the aggregated treatment effect. In fact, irrespective of the first-stage method employed (including GMM where the model is severely misspecified and Lasso where endogeneity is ignored), the corresponding debiased estimators substantially reduce both bias and MSE. Moreover, the conventional variance estimators markedly underestimate the variance when applied to non-debiased estimators, resulting in confidence intervals with very low coverage probabilities. In contrast, the variance estimators associated with the debiased estimators closely match the true variances, and the constructed confidence intervals achieve coverage rates much closer to their nominal levels. Taken together with the results reported in the main text, these findings demonstrate that the proposed framework performs well in estimating various interested treatment effects. 

Next, we examine the performance of the debiased estimator when the sample size and the number of covariates vary, with the experimental results summarized in Table \ref{tb5}. As shown, the debiased estimator achieves notable bias reduction across all combinations of sample sizes and covariate dimensions, yielding substantially smaller MSEs and confidence intervals whose coverage rates are much closer to the nominal levels. As the number of covariates increases, the model complexity also rises; consequently, for a fixed sample size, both the bias and variance of the estimator increase. Conversely, when the number of covariates is held constant, increasing the sample size leads to smaller bias and variance. These results are consistent with theoretical expectations and confirm that our estimator behaves reasonably and performs well across different sample sizes and covariate dimensions. 

\begin{table}[ht]
  \setlength{\tabcolsep}{5pt}
  \centering
  \caption{Comparison of the performance of DML and other estimators}
  \label{tb5}
  \begin{threeparttable}
  \begin{tabular}{ccccccccc}
    \hline\hline
    & \multicolumn{4}{c}{DPGMM} & \multicolumn{4}{c}{PGMM}  \\
    Specification & Bias & S.D. & MSE & Cvg. & Bias & S.D. & MSE & Cvg. \\
    \hline
    $L=2, N=1000$ & 0.0135 & 0.1192 & 0.0144 & 0.946 & 0.0927 & 0.1321 & 0.0260 & 0.72 \\
    $L=2, N=2500$ & 0.0063 & 0.0734 & 0.0054 & 0.952 & 0.0942 & 0.0841 & 0.0159 & 0.604 \\
    $L=2, N=5000$ & 0.0025 & 0.0512 & 0.0026 & 0.948 & 0.0916 & 0.0600 & 0.0120 & 0.416 \\
    $L=5, N=1000$ & 0.0139 & 0.1295 & 0.0170 & 0.94 & 0.0749 & 0.1221 & 0.0205 & 0.786 \\
    $L=5, N=2500$ & 0.0073 & 0.0704 & 0.0050 & 0.948 & 0.0737 & 0.0856 & 0.0128 & 0.668 \\
    $L=5, N=5000$ & 0.0037 & 0.0475 & 0.0023 & 0.954 & 0.0728 & 0.0643 & 0.0094 & 0.53 \\
    $L=10, N=1000$ & 0.0172 & 0.1540 & 0.0240 & 0.91 & 0.0645 & 0.1296 & 0.0209 & 0.788 \\
    $L=10, N=2500$ & 0.0068 & 0.0790 & 0.0063 & 0.932 & 0.0625 & 0.0803 & 0.0104 & 0.712 \\
    $L=10, N=5000$ & 0.0052 & 0.0462 & 0.0022 & 0.958 & 0.0584 & 0.0593 & 0.0069 & 0.626 \\
    \hline
  \end{tabular}
  \end{threeparttable}    
\end{table}

Finally, we compare the performance of the six estimators under a different data generating process. The outcome is generated by 
\begin{align*}
  Y_{it} = & D_{it} + D_{it}^2 + D_{i,t-1} + D_{it}X_{it1} + X_{it1}^2/2 - \cos(X_{it2})/3 + \tanh(X_{it3})/4 \\
  & - \exp(X_{it4})/5 + \alpha_i + \lambda_t + \varepsilon_{it},  
\end{align*}
where $\{\alpha_i\}_{i=1}^N \stackrel{i.i.d.}{\sim} N(0,1)$, $\{\lambda_t\}_{t=1}^T \stackrel{i.i.d.}{\sim} N(0,1)$, $\{\varepsilon_{it}\}_{i=1,t=1}^{N,T} \stackrel{i.i.d.}{\sim} N(0,1)$. $\{\varepsilon^X_{itl}\}_{i=1,t=1,l=1}^{N,T,L} \stackrel{i.i.d.}{\sim} N(\alpha_i,1)$, $X_{itl}=\varepsilon^X_{itl} + 0.5\varepsilon_{i,t-1}$, $\forall i,t,l$. $\{\varepsilon_{it}^D\}_{i=1,t=1}^{N,T} \stackrel{i.i.d.}{\sim} N(1,1)$, and $D_{it} = X_{it1}/2 - X_{it2}/3 + X_{it3}/4 + \varepsilon_{it}^D$, $\forall i,t$. In this model, the confounding effects of the time-variant covariates become non-linear. The construction of each estimator follows the same approach as the simulations in the main text. Table \ref{tb6} reports the estimation results. Overall, under the new data generating process, the proposed DPGMM estimator exhibits the best performance in terms of MSE, confidence interval coverage rate, and estimation accuracy of the asymptotic variance. Across all competing estimators, the DPGMM effectively eliminates the bias inherent in the penalized GMM, while also overcoming the bias induced by model misspecification in GMM and the bias resulting from ignoring endogeneity in Lasso. This demonstrates that the proposed estimator is robust to various settings and can deliver consistently favorable performance. 

\begin{table}[ht]
  \setlength{\tabcolsep}{12pt}
  \centering
  \caption{Comparison of the performance of DML and other estimators}
  \label{tb6}
  \begin{threeparttable}
  \begin{tabular}{ccccccc}
    \hline\hline
    & DPGMM & PGMM & DGMM & GMM & DLasso & Lasso \\
    \hline
    True Value & 3 & 3 & 3 & 3 & 3 & 3 \\
    Estimate & 2.9906 & 2.9232 & 3.1092 & 3.7165 & 2.9893 & 2.7294 \\
    Bias & 0.0094 & 0.0768 & 0.1092 & 0.7165 & 0.0107 & 0.2706 \\
    Std. Dev. & 0.1193 & 0.1252 & 0.1769 & 0.2017 & 0.1444 & 0.2531 \\
    MSE & 0.0143 & 0.0216 & 0.0432 & 0.5541 & 0.0210 & 0.1373 \\
    Est. S.D. & 0.1088 & 0.0748 & 0.1686 & 0.0735 & 0.1032 & 0.0709 \\
    Coverage & 0.956 & 0.71 & 0.894 & 0.006 & 0.94 & 0.232 \\
    \hline
  \end{tabular}
  \begin{tablenotes}[flushleft]
    \footnotesize
    \item Note: Same as the note of Table \ref{tb1}.
  \end{tablenotes}
  \end{threeparttable}    
\end{table}

\section{Proofs}
\subsection{Proof of Theorem \ref{th1}}
Define $\overline{\rho}$ as the population coefficients, i.e.,
\begin{equation}
  \label{eqrho}
  \overline{\rho} \in \mathop{\arg\min}_{v\in\mathbb{R}^p} \big(M-G\rho\big)^\prime \Omega \big(M-G\rho\big),
\end{equation}
where $M = \E[(m(W_{it}^*,\Delta d_1^*),\ldots,m(W_{it}^*,\Delta d_q^*))^\prime]$, and denote $\overline{\alpha}(Z_{it}) = b(Z_{it})^\prime \overline{\rho}$. Note that it follows from \eqref{eqrie} and the boundness of the derivatives in Assumption \ref{assbd} that 
\[
M = \E[\Delta d^*(V_{it})\alpha_0(Z_{it})], 
\]
which by the minimization property of $\overline{\rho}$ suggests that for any $\rho$, 
\begin{equation}
  \label{eqle}
  \begin{aligned}
    & \E[\Delta d^*(V_{it})\alpha_0(Z_{it}) - \Delta d^*(V_{it}) b(Z_{it})\overline{\rho}]^\prime \Omega \E[\Delta d^*(V_{it})\alpha_0(Z_{it}) - \Delta d^*(V_{it}) b(Z_{it})\overline{\rho}] \\
    & \leq \E[\Delta d^*(V_{it})\alpha_0(Z_{it}) - \Delta d^*(V_{it}) b(Z_{it})\rho]^\prime \Omega \E[\Delta d^*(V_{it})\alpha_0(Z_{it}) - \Delta d^*(V_{it}) b(Z_{it})\rho].
  \end{aligned}
\end{equation}

Define $\rho_\star$ as 
\begin{equation}
  \label{eqdrs}
  \rho_\star \in \mathop{\arg\min}_{v \in \mathbb{R}^p} (\overline{\rho}-v)^\prime G^\prime\Omega G (\overline{\rho}-v) + 2\varepsilon_N\sum_{j\in J_0^c} |v_j|.
\end{equation}

Choose an integer $s_0$ satisfying $\varepsilon_N^{-2/(2\xi+1)} \leq s_0 \leq C_1 \varepsilon_N^{-2/(2\xi+1)}$. Applying Assumption \ref{asscs} with $m=s_0$, we can define $J_0$ as the indices of a sparse approximation with $|J_0| = s_0$ ($|J_0|$ is the cardinality of set $J_0$) and coefficients $\{\ell_j\}_{j\in J_0}$ such that for $\widetilde{\alpha}(Z_{it}) = \sum_{j\in J_0}\ell_j b_j(Z_{it})$, 
\begin{equation}
  \label{eqpf1}
  \|\alpha_0(Z_{it})-\widetilde{\alpha}(Z_{it})\|^2 \leq C_1 s_0^{-2\xi}.
\end{equation}

In what follows, we will introduce some important lemmas for the proof of Theorem \ref{th1}. We assume that the assumptions in the main text hold true in these lemmas.
\begin{lemma}
  \label{lemma1}
  $(\overline{\rho} - \rho_\star)^\prime G^\prime \Omega G (\overline{\rho} - \rho_\star) = O(\varepsilon_N^{4\xi/(2\xi+1)})$. 
\end{lemma}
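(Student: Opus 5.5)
The plan is to compare the objective in \eqref{eqdrs} at its minimizer $\rho_\star$ with its value at a well-chosen sparse competitor supported on $J_0$. The natural competitor is $\ell$, the vector with entries $\ell_j$ for $j\in J_0$ and zero elsewhere, so that $b(Z_{it})^\prime\ell=\widetilde{\alpha}(Z_{it})$. Since $\ell_{J_0^c}=0$, the penalty term vanishes at $\ell$. The minimizing property of $\rho_\star$ and nonnegativity of the penalty then give
\[
(\overline{\rho}-\rho_\star)^\prime G^\prime\Omega G(\overline{\rho}-\rho_\star)\leq (\overline{\rho}-\ell)^\prime G^\prime\Omega G(\overline{\rho}-\ell).
\]
It therefore suffices to show that the right-hand side is $O(s_0^{-2\xi})$. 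Because $s_0\geq \varepsilon_N^{-2/(2\xi+1)}$, we have $s_0^{-2\xi}\leq \varepsilon_N^{4\xi/(2\xi+1)}$, which is the claimed rate.

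To bound $(\overline{\rho}-\ell)^\prime G^\prime\Omega G(\overline{\rho}-\ell)$, write $G(\overline{\rho}-\ell)=(M-G\ell)-(M-G\overline{\rho})$. This uses $M=\E[\Delta d^*(V_{it})\alpha_0(Z_{it})]$, which follows from \eqref{eqrie} as noted after \eqref{eqrho}. Using $(a-b)^\prime\Omega(a-b)\leq 2a^\prime\Omega a+2b^\prime\Omega b$ (valid since $\Omega$ is positive definite) and then \eqref{eqle} with $\rho=\ell$, we obtain
\[
(\overline{\rho}-\ell)^\prime G^\prime\Omega G(\overline{\rho}-\ell)\leq 4\,(M-G\ell)^\prime\Omega(M-G\ell).
\]
Here $M-G\ell=\E[\Delta d^*(V_{it})(\alpha_0(Z_{it})-\widetilde{\alpha}(Z_{it}))]$.

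The remaining step is to bound this quadratic form by Assumption \ref{asscp}, $\lambda_{max}(\Omega)<C_\Omega$, which gives $(M-G\ell)^\prime\Omega(M-G\ell)\leq C_\Omega\|M-G\ell\|_2^2=C_\Omega\sum_{j=1}^q(\E[\Delta d_j^*(V_{it})(\alpha_0-\widetilde{\alpha})])^2$. Each summand is then controlled by Cauchy--Schwarz as $\|\Delta d_j^*(V_{it})\|^2\,\|\alpha_0-\widetilde{\alpha}\|^2$. By Assumption \ref{assbd}, $|\Delta d_j^*|\leq 4C_d$ almost surely, since it is a difference of two $d_j$'s minus a fold average of such differences. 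Combining this with \eqref{eqpf1} yields a bound of the form $c\,C_\Omega qC_d^2C_1s_0^{-2\xi}$. This matches the constant built into $C^*$, with the $C_U$ factor presumably needed later.

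The main subtlety is the expectation defining $G$ and $M$. It is taken over $W_{it}^*$, which involves the demeaning fold, so $\Delta d^*$ is not a function of $V_{it}$ alone. I would handle this by conditioning on $Z_{it}$ and the other fold's data and applying Cauchy--Schwarz jointly, which keeps the bound uniform since $q$ is treated as fixed in $C^*$. If $q$ grows, the constant still enters explicitly through $C^*$, so the $O(\cdot)$ statement holds as written.
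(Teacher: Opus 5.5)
Your proposal is correct and follows the paper's proof essentially step for step. It uses the same sparse competitor $\widetilde{\rho}$ (your $\ell$) supported on $J_0$, the same decomposition $G(\overline{\rho}-\ell)=(M-G\ell)-(M-G\overline{\rho})$ combined with \eqref{eqle} to obtain the factor $4$, and arrives at the same constant $64C_\Omega qC_d^2C_1$. The only cosmetic difference is in the last bounding step: the paper applies Jensen's inequality to move the expectation outside the $\Omega$-quadratic form and then bounds pointwise, whereas you bound by $\lambda_{max}(\Omega)$ first and then use Cauchy--Schwarz componentwise. Your route needs no conditioning, since Cauchy--Schwarz applies directly to the joint law of $\Delta d_j^*$ and $Z_{it}$.
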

\begin{proof}
  Define $\widetilde{\rho} = (\widehat{\rho}_1,\ldots,\widetilde{\rho}_p)$ with 
  \[
  \widetilde{\rho}_j = \begin{cases}
    \ell_j & \text{ if } j\in J_0, \\
    0 & \text{ otherwise}.
  \end{cases}
  \]
  By the definition of $\rho_\star$ and the fact that $\widetilde{\alpha}(Z_{it}) = b(Z_{it})^\prime \widetilde{\rho}$, we have 
  \begin{align*}
    & (\overline{\rho}-\rho_\star)^\prime G^\prime \Omega G (\overline{\rho}-\rho_\star) + 2\varepsilon_N\sum_{j\in J_0^c} |\rho_{\star j}| \\ 
    & \leq (\overline{\rho}-\widetilde{\rho})^\prime G^\prime \Omega G (\overline{\rho}-\widetilde{\rho}) + 2\varepsilon_N\sum_{j\in J_0^c} |\widetilde{\rho}_j| = (\overline{\rho}-\widetilde{\rho})^\prime G^\prime \Omega G (\overline{\rho}-\widetilde{\rho}) \\
    & = \E\{\Delta d^*[(\alpha_0 - b^\prime \widetilde{\rho}) - (\alpha_0 - b^\prime \overline{\rho})]\}^\prime \Omega \E\{\Delta d^*[(\alpha_0 - b^\prime \widetilde{\rho}) - (\alpha_0 - b^\prime \overline{\rho})]\}\\
    & \leq 2 \E[\Delta d^*(\alpha_0 - b^\prime \widetilde{\rho})]^\prime \Omega \E[\Delta d^*(\alpha_0 - b^\prime \widetilde{\rho})] + 2 \E[\Delta d^*(\alpha_0 - b^\prime \overline{\rho})]^\prime \Omega \E[\Delta d^*(\alpha_0 - b^\prime \overline{\rho})] \\
    & \leq 4 \E[\Delta d^*(\alpha_0 - b^\prime \widetilde{\rho})]^\prime \Omega \E[\Delta d^*(\alpha_0 - b^\prime \widetilde{\rho})] \\
    & \leq 4 \E\{[\Delta d^*(\alpha_0 - b^\prime \widetilde{\rho})]^\prime \Omega [\Delta d^*(\alpha_0 - b^\prime \widetilde{\rho})]\} \\
    & \leq 4 \lambda_{max}(\Omega) \|\Delta d^*(\alpha_0 - b^\prime \widetilde{\rho})\|^2 \\
    & \leq 64 \lambda_{max}(\Omega) q C_d^2 \|\alpha_0 - \widetilde{\alpha}\|^2 \\
    & \leq 64 \lambda_{max}(\Omega) q C_d^2 C_1 s_0^{-2\xi}\\
    & \leq 64 C_\Omega q C_d^2 C_1 \varepsilon_N^{4\xi/(2\xi+1)},
  \end{align*}
  where the fifth line follows from \eqref{eqle}, the eighth line follows from Assumption \ref{assbd} (note that if $|d_j|\leq C_d$, then $|\Delta d_j| < 2C_d$ and $|\Delta d_j^*| \leq 4 C_d$), the ninth line follows from \eqref{eqpf1}, and the last line holds because of Assumption \ref{asscp} and $\varepsilon_N^{-2/(2\xi+1)} \leq s_0$. The result then follows from the fact that $\varepsilon_N\sum_{j\in J_0^c} |\rho_{\star j}| \geq 0$.
\end{proof}

\begin{lemma}
  \label{lemma2}
  $\|G^\prime\Omega G(\overline{\rho}-\rho_\star)\|_\infty = O(\varepsilon_N)$. 
\end{lemma}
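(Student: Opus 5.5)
The bound comes from the first-order optimality (KKT) conditions of the convex problem \eqref{eqdrs} that defines $\rho_\star$. The objective $Q(v) = (\overline{\rho}-v)^\prime G^\prime\Omega G(\overline{\rho}-v) + 2\varepsilon_N\sum_{j\in J_0^c}|v_j|$ is convex in $v$, so $0$ lies in its subdifferential at $v=\rho_\star$. Differentiating the quadratic part gives $-2G^\prime\Omega G(\overline{\rho}-\rho_\star)$. The penalty part contributes $2\varepsilon_N \zeta$, where $\zeta_j = 0$ for $j\in J_0$ (these coordinates are unpenalized) and $\zeta_j \in \partial|\rho_{\star j}| \subseteq [-1,1]$ for $j \in J_0^c$. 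Hence
\[
G^\prime\Omega G(\overline{\rho}-\rho_\star) = \varepsilon_N \zeta .
\]

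Reading this coordinatewise settles the claim. For $j\in J_0$, the $j$-th component of $G^\prime\Omega G(\overline{\rho}-\rho_\star)$ is exactly zero. For $j\in J_0^c$, its absolute value is $\varepsilon_N|\zeta_j|\le\varepsilon_N$. Therefore $\|G^\prime\Omega G(\overline{\rho}-\rho_\star)\|_\infty \le \varepsilon_N$, which is $O(\varepsilon_N)$ deterministically and uniformly in $N$; none of the rate from Lemma \ref{lemma1} is needed.

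The main point to check is that $\rho_\star$ actually exists, so that the subgradient condition can be applied at a genuine minimizer. By Assumption \ref{assei}, $\lambda_{min}(G^\prime\Omega G) > C_L > 0$, so the quadratic part is strictly convex and coercive in $v$. Adding a nonnegative convex penalty keeps $Q$ coercive and strictly convex, so the minimizer exists and is unique. Existence of $\overline{\rho}$ in \eqref{eqrho} also follows from the positive definiteness of $G^\prime\Omega G$, since the minimization problem for $\overline{\rho}$ then has the unique solution $\overline{\rho} = (G^\prime\Omega G)^{-1}G^\prime\Omega M$. With both objects well defined, the argument above is complete.
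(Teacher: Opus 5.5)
Your proof is correct and follows the paper's argument. The subgradient optimality condition for \eqref{eqdrs} forces the components of $G^\prime\Omega G(\overline{\rho}-\rho_\star)$ to vanish on $J_0$ and to be bounded by $\varepsilon_N$ in absolute value on $J_0^c$, giving the bound deterministically. Your remark that $\rho_\star$ and $\overline{\rho}$ exist and are unique because $\lambda_{min}(G^\prime\Omega G) > C_L > 0$ is a detail the paper leaves implicit.
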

\begin{proof}
  Let $e_j\in\mathbb{R}^p$ denote the $j$-th colomn of the identity matrix $I_p$. The FOC for $\rho_\star$ implies that for $j\in J_0$, we have $e_j^\prime G^\prime\Omega G(\rho_\star-\overline{\rho}) = 0$; for $j\in J_0^c$, we have $e_j^\prime G^\prime \Omega G(\rho_\star-\overline{\rho}) + \varepsilon_N z_j = 0$, where $z_j = \mathrm{sign}(\rho_{\star j})$ if $\rho_{\star j}\neq 0$ and $z_j \in [-1,1]$ if $\rho_{\star j}=0$. Therefore, for any $j$, we have $|e_j^\prime G^\prime \Omega G (\rho_\star-\overline{\rho})|\leq \varepsilon_N$. Hence, $\|G^\prime\Omega G(\overline{\rho}-\rho_\star)\|_\infty \leq \varepsilon_N$ and the result follows.
\end{proof}
\begin{lemma}
  \label{lemma3}
  $\|\alpha_0(Z_{it}) - b(Z_{it})^\prime \rho_\star\|^2 = O(\varepsilon_N^{4\xi/(2\xi+1)})$.
\end{lemma}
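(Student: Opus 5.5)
The plan is to split the error of $b(Z_{it})^\prime\rho_\star$ through the sparse approximation $\widetilde{\alpha}=b^\prime\widetilde{\rho}$ from the proof of Lemma \ref{lemma1}:
\[
\alpha_0 - b^\prime\rho_\star = (\alpha_0 - b^\prime\widetilde{\rho}) + b^\prime(\widetilde{\rho}-\rho_\star),
\]
so that
\[
\|\alpha_0 - b^\prime\rho_\star\|^2 \leq 2\|\alpha_0-\widetilde{\alpha}\|^2 + 2(\widetilde{\rho}-\rho_\star)^\prime B(\widetilde{\rho}-\rho_\star).
\]
For the first term, \eqref{eqpf1} together with $s_0\geq \varepsilon_N^{-2/(2\xi+1)}$ gives the bound $C_1 s_0^{-2\xi}\leq C_1\varepsilon_N^{4\xi/(2\xi+1)}$. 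For the second term, Assumption \ref{assei} gives $\lambda_{max}(B)<C_B$, which reduces it to $2C_B\|\widetilde{\rho}-\rho_\star\|_2^2$. It therefore remains to control the Euclidean distance between $\widetilde{\rho}$ and $\rho_\star$.

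To bound $\|\widetilde{\rho}-\rho_\star\|_2^2$, I will use the lower eigenvalue bound $\lambda_{min}(G^\prime\Omega G)>C_L$ from Assumption \ref{assei}. This turns the Euclidean norm into the $G^\prime\Omega G$ quadratic form:
\[
\|\widetilde{\rho}-\rho_\star\|_2^2 \leq C_L^{-1}(\widetilde{\rho}-\rho_\star)^\prime G^\prime\Omega G(\widetilde{\rho}-\rho_\star).
\]
Inserting $\overline{\rho}$ and using $(x+y)^\prime A(x+y)\leq 2x^\prime Ax+2y^\prime Ay$ for positive semidefinite $A$, this is at most
\[
2C_L^{-1}\big[(\overline{\rho}-\widetilde{\rho})^\prime G^\prime\Omega G(\overline{\rho}-\widetilde{\rho}) + (\overline{\rho}-\rho_\star)^\prime G^\prime\Omega G(\overline{\rho}-\rho_\star)\big].
\]

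Both pieces are already controlled. The second piece is $O(\varepsilon_N^{4\xi/(2\xi+1)})$ by Lemma \ref{lemma1}. For the first piece, I reuse the chain of inequalities inside the proof of Lemma \ref{lemma1}, starting from the line $(\overline{\rho}-\widetilde{\rho})^\prime G^\prime\Omega G(\overline{\rho}-\widetilde{\rho})$. That chain uses the minimizing property \eqref{eqle} of $\overline{\rho}$, the bound $|\Delta d_j^*|\leq 4C_d$ from Assumption \ref{assbd}, and $\lambda_{max}(\Omega)<C_\Omega$. It yields the bound $64C_\Omega qC_d^2C_1\varepsilon_N^{4\xi/(2\xi+1)}$. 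Combining the two pieces gives $\|\widetilde{\rho}-\rho_\star\|_2^2=O(\varepsilon_N^{4\xi/(2\xi+1)})$, and hence the claim.

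There is no serious obstacle; the main point of care is the conversion from the $G^\prime\Omega G$-weighted norm to the $B$-weighted norm. The argument deliberately avoids any direct comparison of $B$ with $G^\prime\Omega G$. Instead it passes through the Euclidean norm using two uniform eigenvalue bounds, the lower bound $C_L$ and the upper bound $C_B$, and these are exactly what Assumption \ref{assei} supplies. I would also note that the constants are uniform in $N$, so the $O(\cdot)$ is deterministic, matching Lemma \ref{lemma1}.
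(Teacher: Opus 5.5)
Your proof is correct and uses the same ingredients as the paper: the sparse-approximation bound \eqref{eqpf1}, the chain in the proof of Lemma~\ref{lemma1} bounding $(\overline{\rho}-\widetilde{\rho})^\prime G^\prime\Omega G(\overline{\rho}-\widetilde{\rho})$, Lemma~\ref{lemma1} itself, and the eigenvalue bounds $C_L$ and $C_B$ from Assumption~\ref{assei}. The only difference is bookkeeping: the paper converts $\|\overline{\rho}-\widetilde{\rho}\|_2$ and $\|\overline{\rho}-\rho_\star\|_2$ separately and uses a three-term split of $\alpha_0-b^\prime\rho_\star$ through both $\widetilde{\rho}$ and $\overline{\rho}$, whereas you insert $\overline{\rho}$ inside the $G^\prime\Omega G$ quadratic form before passing to the Euclidean norm.
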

\begin{proof}
  It follows from Assumption \ref{assei} and the proof of Lemma \ref{lemma1} that 
  \[
  C_L \|\overline{\rho}-\widetilde{\rho}\|_2^2 \leq \lambda_{min}(G^\prime\Omega G) \|\overline{\rho}-\widetilde{\rho}\|_2^2 \leq (\overline{\rho}-\widetilde{\rho})^\prime G^\prime\Omega G (\overline{\rho}-\widetilde{\rho}) = O(\varepsilon_N^{4\xi/(2\xi+1)}),
  \]
  which implies  
  \[
  \|\overline{\rho}-\widetilde{\rho}\|_2^2 = O(\varepsilon_N^{4\xi/(2\xi+1)}).
  \]
  Similarly, it follows from Lemma \ref{lemma1} and Assumption \ref{assei} that 
  \[
  C_L \|\overline{\rho} - \rho_\star\|_2^2 \leq \lambda_{min}(G^\prime\Omega G) \|\overline{\rho} - \rho_\star\|_2^2 \leq (\overline{\rho} - \rho_\star)^\prime G^\prime \Omega G (\overline{\rho} - \rho_\star) = O(\varepsilon_N^{4\xi/(2\xi+1)}),
  \]
  which implies 
  \[
  \|\overline{\rho} - \rho_\star\|_2^2 = O(\varepsilon_N^{4\xi/(2\xi+1)}).
  \]
  Consequently, it follows from \eqref{eqpf1} and the above results that   
  \begin{align*}
    & \|\alpha_0(Z_{it}) - b(Z_{it})^\prime \rho_\star\|^2 \\
    & \leq 3\|\alpha_0(Z_{it}) - b(Z_{it})^\prime \widetilde{\rho}\|^2 + 3\|b(Z_{it})^\prime \widetilde{\rho} - b(Z_{it})^\prime \overline{\rho}\|^2 + 3\|b(Z_{it})^\prime \overline{\rho} - b(Z_{it})^\prime \rho_\star\|^2 \\
    & = 3\|\alpha_0(Z_{it}) - b(Z_{it})^\prime \widetilde{\rho}\|^2 + 3(\widetilde{\rho}-\overline{\rho})^\prime B (\widetilde{\rho}-\overline{\rho}) + 3(\overline{\rho}-\rho_\star)^\prime B (\overline{\rho}-\rho_\star)\\
    & \leq 3C_1 s_0^{-2\xi} + 3 \lambda_{max}(B)\|\widetilde{\rho}-\overline{\rho}\|_2^2 + 3 \lambda_{max}(B)\|\overline{\rho}-\rho_\star\|_2^2 \\
    & \leq 3C_1 \varepsilon_N^{4\xi/(2\xi+1)} + 3C_B O(\varepsilon_N^{4\xi/(2\xi+1)}) + 3C_B O(\varepsilon_N^{4\xi/(2\xi+1)}) \\ 
    & = O(\varepsilon_N^{4\xi/(2\xi+1)}).
  \end{align*}
\end{proof}

\begin{lemma}
  \label{lemma4}
  Let $A$ be a $p \times q$ matrix and $B$ be a $q \times m$ matrix. We have 
  \[
  \|AB\|_\infty \leq q\|A\|_\infty\|B\|_\infty.
  \]
\end{lemma}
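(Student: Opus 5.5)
We prove Lemma \ref{lemma4} directly from the definition of the entrywise max norm $\|\cdot\|_\infty$, by bounding a single entry of the product uniformly. Fix indices $i\in\{1,\ldots,p\}$ and $j\in\{1,\ldots,m\}$. By definition of matrix multiplication,
\[
(AB)_{ij} = \sum_{l=1}^q A_{il}B_{lj}.
\]
We then apply the triangle inequality to this sum.

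Each summand satisfies $|A_{il}B_{lj}| = |A_{il}|\,|B_{lj}| \leq \|A\|_\infty \|B\|_\infty$, since $\|A\|_\infty=\max_{i,l}|A_{il}|$ and $\|B\|_\infty = \max_{l,j}|B_{lj}|$. Summing the $q$ terms gives
\[
|(AB)_{ij}| \leq \sum_{l=1}^q |A_{il}|\,|B_{lj}| \leq q\|A\|_\infty\|B\|_\infty.
\]

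This bound holds for every pair $(i,j)$. Taking the maximum over $i$ and $j$ therefore yields $\|AB\|_\infty \leq q\|A\|_\infty\|B\|_\infty$.

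There is no real obstacle here. The only point to check is that the factor $q$ is the inner dimension, i.e.\ the number of summands in each entry of the product. This is exactly what will be needed later, when bounding quantities such as $\|\widehat{G}_k^\prime\widehat{\Omega}\widehat{G}_k - G^\prime\Omega G\|_\infty$ in terms of entrywise errors with the dimension $q$ of the moment vector appearing as a constant factor.
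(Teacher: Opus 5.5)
Your proof is correct and is essentially the paper's argument. The paper bounds each entry $|a_j b_k|$ via H\"older's inequality as $\|a_j\|_1\|b_k\|_\infty$ and then uses $\|a_j\|_1 \le q\max_l |a_{jl}|$, which is your triangle-inequality-plus-termwise-bound computation written in vector-norm form.
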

\begin{proof}
  Let $a_1,\ldots,a_p$ and $b_1,\ldots,b_m$ denote the row vectors of $A$ and column vectors of $B$, respectively. By the H\"{o}lder's inequality, 
  \[
  \begin{aligned}
    & \|AB\|_\infty = \max_{1\leq j \leq p}\max_{1\leq k \leq m} |a_jb_k| \leq \max_{1\leq j \leq p}\max_{1\leq k \leq m} \|a_j\|_1 \|b_k\|_\infty \\
    = & \max_{1\leq j \leq p}\|a_j\|_1 \max_{1\leq k \leq m}\|b_k\|_\infty = \max_{1\leq j \leq p}\|a_j\|_1 \|B\|_\infty\\
    \leq & \max_{1\leq j \leq p} q \max_{1\leq l\leq q}|a_{jl}| \|B\|_\infty = q\|A\|_\infty\|B\|_\infty.
  \end{aligned}
  \]
\end{proof}

\begin{lemma}
  \label{lemma6}
  $\|\widehat{G}_k-G\|_\infty = O_p(\varepsilon_N)$.
\end{lemma}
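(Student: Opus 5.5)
The plan is an entrywise concentration argument combined with a union bound over the $pq$ entries. The one new feature compared with the cross-sectional case is the fold-specific demeaning.

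For $j\le q$ and $l\le p$, the $(j,l)$ entry of $\widehat{G}_k$ is
\[
\frac{1}{N-n_k-n_{k'}}\sum_{k^*\notin\{k,k'\}}\sum_{i\in\mathcal{F}_{k^*}}\Big[\Delta d_j(V_{it})-\frac{1}{n_{k^{*\prime}}}\sum_{m\in\mathcal{F}_{k^{*\prime}}}\Delta d_j(V_{mt})\Big]b_l(Z_{it}).
\]
Expanding gives two pieces.
\begin{itemize}
\item The first is $\frac{1}{N-n_k-n_{k'}}\sum_{i}\Delta d_j(V_{it})b_l(Z_{it})$. This is a mean of i.i.d.\ terms bounded by $2C_dC_b$, using Assumption \ref{assbd}.
\item The second is $\sum_{k^*}\frac{n_{k^*}}{N-n_k-n_{k'}}\,\bar b_{l,k^*}\,\bar{\Delta d}_{j,k^{*\prime}}$. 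Here $\bar b_{l,k^*}$ and $\bar{\Delta d}_{j,k^{*\prime}}$ are fold means, and $k^{*\prime}\neq k^*$, so they come from disjoint, hence independent, sets of individuals.
\end{itemize}

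Next I identify the population target. By the i.i.d.\ assumption and the independence of fold $k^*$ from fold $k^{*\prime}$, the entry of $G$ equals $\E[\Delta d_j b_l]-\E[\Delta d_j]\E[b_l]$. This is exactly what the definition $G=\E[\Delta d^*(V_{it})b(Z_{it})']$ gives, with $\Delta d^*$ demeaned by an independent fold. One should check that no $1/n$ correction appears beyond $O(1/N)=o(\varepsilon_N)$.

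The deviation of $\widehat{G}_k$ from $G$ in each entry is then bounded by three terms:
\begin{itemize}
\item $\big|\frac1n\sum\Delta d_jb_l-\E[\Delta d_jb_l]\big|$;
\item terms $|\bar b_{l,k^*}-\E b_l|\cdot|\bar{\Delta d}_{j,k^{*\prime}}|$;
\item terms $|\E b_l|\cdot|\bar{\Delta d}_{j,k^{*\prime}}-\E\Delta d_j|$.
\end{itemize}
Every fold mean is an average of about $N/K$ i.i.d.\ variables bounded by $2C_d$ or $C_b$. Hoeffding's inequality therefore gives a deviation probability $2\exp(-cNx^2)$ for each such average. A union bound over $j\le q$, $l\le p$ and the finitely many ($K$ fixed) folds bounds the maximum deviation by $C\sqrt{\ln(pq)/N}$ with probability approaching one. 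Since the fold means are bounded by constants, the products are handled by the triangle inequality.

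The main obstacle is matching this rate to $\varepsilon_N$, which is only assumed to decay no faster than $\sqrt{\ln q/N}$, while the union bound produces $\ln(pq)$. I would argue that $p$ grows at most polynomially in $q$, or more generally that the rate condition on $\varepsilon_N$ is intended to absorb $\ln p$. In that case $\ln(pq)\lesssim\ln q$, and taking $x=M\varepsilon_N$ with $M$ large makes $pq\exp(-cNM^2\varepsilon_N^2)\to0$. This yields $\|\widehat{G}_k-G\|_\infty=O_p(\varepsilon_N)$.
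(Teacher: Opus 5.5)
Your proposal is correct and follows essentially the same route as the paper. Both split $\widehat{G}_k-G$ into a centered average of $\Delta d\,b^\prime$ plus a product of fold means from disjoint (hence independent) folds, and control each with Hoeffding and a union bound. Both also identify $G=\E[\Delta d\,b^\prime]-\E[\Delta d]\,\E[b^\prime]$ exactly, with no $1/n$ correction. The $\ln(pq)$ versus $\ln q$ issue you flag needs no extra growth condition: $\lambda_{\min}(G^\prime\Omega G)>C_L>0$ in Assumption \ref{assei} forces $\operatorname{rank}(G)=p$, hence $p\le q$, which the paper uses tacitly when it writes the union bound as $2q^2\exp(\cdot)$ and replaces $\sqrt{\ln(p)/N}$ by $\sqrt{\ln(q)/N}$.
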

\begin{proof}
  By the definition of $\widehat{G}_k$, we have 
  \[
  \widehat{G}_k - G = \frac{1}{N-n_k-n_{k^\prime}}\sum_{k^*\in\{1,\ldots,K\}\backslash\{k,k^\prime\}}\sum_{i\in\mathcal{F}_{k^*}} \Big\{\Delta d^*(V_{it})b(Z_{it})^\prime - \E[\Delta d^*(V_{it})b(Z_{it})^\prime]\Big\}.
  \]
  Consider the sum for fold $k^*$ in the above equation,  
  \begin{align}
    & \frac{1}{n_{k^*}} \sum_{i\in\mathcal{F}_{k^*}} \Big\{\Delta d^*(V_{it})b(Z_{it})^\prime - \E[\Delta d^*(V_{it})b(Z_{it})^\prime]\Big\} \notag\\
    = & \frac{1}{n_{k^*}}\sum_{i\in\mathcal{F}_{k^*}}\bigg\{\Big[\Delta d(V_{it}) - \frac{1}{n_{k^{*\prime}}}\sum_{m\in\mathcal{F}_{k^{*\prime}}}\Delta d(V_{mt})\Big]b(Z_{it})^\prime \notag \\
    & - \Big[\E[\Delta d(V_{it})b(Z_{it})^\prime] - \E[\Delta d(V_{it})]\E[b(Z_{it})^\prime]\Big]\bigg\} \notag \\
    = & \frac{1}{n_{k^*}} \sum_{i\in\mathcal{F}_{k^*}} \Big\{\Delta d(V_{it})b(Z_{it})^\prime - \E[\Delta d(V_{it})b(Z_{it})^\prime] \Big\} \notag \\
    & - \bigg[\frac{1}{n_{k^{*\prime}}}\sum_{m\in\mathcal{F}_{k^{*\prime}}}\Delta d(V_{mt})\frac{1}{n_{k^*}}\sum_{i\in\mathcal{F}_{k^*}}b(Z_{it})^\prime - \E[\Delta d(V_{it})]\E[b(Z_{it})^\prime]\bigg]. \label{eqdg}
  \end{align}

  By Assumption \ref{assbd}, 
  \[
  |\Delta d_j(V_{it})b_l(Z_{it})| \leq 2C_dC_b.
  \]
  By the Hoeffding's inequality, for all $j$ and $l$, 
  \[
  \begin{aligned}
    \rP\bigg(\bigg|\frac{1}{n_{k^*}}\sum_{i\in\mathcal{F}_{k^*}} \Big\{\Delta d_j(V_{it})b_l(Z_{it}) - \E[\Delta d_j(V_{it})b_l(Z_{it})]\Big\} \bigg| \geq t \bigg) & \leq 2\exp\Big(-\frac{2n_{k^*}^2t^2}{16n_{k^*}C_d^2C_b^2}\Big) \\
    & = 2\exp\Big(-\frac{n_{k^*}t^2}{8C_d^2C_b^2}\Big).
  \end{aligned}
  \]
  Therefore, 
  \begin{align*}
    & \rP\bigg(\bigg\|\frac{1}{n_{k^*}}\sum_{i\in\mathcal{F}_{k^*}} \Big\{\Delta d(V_{it})b(Z_{it})^\prime - \E[\Delta d(V_{it})b(Z_{it})^\prime]\Big\}\bigg\|_\infty \geq t\bigg) \\
    = & \rP\left(\max_{j,l} \left|\frac{1}{n_{k^*}}\sum_{i\in\mathcal{F}_{k^*}} \Big\{\Delta d(V_{it})b(Z_{it})^\prime - \E[\Delta d(V_{it})b(Z_{it})^\prime]\Big\} \right| \geq t \right) \\
    \leq & \sum_{j=1}^q \sum_{l=1}^p \rP\left(\left|\frac{1}{n_{k^*}}\sum_{i\in\mathcal{F}_{k^*}} \Big\{\Delta d_j(V_{it})b_l(Z_{it}) - \E[\Delta d_j(V_{it})b_l(Z_{it})]\Big\} \right| \geq t \right) \\
    = & 2q^2\exp\Big(-\frac{n_{k^*}t^2}{8C_d^2C_b^2}\Big).
  \end{align*}
  Let $t = K\sqrt{\ln(q)/n_{k^*}}$, where $K$ is a positive constant to be determined. We have 
  \begin{align*}
    & \rP\bigg(\bigg\|\frac{1}{n_{k^*}}\sum_{i\in\mathcal{F}_{k^*}} \Big\{\Delta d(V_{it})b(Z_{it})^\prime - \E[\Delta d(V_{it})b(Z_{it})^\prime] \Big\} \bigg\|_\infty \geq K\sqrt{\ln(q)/n_{k^*}}\bigg)\\
    \leq & 2q^2\exp\Big(-\frac{\ln(q)K^2}{8C_d^2C_b^2}\Big) = 2q^{2-K^2/(8C_d^2C_b^2)}.
  \end{align*}
  For any $\delta > 0$, we have $2q^{2-K^2/(8C_d^2C_b^2)} < \delta$ when $K$ is large enough. Thus, 
  \[
  \rP\bigg(\bigg\|\frac{1}{n_{k^*}}\sum_{i\in\mathcal{F}_{k^*}} \Big\{\Delta d(V_{it})b(Z_{it})^\prime - \E[\Delta d(V_{it})b(Z_{it})^\prime]\Big\}\bigg\|_\infty\Big/\sqrt{\ln(q)/n_{k^*}} \geq K\bigg) \leq \delta
  \]
  for all $n_{k^*}$, which implies 
  \begin{equation}
    \label{eqpf3}
    \bigg\|\frac{1}{n_{k^*}}\sum_{i\in\mathcal{F}_{k^*}} \Big\{\Delta d(V_{it})b(Z_{it})^\prime - \E[\Delta d(V_{it})b(Z_{it})^\prime]\Big\}\bigg\|_\infty = O_p(\sqrt{\ln(q)/n_{k^*}}) = O_p(\sqrt{\ln(q)/N}).
  \end{equation}

  Next, note that 
  \begin{align*}
    & \bigg\|\frac{1}{n_{k^{*\prime}}}\sum_{m\in\mathcal{F}_{k^{*\prime}}}\Delta d(V_{mt})\frac{1}{n_{k^*}}\sum_{i\in\mathcal{F}_{k^*}}b(Z_{it})^\prime - \E[\Delta d(V_{it})]\E[b(Z_{it})^\prime]\bigg\|_\infty \\
    \leq & \bigg\|\frac{1}{n_{k^{*\prime}}}\sum_{m\in\mathcal{F}_{k^{*\prime}}}\Delta d(V_{mt})\Big[\frac{1}{n_{k^*}}\sum_{i\in\mathcal{F}_{k^*}}b(Z_{it})^\prime-\E[b(Z_{it})^\prime]\Big]\bigg\|_\infty \\
    & + \bigg\|\Big[\frac{1}{n_{k^{*\prime}}}\sum_{m\in\mathcal{F}_{k^{*\prime}}}\Delta d(V_{mt})-\E[\Delta d(V_{it})]\Big]\E[b(Z_{it})^\prime]\bigg\|_\infty.
  \end{align*}
  It follows from Assumption \ref{assbd} and the Hoeffding's inequality, we can similarly prove that 
  \[
    \bigg\|\frac{1}{n_{k^*}}\sum_{i\in\mathcal{F}_{k^*}}b(Z_{it})^\prime-\E[b(Z_{it})^\prime]\bigg\|_\infty = O_p(\sqrt{\ln(p)/N})
  \]
  and 
    \[
    \bigg\|\frac{1}{n_{k^{*\prime}}}\sum_{m\in\mathcal{F}_{k^{*\prime}}}\Delta d(V_{mt})-\E[\Delta d(V_{it})]\bigg\|_\infty = O_p(\sqrt{\ln(q)/N}).
  \]
  Thus, we have by the Assumption \ref{assbd} that 
  \begin{align*}
    & \bigg\|\frac{1}{n_{k^{*\prime}}}\sum_{m\in\mathcal{F}_{k^{*\prime}}}\Delta d(V_{mt})\Big[\frac{1}{n_{k^*}}\sum_{i\in\mathcal{F}_{k^*}}b(Z_{it})^\prime-\E[b(Z_{it})^\prime]\Big]\bigg\|_\infty \\
    = & \bigg\|\frac{1}{n_{k^{*\prime}}}\sum_{m\in\mathcal{F}_{k^{*\prime}}}\Delta d(V_{mt})\bigg\|_\infty \bigg\|\frac{1}{n_{k^*}}\sum_{i\in\mathcal{F}_{k^*}}b(Z_{it})^\prime-\E[b(Z_{it})^\prime]\bigg\|_\infty \\
    = & O_p(1)O_p(\sqrt{\ln(p)/N}) = O_p(\sqrt{\ln(q)/N})
  \end{align*}
  and 
  \begin{align*}
    & \bigg\|\Big[\frac{1}{n_{k^{*\prime}}}\sum_{m\in\mathcal{F}_{k^{*\prime}}}\Delta d(V_{mt})-\E[\Delta d(V_{it})]\Big]\E[b(Z_{it})^\prime]\bigg\|_\infty \\
    = & \bigg\|\frac{1}{n_{k^{*\prime}}}\sum_{m\in\mathcal{F}_{k^{*\prime}}}\Delta d(V_{mt})-\E[\Delta d(V_{it})]\bigg\|_\infty \|\E[b(Z_{it})^\prime]\|_\infty \\
    = & O_p(\sqrt{\ln(q)/N}) O(1) = O_p(\sqrt{\ln(q)/N}).
  \end{align*}
  These would imply that 
  \begin{equation}
    \label{eqpf2}
    \bigg\|\frac{1}{n_{k^{*\prime}}}\sum_{m\in\mathcal{F}_{k^{*\prime}}}\Delta d(V_{mt})\frac{1}{n_{k^*}}\sum_{i\in\mathcal{F}_{k^*}}b(Z_{it})^\prime - \E[\Delta d(V_{it})]\E[b(Z_{it})^\prime]\bigg\|_\infty = O_p(\sqrt{\ln(q)/N}).
  \end{equation}
  
  By the triangle inequality, \eqref{eqdg}, \eqref{eqpf3}, \eqref{eqpf2}, and the definition of $\varepsilon_N$ that $\sqrt{\ln(q)/N} \leq \varepsilon_N$, we have
  \[
  \bigg\|\frac{1}{n_{k^*}} \sum_{i\in\mathcal{F}_{k^*}} \Big\{\Delta d^*(V_{it})b(Z_{it})^\prime - \E[\Delta d^*(V_{it})b(Z_{it})^\prime]\Big\}\bigg\|_\infty = O_p(\varepsilon_N).
  \]
  Finally, it follows from the triangle inequality that 
  \begin{align*}
    & \|\widehat{G}_k - G\|_\infty \\
    \leq & \sum_{k^*\in\{1,\ldots,K\}\backslash\{k,k^\prime\}} \frac{n_{k^*}}{N-n_k-n_{k^\prime}} \bigg\|\frac{1}{n_{k^*}}\sum_{i\in\mathcal{F}_{k^*}} \Big\{\Delta d^*(V_{it})b(Z_{it})^\prime - \E[\Delta d^*(V_{it})b(Z_{it})^\prime]\Big\}\bigg\|_\infty \\
    = & \sum_{k^*\in\{1,\ldots,K\}\backslash\{k,k^\prime\}} \Big(\frac{1}{K-2}+o(1)\Big) O_p(\varepsilon_N) = O_p(\varepsilon_N).
  \end{align*}
\end{proof}

\begin{lemma}
  \label{lemmam}
  $\|\widehat{M}_k-M\|_\infty = O_p(\varepsilon_N)$.
\end{lemma}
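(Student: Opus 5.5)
The argument mirrors the proof of Lemma \ref{lemma6}. First I would make the structure of $m(W_{it}^*,\Delta d_j^*)$ explicit. Since $\Delta d_j^*(V_{it}) = \Delta d_j(V_{it}) - n_{k^{*\prime}}^{-1}\sum_{m\in\mathcal{F}_{k^{*\prime}}}\Delta d_j(V_{mt})$, and the subtracted cross-fold mean involves only individuals $m\neq i$, its derivative with respect to $D_{i,t-s}$ is zero. Hence
\[
m(W_{it}^*,\Delta d_j^*) = \frac{\partial \Delta d_j(V_{it})}{\partial D_{i,t-s}} = \frac{\partial d_j(V_{it})}{\partial D_{i,t-s}},
\]
because the lagged part $d_j(V_{i,t-s-1})$ depends only on $\overline{D}_{i,t-q-s-1:t-s-1}$, which does not contain $D_{i,t-s}$. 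By Assumption \ref{assbd}, each summand is therefore bounded in absolute value by $C_m$ almost surely. Unlike $\widehat{G}_k$, it also depends only on individual $i$'s own data. This means no product-of-averages term analogous to \eqref{eqpf2} arises. The only caveat is that if the derivative of the fold mean were nonzero (it is not), one would handle it as in \eqref{eqdg}.

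Next I would write $M = \E[(\partial d_1(V_{it})/\partial D_{i,t-s},\ldots,\partial d_q(V_{it})/\partial D_{i,t-s})^\prime]$. This is consistent with the definition of $M$, since the demeaning does not alter the derivative. I would then decompose
\[
\widehat{M}_k - M = \sum_{k^*\in\{1,\ldots,K\}\backslash\{k,k^\prime\}} \frac{n_{k^*}}{N-n_k-n_{k^\prime}} \cdot \frac{1}{n_{k^*}}\sum_{i\in\mathcal{F}_{k^*}}\Big\{\frac{\partial d(V_{it})}{\partial D_{i,t-s}} - \E\Big[\frac{\partial d(V_{it})}{\partial D_{i,t-s}}\Big]\Big\}.
\]
Within each fold the summands are i.i.d. across $i$ and bounded by $C_m$. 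Applying Hoeffding's inequality to each of the $q$ coordinates, followed by a union bound, gives
\[
\rP\big(\|\cdot\|_\infty \geq t\big) \leq 2q\exp\!\big(-n_{k^*}t^2/(2C_m^2)\big).
\]
Choosing $t = K\sqrt{\ln(q)/n_{k^*}}$ with $K$ large makes this probability arbitrarily small. Each fold term is therefore $O_p(\sqrt{\ln(q)/n_{k^*}}) = O_p(\sqrt{\ln(q)/N})$, using $n_{k^*}\asymp N/K$ with $K$ fixed.

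Finally, the weights $n_{k^*}/(N-n_k-n_{k^\prime})$ are $1/(K-2)+o(1)$ and there are finitely many folds. The triangle inequality then gives $\|\widehat{M}_k - M\|_\infty = O_p(\sqrt{\ln(q)/N}) = O_p(\varepsilon_N)$, since $\varepsilon_N$ converges no faster than $\sqrt{\ln(q)/N}$.

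The main point requiring care is the first step: checking that the derivative kills the cross-fold demeaning term and the lagged component. Once that holds, the argument reduces to a simpler instance of Lemma \ref{lemma6}.
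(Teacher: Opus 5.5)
Your proof is correct and follows the paper's argument. In both, the derivative with respect to $D_{i,t-s}$ annihilates the cross-fold mean and the lagged term, leaving averages of $\partial d_j(V_{it})/\partial D_{i,t-s}$ bounded by $C_m$, to which Hoeffding's inequality and a union bound over the $q$ coordinates are applied. The only cosmetic difference is that the paper applies Hoeffding once to the pooled sample over $\mathcal{F}_{k,k^\prime}^c$, since these summands are i.i.d. across all of it, so your fold-by-fold decomposition is unnecessary but harmless.
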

\begin{proof}
Note that 
\begin{align*}
  & \|\widehat{M}_k-M\|_\infty \\
  = & \frac{1}{N-n_k-n_{k^\prime}} \sum_{i\in\mathcal{F}_{k,k^\prime}^c} \bigg\{\bigg(\frac{\partial \Delta d_1^*(V_{it})}{\partial D_{i,t-s}},\ldots,\frac{\partial \Delta d_q^*(V_{it})}{\partial D_{i,t-s}}\bigg)^\prime \\
  & \hspace{3cm} -\E\bigg[\bigg(\frac{\partial \Delta d_1^*(V_{it})}{\partial D_{i,t-s}},\ldots,\frac{\partial \Delta d_q^*(V_{it})}{\partial D_{i,t-s}}\bigg)^\prime\bigg]\bigg\} \\
  = & \frac{1}{N-n_k-n_{k^\prime}} \sum_{i\in\mathcal{F}_{k,k^\prime}^c} \bigg\{\bigg(\frac{\partial d_1(V_{it})}{\partial D_{i,t-s}},\ldots,\frac{\partial d_q(V_{it})}{\partial D_{i,t-s}}\bigg)^\prime \\
  & \hspace{3cm} -\E\bigg[\bigg(\frac{\partial d_1(V_{it})}{\partial D_{i,t-s}},\ldots,\frac{\partial d_q(V_{it})}{\partial D_{i,t-s}}\bigg)^\prime\bigg]\bigg\}.
\end{align*}
Note that Assumption \ref{assbd} bounds each derivative in the last line, so by Assumption \ref{assbd} and the Hoeffding's inequality,
\[
\rP\Big(\Big\|\widehat{M}_k-M\Big\|_\infty \geq t\Big) \leq 2q \exp\Big(-\frac{(N-n_k-n_{k^\prime})t^2}{2C_m^2}\Big).
\]
Then, by an argument similar to the derivation of \eqref{eqpf3}, 
\[
\Big\|\widehat{M}_k-M\Big\|_\infty = O_p(\varepsilon_N).
\]
\end{proof}

\begin{lemma}
  \label{lemmagh}
  $\|\widehat{G}_k^\prime\widehat{\Omega}\widehat{M}_k - G^\prime \Omega M\|_\infty = O_p(\varepsilon_N)$. 
\end{lemma}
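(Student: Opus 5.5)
The plan is to split the difference by adding and subtracting intermediate products, so that each piece is the product of one estimation error and bounded factors. Then Lemma \ref{lemma4} converts each product into a $\|\cdot\|_\infty$ bound. The telescoping decomposition is
\begin{align*}
\widehat{G}_k^\prime\widehat{\Omega}\widehat{M}_k - G^\prime \Omega M
= (\widehat{G}_k - G)^\prime\widehat{\Omega}\widehat{M}_k + G^\prime(\widehat{\Omega}-\Omega)\widehat{M}_k + G^\prime\Omega(\widehat{M}_k - M).
\end{align*}
By the triangle inequality, it suffices to show that each of the three terms is $O_p(\varepsilon_N)$ in $\|\cdot\|_\infty$.

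The ingredients are as follows.
\begin{itemize}
\item Lemma \ref{lemma6} gives $\|\widehat{G}_k - G\|_\infty = O_p(\varepsilon_N)$.
\item Assumption \ref{asscp} gives $\|\widehat{\Omega}-\Omega\|_\infty = O_p(\varepsilon_N)$.
\item Lemma \ref{lemmam} gives $\|\widehat{M}_k-M\|_\infty = O_p(\varepsilon_N)$.
\item For the bounded factors, $\|G\|_\infty \le 4C_dC_b$ by Assumption \ref{assbd}, since $|\Delta d_j^*|\le 4C_d$.
\item $\|M\|_\infty \le 2C_m$, because the derivative of $\Delta d_j^*$ with respect to $D_{i,t-s}$ equals that of $d_j(V_{it})$; the lagged and demeaning parts do not depend on $D_{i,t-s}$.
\item $\|\Omega\|_\infty \le \lambda_{max}(\Omega) < C_\Omega$.
\item Consequently $\|\widehat{\Omega}\|_\infty = O_p(1)$ and $\|\widehat{M}_k\|_\infty = O_p(1)$.
\end{itemize}

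Applying Lemma \ref{lemma4} twice to each triple product yields a bound of the form $q^2\|\cdot\|_\infty\|\cdot\|_\infty\|\cdot\|_\infty$. Each term is then $q^2\cdot O_p(1)\cdot O_p(1)\cdot O_p(\varepsilon_N)$.

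The main obstacle is the factor $q^2$, or at least $q$, produced by Lemma \ref{lemma4}, since $q$ may grow with $N$. If $q$ is treated as fixed or bounded, as the paper's earlier lemmas implicitly do (for instance, Lemma \ref{lemma1} carries $q$ inside a constant $C^*$), the result follows immediately.

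Otherwise, I would try to avoid the crude bound as follows.
\begin{itemize}
\item For the middle factor, bound $\Omega M$ using operator-norm control, $\|\Omega v\|_2 \le C_\Omega\|v\|_2$. Also exploit that $M = \E[\Delta d^*\alpha_0]$ has bounded $\ell_2$-type structure.
\item Bound each entry, e.g. $e_j^\prime G^\prime \Omega(\widehat{M}_k-M)$, by Cauchy--Schwarz as $\|Ge_j\|_2\, C_\Omega\,\|\widehat{M}_k-M\|_2$.
\end{itemize}
Here I expect $q$ to be absorbed into constants, consistent with the paper's convention. I would state this explicitly in the proof so that the rate is $O_p(\varepsilon_N)$ with constants depending on $q$, $C_b$, $C_d$, $C_m$ and $C_\Omega$.
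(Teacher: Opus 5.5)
Your proposal is correct and follows the paper's proof. Both use a three-term telescoping, two applications of Lemma \ref{lemma4}, Lemma \ref{lemma6}, Lemma \ref{lemmam} and Assumption \ref{asscp} for the error factors, and the boundedness from Assumption \ref{assbd} for the remaining factors. Your telescoping is ordered so that the first term keeps $\widehat{\Omega}$ rather than $\widehat{G}_k$, which only changes which factor must be $O_p(1)$. The paper likewise silently treats the resulting $q^2$ as a constant, so your explicit remark that $q$ must be bounded states the convention the paper's proof relies on implicitly.
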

\begin{proof}
  Note that
\[ 
\begin{aligned}
  \widehat{G}_k^\prime\widehat{\Omega}\widehat{M}_k - G^\prime \Omega M & = \widehat{G}_k^\prime (\widehat{\Omega} - \Omega) \widehat{M}_k + (\widehat{G}_k-G)^\prime \Omega \widehat{M}_k + G^\prime \Omega (\widehat{M}_k - M),
\end{aligned}
\]
and by Assumption \ref{assbd}, $\|G\|_\infty = O(1)$, $\|\widehat{G}_k\|_\infty = O_p(1)$, and $\|\widehat{M}_k\|_\infty = O_p(1)$. Therefore, by Lemma \ref{lemma4}, Lemma \ref{lemma6}, Assumption \ref{asscp}, and Lemma \ref{lemmam}, we can bound the $\infty$-norm of each term on the RHS as follows.
\begin{gather*}
  \|\widehat{G}_k^\prime(\widehat{\Omega}-\Omega)\widehat{M}_k\|_\infty \leq q^2 \|\widehat{G}_k\|_\infty \|\widehat{\Omega}-\Omega\|_\infty \|\widehat{M}_k\|_\infty = O_p(1)O_p(\varepsilon_N)O_p(1) = O_p(\varepsilon_N), \\
  \|(\widehat{G}_k-G)^\prime\Omega\widehat{M}_k\|_\infty \leq q^2 \|\widehat{G}_k-G\|_\infty \|\Omega\|_\infty \|\widehat{M}_k\|_\infty = O_p(\varepsilon_N)O(1)O_p(1) = O_p(\varepsilon_N), \\
  \|G^\prime\Omega (\widehat{M}_k-M)\|_\infty \leq q^2 \|G\|_\infty \|\Omega\|_\infty \|\widehat{M}_k-M\|_\infty = O(1)O(1)O_p(\varepsilon_N) = O_p(\varepsilon_N).
\end{gather*}
The conclusion then follows from the above results and the triangle inequality.
\end{proof}

\begin{lemma}
  \label{lemma7}
  $\|\widehat{G}_k^\prime\widehat{\Omega}\widehat{G}_k\rho_\star - G^\prime \Omega G \rho_\star\|_\infty = O_p(\varepsilon_N)$.
\end{lemma}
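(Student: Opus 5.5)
The argument mirrors Lemma \ref{lemmagh}: split the difference into pieces, each containing exactly one estimation error. Write
\[
\widehat{G}_k^\prime\widehat{\Omega}\widehat{G}_k - G^\prime\Omega G = \widehat{G}_k^\prime(\widehat{\Omega}-\Omega)\widehat{G}_k + (\widehat{G}_k-G)^\prime\Omega\widehat{G}_k + G^\prime\Omega(\widehat{G}_k-G).
\]
By Lemma \ref{lemma4}, applied twice, each piece has $\infty$-norm at most $q^2$ times the product of the $\infty$-norms of its three factors. Assumption \ref{assbd} gives $\|G\|_\infty = O(1)$ and $\|\widehat{G}_k\|_\infty = O_p(1)$, since every entry is bounded by $4C_dC_b$. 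Assumption \ref{asscp} gives $\|\Omega\|_\infty = O(1)$ and $\|\widehat{\Omega}-\Omega\|_\infty = O_p(\varepsilon_N)$, and Lemma \ref{lemma6} gives $\|\widehat{G}_k-G\|_\infty = O_p(\varepsilon_N)$. Hence $\|\widehat{G}_k^\prime\widehat{\Omega}\widehat{G}_k - G^\prime\Omega G\|_\infty = O_p(\varepsilon_N)$. As in Lemma \ref{lemmagh}, this treats $q$ as fixed; otherwise a factor of $q^2$ has to be carried along.

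Next, multiply by $\rho_\star$. For a matrix $A$ and a vector $v$ we have $\|Av\|_\infty \le \|A\|_\infty\|v\|_1$, so
\[
\|\widehat{G}_k^\prime\widehat{\Omega}\widehat{G}_k\rho_\star - G^\prime\Omega G\rho_\star\|_\infty \le \|\widehat{G}_k^\prime\widehat{\Omega}\widehat{G}_k - G^\prime\Omega G\|_\infty\,\|\rho_\star\|_1 ,
\]
and it remains to show $\|\rho_\star\|_1 = O(1)$. I expect this to be the main obstacle, because Lemma \ref{lemma3} controls $\|\overline{\rho}-\rho_\star\|_2$ but not an $\ell_1$ norm.

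To bound $\|\rho_\star\|_1$, I would proceed in three steps.

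\begin{enumerate}
\item \emph{Bound $\overline{\rho}$.} By \eqref{eqrho}, $\overline{\rho} = (G^\prime\Omega G)^{-1}G^\prime\Omega M$, which is well defined by Assumption \ref{assei}. Since $M = \E[\Delta d^*\alpha_0]$, Assumptions \ref{asscp}, \ref{assbd} and \ref{assa} give $\|\Omega^{1/2}M\|_2 = O(1)$, and Assumption \ref{assei} then gives $\|\overline{\rho}\|_2 \le C_L^{-1/2}\|\Omega^{1/2}M\|_2 = O(1)$.

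\item \emph{Bound $\rho_\star$ in $\ell_2$.} The proof of Lemma \ref{lemma3} gives $\|\rho_\star - \overline{\rho}\|_2 = O(\varepsilon_N^{2\xi/(2\xi+1)}) = o(1)$, so $\|\rho_\star\|_2 = O(1)$.

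\item \emph{Pass to $\ell_1$.} I would work with the penalized definition \eqref{eqdrs}. Comparing the objective at $\rho_\star$ with its value at $\widetilde{\rho}$, as in the first display of Lemma \ref{lemma1}, gives
\[
2\varepsilon_N\sum_{j\in J_0^c}|\rho_{\star j}| \le O(\varepsilon_N^{4\xi/(2\xi+1)}),
\]
hence $\sum_{j\in J_0^c}|\rho_{\star j}| = O(\varepsilon_N^{(2\xi-1)/(2\xi+1)})$. On the support, Cauchy--Schwarz gives $\sum_{j\in J_0}|\rho_{\star j}| \le \sqrt{s_0}\,\|\rho_\star\|_2 = O(\varepsilon_N^{-1/(2\xi+1)})$.
\end{enumerate}

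The support bound in step 3 diverges, so these crude estimates only yield $\|\rho_\star\|_1 = O(\varepsilon_N^{-1/(2\xi+1)})$ when $\xi \ge 1/2$. I would therefore first try to exploit that $\|\overline{\rho}\|_1$ is bounded. If the approximating $\ell_j$ in Assumption \ref{asscs} have bounded $\ell_1$ norm, as is standard in Chernozhukov et al., then $\|\widetilde{\rho}\|_1 = O(1)$. Combined with the $\ell_1$ cone-type control on $J_0^c$ from \eqref{eqdrs}, this gives $\|\rho_\star\|_1 = O(1)$ and the claimed $O_p(\varepsilon_N)$ rate. If such a bound is not available, I would state the lemma with the extra factor $\|\rho_\star\|_1$. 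That factor can then be absorbed into the choice of $r$ (recall $\varepsilon_N = o(r)$), which is presumably how it enters the subsequent basic-inequality argument for Theorem \ref{th1}.
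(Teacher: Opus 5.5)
Your decomposition of $\widehat{G}_k^\prime\widehat{\Omega}\widehat{G}_k - G^\prime\Omega G$ matches the paper. So do the use of Lemma \ref{lemma4} with $q$ treated as fixed, the bound $\|Av\|_\infty\le\|A\|_\infty\|v\|_1$, and your Steps 1--2 giving $\|\rho_\star\|_2 = O(1)$.

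\textbf{The gap is in Step 3.} You never show $\|\rho_\star\|_1 = O(1)$ under the paper's hypotheses. Your sparsity route adds assumptions the paper does not make. It needs bounded $\ell_1$ norm of the approximating coefficients $\ell_j$ in Assumption \ref{asscs}. It also needs $\xi\ge 1/2$, so that $\sum_{j\in J_0^c}|\rho_{\star j}| = O(\varepsilon_N^{(2\xi-1)/(2\xi+1)})$ stays bounded. Your fallback changes the statement, and it is not harmless downstream:
\begin{itemize}
\item Lemma \ref{lemma13} uses $\|\rho_\star\|_1 = O(1)$ directly to get $\|\widehat{\rho}_k\|_1 = O_p(1)$, which the proofs of Theorems \ref{th2} and \ref{prop1} rely on.
\item Absorbing the factor into $r$ would change the hypothesis of Theorem \ref{th1} to $\varepsilon_N\|\rho_\star\|_1 = o(r)$.
\end{itemize}

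\textbf{The missing observation is elementary.} You already treat $q$ as bounded, and not only in the $q^2$ factors. Step 1 also needs it: $\|\Omega^{1/2}M\|_2^2\le C_\Omega\|M\|_2^2\le C_\Omega\, q\max_j M_j^2$ is $O(1)$ only because $q$ is bounded.

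Now use Assumption \ref{assei}. It requires $\lambda_{min}(G^\prime\Omega G)>C_L>0$, where $G^\prime\Omega G$ is $p\times p$ and $G$ is $q\times p$. So $G\rho\neq 0$ for every $\rho\neq 0$, which means $G$ has full column rank and $p\le q$. Hence $p$ is bounded too, and $\|\rho_\star\|_1\le\sqrt{p}\,\|\rho_\star\|_2 = O(1)$ follows at once from your Step 2. This is exactly how the paper closes the argument, in display \eqref{eqrs}. No extra sparsity or $\ell_1$ condition is needed.

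You could object that bounded $p$ sits uneasily with the paper's remark that $p$ may exceed $n$. Under the assumptions as actually used, including by you in Step 1, the simple bound is valid.
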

\begin{proof}
By Lemma \ref{lemma1}, $(\overline{\rho} - \rho_\star)^\prime G^\prime \Omega G (\overline{\rho} - \rho_\star)=O(\varepsilon_N^{4\xi/(2\xi+1)})=o(1)$. By the FOC of \eqref{eqrho}, $G^\prime \Omega M = G^\prime \Omega G \overline{\rho}$. Thus, 
\[
M^\prime \Omega M = (M-G\overline{\rho}+G\overline{\rho})^\prime \Omega (M-G\overline{\rho}+G\overline{\rho}) = (M-G\overline{\rho})^\prime \Omega (M-G\overline{\rho}) + \overline{\rho}^\prime G^\prime \Omega G \overline{\rho},
\]
implying $\overline{\rho}^\prime G^\prime \Omega G \overline{\rho} \leq M^\prime \Omega M$. By Assumption \ref{asscp} and \ref{assbd}, we have 
\[
M^\prime \Omega M \leq \lambda_{max}(\Omega) \|M\|_2^2 \leq C_\Omega q C_m^2 = O(1).
\]
Using the above results, it follows from Assumption \ref{assei} that 
\[
\begin{aligned}
  & C_L \|\rho_\star\|_2^2 \leq \lambda_{min}(G^\prime \Omega G) \|\rho_\star\|_2^2 \leq \rho_\star^\prime G^\prime \Omega G \rho_\star = (\rho_\star-\overline{\rho}+\overline{\rho})^\prime G^\prime \Omega G (\rho_\star-\overline{\rho}+\overline{\rho}) \\
  \leq & 2 (\rho_\star-\overline{\rho})^\prime G^\prime \Omega G (\rho_\star-\overline{\rho}) + 2 \overline{\rho}^\prime G^\prime \Omega G \overline{\rho} = o(1) + O(1) = O(1),
\end{aligned}
\]
which implies $\|\rho_\star\|_2 = O(1)$, leading to 
\begin{equation}
  \label{eqrs}
  \|\rho_\star\|_1 \leq \sqrt{p}\|\rho_\star\|_2 = O(1).
\end{equation}

Note that
\[ 
  \widehat{G}_k^\prime\widehat{\Omega}\widehat{G}_k - G^\prime \Omega G = \widehat{G}_k^\prime(\widehat{\Omega}-\Omega)\widehat{G}_k + (\widehat{G}_k-G)^\prime \Omega \widehat{G}_k + G^\prime \Omega (\widehat{G}_k-G).
\]
and by Assumption \ref{assbd}, $\|G\|_\infty = O(1)$ and $\|\widehat{G}\|_\infty = O_p(1)$. Therefore, by Lemma \ref{lemma4}, Lemma \ref{lemma6}, and Assumption \ref{asscp}, we can bound the $\infty$-norm of each term on the RHS as follows.
\begin{gather*}
  \|\widehat{G}_k^\prime(\widehat{\Omega}-\Omega)\widehat{G}_k\|_\infty \leq q^2 \|\widehat{G}_k\|_\infty \|\widehat{\Omega}-\Omega\|_\infty \|\widehat{G}_k\|_\infty = O_p(1)O_p(\varepsilon_N)O_p(1) = O_p(\varepsilon_N), \\
  \|(\widehat{G}_k-G)^\prime\Omega\widehat{G}_k\|_\infty \leq q^2 \|\widehat{G}_k-G\|_\infty \|\Omega\|_\infty \|\widehat{G}_k\|_\infty = O_p(\varepsilon_N)O(1)O_p(1) = O_p(\varepsilon_N), \\
  \|G^\prime \Omega (\widehat{G}_k-G)\|_\infty \leq q^2 \|G\|_\infty \|\Omega\|_\infty \|\widehat{G}_k-G\|_\infty = O(1)O(1)O_p(\varepsilon_N) = O_p(\varepsilon_N).
\end{gather*}
By the triangle inequality, these would imply that $\|\widehat{G}_k^\prime\widehat{\Omega}\widehat{G}_k - G^\prime \Omega G\|_\infty = O_p(\varepsilon_N)$. 

The result is obtained by 
\[
\begin{aligned}
  & \|\widehat{G}_k^\prime\widehat{\Omega}\widehat{G}_k\rho_\star - G^\prime \Omega G \rho_\star\|_\infty = \|(\widehat{G}_k^\prime\widehat{\Omega}\widehat{G}_k - G^\prime \Omega G) \rho_\star\|_\infty \\
  \leq & \|(\widehat{G}_k^\prime\widehat{\Omega}\widehat{G}_k - G^\prime \Omega G)\|_\infty \|\rho_\star\|_1 = O_p(\varepsilon_N) O(1) = O_p(\varepsilon_N).
\end{aligned}
\]
\end{proof}

\begin{lemma}
  \label{lemma8}
  For any $\widehat{J}$ such that $(\rho_\star)_{\widehat{J}^c}=\mathbf{0}$, with probability approaching one, we have 
  \[
  (\widehat{\rho}_k-\rho_\star)^\prime \widehat{G}_k^\prime \widehat{\Omega}\widehat{G}_k (\widehat{\rho}_k-\rho_\star) \leq 3r\|\widehat{\rho}_k-\rho_\star\|_1 \text{ and } \|(\widehat{\rho}_k-\rho_\star)_{\widehat{J}^c}\|_1 \leq 3\|(\widehat{\rho}_k-\rho_\star)_{\widehat{J}}\|_1.
  \]
\end{lemma}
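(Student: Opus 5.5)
We would follow the standard Lasso basic-inequality argument, with $\rho_\star$ playing the role of the oracle vector. Write $\widehat{Q}(\rho) = (\widehat{M}_k - \widehat{G}_k\rho)^\prime \widehat{\Omega}(\widehat{M}_k - \widehat{G}_k\rho)$ and $\delta = \widehat{\rho}_k - \rho_\star$.

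By optimality of $\widehat{\rho}_k$ in \eqref{eqop}, $\widehat{Q}(\widehat{\rho}_k) + 2r\|\widehat{\rho}_k\|_1 \leq \widehat{Q}(\rho_\star) + 2r\|\rho_\star\|_1$. Expanding the quadratic gives
\[
\delta^\prime \widehat{G}_k^\prime\widehat{\Omega}\widehat{G}_k\delta \leq 2\delta^\prime(\widehat{G}_k^\prime\widehat{\Omega}\widehat{M}_k - \widehat{G}_k^\prime\widehat{\Omega}\widehat{G}_k\rho_\star) + 2r(\|\rho_\star\|_1 - \|\widehat{\rho}_k\|_1).
\]

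The key step is to bound the cross term in $\infty$-norm. We decompose
\[
\widehat{G}_k^\prime\widehat{\Omega}\widehat{M}_k - \widehat{G}_k^\prime\widehat{\Omega}\widehat{G}_k\rho_\star = (\widehat{G}_k^\prime\widehat{\Omega}\widehat{M}_k - G^\prime\Omega M) + (G^\prime\Omega G\rho_\star - \widehat{G}_k^\prime\widehat{\Omega}\widehat{G}_k\rho_\star) + G^\prime\Omega G(\overline{\rho}-\rho_\star).
\]
This uses the FOC $G^\prime\Omega M = G^\prime\Omega G\overline{\rho}$ from \eqref{eqrho}. The three pieces are $O_p(\varepsilon_N)$ by Lemma \ref{lemmagh}, Lemma \ref{lemma7} and Lemma \ref{lemma2}, respectively. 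Since $\varepsilon_N = o(r)$, with probability approaching one the whole vector has $\infty$-norm at most $r/2$. Hölder's inequality then bounds the cross term by $r\|\delta\|_1$.

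This is the only probabilistic step, and the main obstacle is confirming that the constant works out. Specifically, we need the cross term to be at most $r\|\delta\|_1$, not $2r\|\delta\|_1$, which is exactly what the $r/2$ threshold guarantees.

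It remains to handle the penalty difference. Since $(\rho_\star)_{\widehat{J}^c}=0$, we have
\[
\|\rho_\star\|_1 - \|\widehat{\rho}_k\|_1 = \|(\rho_\star)_{\widehat J}\|_1 - \|(\widehat\rho_k)_{\widehat J}\|_1 - \|\delta_{\widehat J^c}\|_1 \leq \|\delta_{\widehat J}\|_1 - \|\delta_{\widehat J^c}\|_1.
\]
Combining with the cross-term bound gives
\[
\delta^\prime \widehat{G}_k^\prime\widehat{\Omega}\widehat{G}_k\delta \leq r\|\delta\|_1 + 2r\|\delta_{\widehat J}\|_1 - 2r\|\delta_{\widehat J^c}\|_1 = 3r\|\delta_{\widehat J}\|_1 - r\|\delta_{\widehat J^c}\|_1.
\]
The right side is at most $3r\|\delta\|_1$, which is the first claim. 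Because the left side is nonnegative ($\widehat{\Omega}$ is positive semidefinite w.p.a.1, since it is close to the positive definite $\Omega$), we also get $\|\delta_{\widehat J^c}\|_1 \leq 3\|\delta_{\widehat J}\|_1$, which is the cone condition. Both statements hold on the same event, whose probability tends to one uniformly over the choice of $\widehat J$.
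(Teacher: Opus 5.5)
Your proposal is correct and is essentially the paper's own proof. It uses the same basic inequality from the optimality of $\widehat{\rho}_k$, and the same three-piece decomposition of $\widehat{G}_k^\prime\widehat{\Omega}\widehat{M}_k-\widehat{G}_k^\prime\widehat{\Omega}\widehat{G}_k\rho_\star$, controlled by Lemmas \ref{lemmagh}, \ref{lemma7} and \ref{lemma2} together with H\"older and $\varepsilon_N=o(r)$. It also uses the same split of the $\ell_1$ penalty over $\widehat{J}$ and $\widehat{J}^c$. The only cosmetic difference is that you read off both conclusions from the single bound $3r\|\delta_{\widehat J}\|_1-r\|\delta_{\widehat J^c}\|_1$, whereas the paper gets the first from $\|\rho_\star\|_1\le\|\widehat{\rho}_k\|_1+\|\delta\|_1$ and the cone condition separately after dropping the quadratic term.

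One minor caveat concerns your justification that $\widehat{\Omega}\succeq 0$ with probability approaching one. The condition $\|\widehat{\Omega}-\Omega\|_\infty=O_p(\varepsilon_N)$ alone does not guarantee this, because Assumption \ref{asscp} gives no uniform lower bound on $\lambda_{\min}(\Omega)$. As the paper does, you should simply take $\widehat{\Omega}$ positive (semi)definite by construction.
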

\begin{proof}
  As $\widehat{\Omega}$ is positive definite, we can write 
\[
\widehat{\rho}_k = \mathop{\arg\min}_{\rho\in\mathbb{R}^q} \|\widehat{\Omega}^{1/2}(\widehat{M}_k-\widehat{G}_k\rho)\|_2^2 + 2r\|\rho\|_1.
\]
Therefore,
\begin{equation}
  \label{pf2}
  \|\widehat{\Omega}^{1/2}(\widehat{M}_k-\widehat{G}_k\widehat{\rho}_k)\|_2^2 + 2r\|\widehat{\rho}_k\|_1 \leq \|\widehat{\Omega}^{1/2}(\widehat{M}_k-\widehat{G}_k\rho_\star)\|_2^2 + 2r\|\rho_\star\|_1.
\end{equation}
Note that 
\begin{align*}
  & \|\widehat{\Omega}^{1/2}(\widehat{M}_k-\widehat{G}_k\widehat{\rho}_k)\|_2^2 - \|\widehat{\Omega}^{1/2}(\widehat{M}_k-\widehat{G}_k\rho_\star)\|_2^2 \\
  = & \widehat{\rho}_k^\prime \widehat{G}_k^\prime \widehat{\Omega} \widehat{G}_k\widehat{\rho}_k - \rho_\star^\prime \widehat{G}_k^\prime \widehat{\Omega} \widehat{G}_k\rho_\star - 2(\widehat{G}_k^\prime\widehat{\Omega}\widehat{M}_k)^\prime(\widehat{\rho}_k-\rho_\star) \\ 
  = & (\widehat{\rho}_k-\rho_\star)^\prime\widehat{G}_k^\prime\widehat{\Omega}\widehat{G}_k(\widehat{\rho}_k-\rho_\star) + 2\rho_\star^\prime\widehat{G}_k^\prime \widehat{\Omega} \widehat{G}_k(\widehat{\rho}_k-\rho_\star) - 2(\widehat{G}_k^\prime\widehat{\Omega}\widehat{M}_k)^\prime(\widehat{\rho}_k-\rho_\star) \\ 
  = & \|\widehat{\Omega}^{1/2}\widehat{G}_k (\widehat{\rho}_k-\rho_\star)\|_2^2 - 2(\widehat{G}_k^\prime\widehat{\Omega}\widehat{M}_k-\widehat{G}_k^\prime\widehat{\Omega}\widehat{G}_k\rho_\star)^\prime(\widehat{\rho}_k-\rho_\star).
\end{align*}
Thus, it follows from \eqref{pf2} that 
\begin{equation}
  \label{eqb3}
  \|\widehat{\Omega}^{1/2}\widehat{G}_k (\widehat{\rho}_k-\rho_\star)\|_2^2 + 2r\|\widehat{\rho}_k\|_1 \leq 2(\widehat{G}_k^\prime\widehat{\Omega}\widehat{M}_k-\widehat{G}_k^\prime\widehat{\Omega}\widehat{G}_k\rho_\star)^\prime(\widehat{\rho}_k-\rho_\star) + 2r\|\rho_\star\|_1.
\end{equation}

By the FOC of \eqref{eqrho}, we have $G^\prime\Omega M = G^\prime\Omega G \overline{\rho}$. Then, it follows from the triangle inequality, Lemma \ref{lemma2}, Lemma \ref{lemmagh}, and Lemma \ref{lemma7} that 
\begin{align*}
  & \|\widehat{G}_k^\prime\widehat{\Omega}\widehat{M}_k - \widehat{G}_k^\prime\widehat{\Omega}\widehat{G}_k\rho_\star\|_\infty \\
  \leq & \|\widehat{G}_k^\prime\widehat{\Omega}\widehat{M}_k - G^\prime\Omega M\|_\infty + \|G^\prime\Omega M - G^\prime \Omega G \rho_\star\|_\infty + \|G^\prime \Omega G \rho_\star - \widehat{G}_k^\prime\widehat{\Omega}\widehat{G}_k\rho_\star\|_\infty \\
  = & \|\widehat{G}_k^\prime\widehat{\Omega}\widehat{M}_k - G^\prime\Omega M\|_\infty + \|G^\prime\Omega G \overline{\rho} - G^\prime \Omega G \rho_\star\|_\infty + \|G^\prime \Omega G \rho_\star - \widehat{G}_k^\prime\widehat{\Omega}\widehat{G}_k\rho_\star\|_\infty \\
  = & O_p(\varepsilon_N).
\end{align*}
Therefore, by the H\"{o}lder's inequality we have 
\[
|(\widehat{G}_k^\prime\widehat{\Omega}\widehat{M}_k-\widehat{G}_k^\prime\widehat{\Omega}\widehat{G}_k\rho_\star)^\prime(\widehat{\rho}_k-\rho_\star)| \leq \|\widehat{G}_k^\prime\widehat{\Omega}\widehat{M}_k-\widehat{G}_k^\prime\widehat{\Omega}\widehat{G}_k\rho_\star\|_\infty \|\widehat{\rho}_k-\rho_\star\|_1 = O_p(\varepsilon_N)\|\widehat{\rho}_k-\rho_\star\|_1.
\]
By \eqref{eqb3} and $\varepsilon_N = o(r)$, with probability approaching one, 
\begin{equation}
  \label{eqb4}
  \|\widehat{\Omega}^{1/2}\widehat{G}_k (\widehat{\rho}_k-\rho_\star)\|_2^2 + 2r\|\widehat{\rho}_k\|_1 \leq O_p(\varepsilon_N)\|\widehat{\rho}_k-\rho_\star\|_1 + 2r\|\rho_\star\|_1 \leq r\|\widehat{\rho}_k-\rho_\star\|_1 + 2r\|\rho_\star\|_1.
\end{equation}
The first conclusion is then obtained by $\|\rho_\star\|_1 \leq \|\widehat{\rho}_k\|_1 + \|\widehat{\rho}_k-\rho_\star\|_1$ and subtracting $2r\|\widehat{\rho}_k\|_1$ from both sides.

Next, since $\|\widehat{\Omega}^{1/2}\widehat{G}_k (\widehat{\rho}_k-\rho_\star)\|_2^2 \geq 0$, it follows from \eqref{eqb4} that $2r\|\widehat{\rho}_k\|_1 \leq r\|\widehat{\rho}_k-\rho_\star\|_1 + 2r\|\rho_\star\|_1$, so dividing both sides by $r$ would give 
\begin{equation}
  \label{eqb5}
  2\|\widehat{\rho}_k\|_1 \leq \|\widehat{\rho}_k-\rho_\star\|_1 + 2\|\rho_\star\|_1.
\end{equation}
It follows by $(\rho_\star)_{\widehat{J}_c} = \mathbf{0}$ that 
\[
\|\widehat{\rho}_k\|_1 = \|\rho_\star+\widehat{\rho}_k-\rho_\star\|_1 = \|(\rho_\star)_{\widehat{J}} + (\widehat{\rho}_k-\rho_\star)_{\widehat{J}}\|_1 + \|(\widehat{\rho}_k-\rho_\star)_{\widehat{J}^c}\|_1
\]
and $\|\rho_\star\|_1 = \|(\rho_\star)_{\widehat{J}}\|_1$. Substituting these into \eqref{eqb5} gives 
\begin{align*}
  & 2\|(\rho_\star)_{\widehat{J}} + (\widehat{\rho}_k-\rho_\star)_{\widehat{J}}\|_1 + 2\|(\widehat{\rho}_k-\rho_\star)_{\widehat{J}^c}\|_1 \leq 2\|(\rho_\star)_{\widehat{J}}\|_1 + \|\widehat{\rho}_k-\rho_\star\|_1 \\
  = & 2\|(\rho_\star)_{\widehat{J}}\|_1 + \|(\widehat{\rho}_k-\rho_\star)_{\widehat{J}}\|_1 + \|(\widehat{\rho}_k-\rho_\star)_{\widehat{J}^c}\|_1 \\
  \leq & 2(\|(\rho_\star)_{\widehat{J}}+(\widehat{\rho}_k-\rho_\star)_{\widehat{J}}\|_1 + \|(\widehat{\rho}_k-\rho_\star)_{\widehat{J}}\|_1) + \|(\widehat{\rho}_k-\rho_\star)_{\widehat{J}}\|_1 + \|(\widehat{\rho}_k-\rho_\star)_{\widehat{J}^c}\|_1 \\
  = & 2\|(\rho_\star)_{\widehat{J}}+(\widehat{\rho}_k-\rho_\star)_{\widehat{J}}\|_1 + 3\|(\widehat{\rho}_k-\rho_\star)_{\widehat{J}}\|_1 + \|(\widehat{\rho}_k-\rho_\star)_{\widehat{J}^c}\|_1,
\end{align*}
which implies $\|(\widehat{\rho}_k-\rho_\star)_{\widehat{J}^c}\|_1 \leq 3\|(\widehat{\rho}_k-\rho_\star)_{\widehat{J}}\|_1$ after rearranging the terms.
\end{proof}

\begin{lemma}
  \label{lemma9}
  Define $J$ to be the vector of indices of nonzero elements of $\rho_\star$, we have $|J|\leq C^*\varepsilon_N^{-2/(2\xi+1)}$, where $C^* = (64 C_U C_\Omega q C_d^2+1) C_1$.
\end{lemma}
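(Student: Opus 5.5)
The plan is to read off the support size of $\rho_\star$ from its first-order (KKT) conditions, exactly as in the proof of Lemma \ref{lemma2}. Split $J = (J\cap J_0)\cup(J\cap J_0^c)$. The first piece is trivially bounded by $|J_0| = s_0 \leq C_1\varepsilon_N^{-2/(2\xi+1)}$ by the choice of $s_0$. So the work is to bound $|J\cap J_0^c|$ by $64 C_U C_\Omega q C_d^2 C_1\,\varepsilon_N^{-2/(2\xi+1)}$. Adding the two pieces then gives exactly $C^* = (64 C_U C_\Omega q C_d^2+1)C_1$.

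For $j\in J\cap J_0^c$ we have $\rho_{\star j}\neq 0$. The FOC for \eqref{eqdrs} then gives $e_j' G'\Omega G(\rho_\star-\overline{\rho}) = -\varepsilon_N\,\mathrm{sign}(\rho_{\star j})$, so this coordinate has absolute value exactly $\varepsilon_N$. Squaring and summing over these indices gives
\[
|J\cap J_0^c|\,\varepsilon_N^2 \leq \|G'\Omega G(\rho_\star-\overline{\rho})\|_2^2 .
\]

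Next I bound the right-hand side. Write $A = G'\Omega G$, which is positive semidefinite with $\lambda_{max}(A) < C_U$ by Assumption \ref{assei}. For any vector $x$, $\|Ax\|_2^2 = x'A^2x \leq \lambda_{max}(A)\,x'Ax$; this follows by diagonalizing $A$. Apply this with $x = \rho_\star-\overline{\rho}$ and use the explicit constant from the proof of Lemma \ref{lemma1}, not just its $O(\cdot)$ form. That proof shows
\[
(\overline{\rho}-\rho_\star)'G'\Omega G(\overline{\rho}-\rho_\star) \leq 64 C_\Omega q C_d^2 C_1\,\varepsilon_N^{4\xi/(2\xi+1)} .
\]
Combining,
\[
|J\cap J_0^c| \leq 64 C_U C_\Omega q C_d^2 C_1\,\varepsilon_N^{4\xi/(2\xi+1)-2}.
\]
Since $4\xi/(2\xi+1)-2 = -2/(2\xi+1)$, this is the required bound.

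The only delicate point is bookkeeping rather than analysis. I must track the non-asymptotic constant in Lemma \ref{lemma1}: the chain there ends with $64 C_\Omega q C_d^2 C_1 s_0^{-2\xi}$ and uses $s_0\geq \varepsilon_N^{-2/(2\xi+1)}$. I must also make sure the KKT equality holds with value exactly $\varepsilon_N$ on the active set off $J_0$, as opposed to the inequality $\leq\varepsilon_N$ that Lemma \ref{lemma2} uses.
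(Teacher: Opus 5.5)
Your proposal is correct and follows the paper's proof essentially line for line. You use the KKT equality $|e_j'G'\Omega G(\rho_\star-\overline{\rho})|=\varepsilon_N$ on $J\setminus J_0$, sum the squares, and bound the sum by $\lambda_{max}(G'\Omega G)$ times the explicit Lemma~\ref{lemma1} constant $64C_\Omega qC_d^2C_1\varepsilon_N^{4\xi/(2\xi+1)}$. You then add $|J_0|=s_0\le C_1\varepsilon_N^{-2/(2\xi+1)}$; using the non-asymptotic constant from Lemma~\ref{lemma1} rather than its $O(\cdot)$ form is exactly what the paper's proof does as well.
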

\begin{proof}
  For all $j\in J\backslash J_0$, the FOC of \eqref{eqdrs} implies that $|e_j G^\prime \Omega G(\rho_\star-\overline{\rho})| = \varepsilon_N$. Therefore, it follows that 
  \[
  \sum_{j\in J\backslash J_0} \big[e_j G^\prime \Omega G(\rho_\star-\overline{\rho})\big]^2 = \varepsilon_N^2 |J\backslash J_0|.
  \]
  In addition, 
  \begin{align*}
    \sum_{j\in J\backslash J_0} \big[e_j G^\prime \Omega G(\rho_\star-\overline{\rho})\big]^2 & \leq \sum_{j=1}^p \big[e_j G^\prime \Omega G(\rho_\star-\overline{\rho})\big]^2 \\
    & = (\rho_\star-\overline{\rho})^\prime G^\prime \Omega G \Big(\sum_{j=1}^p e_je_j^\prime \Big) G^\prime \Omega G (\rho_\star-\overline{\rho}) \\
    & = (\rho_\star-\overline{\rho})^\prime (G^\prime \Omega G)^2 (\rho_\star-\overline{\rho}) \\
    & \leq \lambda_{max}(G^\prime \Omega G) (\rho_\star-\overline{\rho})^\prime G^\prime \Omega G (\rho_\star-\overline{\rho}) \\
    & \leq 64 C_U C_\Omega q C_d^2 C_1 \varepsilon_N^{4\xi/(2\xi+1)},
  \end{align*}
  where the last inequality follows by Lemma \ref{lemma1} and Assumption \ref{assei}. Combining the above two inequalities, we obtain $\varepsilon_N^2 |J\backslash J_0| \leq 64 C_U C_\Omega q C_d^2 C_1\varepsilon_N^{4\xi/(2\xi+1)}$. Dividing both sides by $\varepsilon_N^2$ gives $|J\backslash J_0| \leq 64 C_U C_\Omega q C_d^2 C_1\varepsilon_N^{-2/(2\xi+1)}$. By $s_0 \leq C_1 \varepsilon_N^{-2/(2\xi+1)}$, we obtain 
  \[
  \begin{aligned}
    & |J| \leq |J_0| + |J\backslash J_0| \leq s_0 + 64 C_U C_\Omega q C_d^2 C_1 \varepsilon_N^{-2/(2\xi+1)} \\
    \leq & C_1 \varepsilon_N^{-2/(2\xi+1)} + 64 C_U C_\Omega q C_d^2 C_1 \varepsilon_N^{-2/(2\xi+1)} = C^*\varepsilon_N^{-2/(2\xi+1)}.
  \end{aligned}
  \]
\end{proof}

\begin{lemma}{\rm(Lemma 6.9 of \citet{buhlmann2011statistics})}
  \label{lemmabuh}
  Let $b_1 \ge b_2 \ge \dots \ge 0$. For $1 < q < \infty$ and $s \in \{1, \dots\}$, we have
\[
\left(\sum_{j \ge s+1} b_j^q \right)^{1/q} \le \sum_{k=1}^{\infty} \left( \sum_{j = ks+1}^{(k+1)s} b_j^q \right)^{1/q} \le s^{-(q-1)/q} \| b \|_1.
\]
\end{lemma}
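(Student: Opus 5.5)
The plan is to partition the tail indices into consecutive blocks of length $s$ and then handle each inequality separately. The first inequality comes from the triangle inequality for the $\ell_q$ norm. The second comes from the monotonicity of $(b_j)$, which lets each block be dominated by the average of the preceding block. For $k \ge 0$, define the blocks $B_k = \{ks+1, \ldots, (k+1)s\}$. Then $B_0, B_1, \ldots$ partition $\{1,2,\ldots\}$, and $B_1, B_2, \ldots$ partition $\{s+1, s+2, \ldots\}$. We may assume $\|b\|_1 < \infty$, since otherwise the right-hand bound is trivial.

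For the first inequality, write the tail vector $(b_j)_{j \ge s+1}$ as the sum over $k \ge 1$ of its restrictions $b_{B_k}$, each extended by zero outside $B_k$. Because $q > 1$, the map $x \mapsto (\sum_j |x_j|^q)^{1/q}$ is a norm. Minkowski's inequality applied to the finite partial sums $\sum_{k=1}^{K} b_{B_k}$ gives
\[
\Big(\sum_{j=s+1}^{(K+1)s} b_j^q\Big)^{1/q} \le \sum_{k=1}^{K} \Big(\sum_{j\in B_k} b_j^q\Big)^{1/q}.
\]
Letting $K \to \infty$ then yields the first inequality by monotone convergence on both sides.

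For the second inequality, fix $k \ge 1$. Every index $j \in B_k$ exceeds every index $i \in B_{k-1}$, and the sequence is nonincreasing, so $b_j \le b_i$ for all such $i$. Averaging over $i \in B_{k-1}$ gives
\[
b_j \le s^{-1} \sum_{i \in B_{k-1}} b_i = s^{-1} \|b_{B_{k-1}}\|_1 .
\]
Since $B_k$ has $s$ elements, it follows that
\[
\Big(\sum_{j\in B_k} b_j^q\Big)^{1/q} \le s^{1/q}\, s^{-1} \|b_{B_{k-1}}\|_1 = s^{-(q-1)/q}\, \|b_{B_{k-1}}\|_1 .
\]
Summing over $k \ge 1$, the blocks $B_{k-1}$ for $k \ge 1$ run over $B_0, B_1, \ldots$, which partition all indices. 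Hence the sum of the right-hand sides equals $s^{-(q-1)/q}\|b\|_1$, which completes the chain.

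No step is a genuine obstacle. The only point requiring care is passing Minkowski's inequality from finitely many blocks to countably many, which is handled by the truncation at $K$ and the limit argument above.
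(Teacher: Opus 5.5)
Your proposal is correct and matches the argument the paper relies on, since the paper simply defers to the proof of Lemma 6.9 in B\"{u}hlmann and van de Geer: Minkowski's inequality over the length-$s$ blocks gives the first inequality, and bounding each entry of block $B_k$ by the average of block $B_{k-1}$ gives the second. Your truncation-and-limit step for countably many blocks is sound, and your reduction to $\|b\|_1<\infty$ is harmless because your proof of the first inequality never uses it.
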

\begin{proof}
  See the proof of Lemma 6.9 in \citet{buhlmann2011statistics}.
\end{proof}

\begin{lemma}
  \label{lemma10}
  $\|\widehat{\rho}_k-\rho_\star\|_2^2 = O_p(r^2\varepsilon_N^{-2/(2\xi+1)})$.
\end{lemma}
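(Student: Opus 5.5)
Apply the restricted-eigenvalue bound of Assumption \ref{assei} on the cone from Lemma \ref{lemma8}, combined with the basic inequality, and then lift the bound from the support set to the full vector with Lemma \ref{lemmabuh}.

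Write $\delta = \widehat{\rho}_k-\rho_\star$, and let $J$ be the support of $\rho_\star$. By Lemma \ref{lemma9}, $|J|\le C^*\varepsilon_N^{-2/(2\xi+1)}$. Let $\widehat{J}$ be the union of $J$ with the indices of the $|J|$ largest entries of $|\delta_{J^c}|$ (or all of $J^c$ if it is smaller). Then $|\widehat{J}|\le 2C^*\varepsilon_N^{-2/(2\xi+1)}$ and $(\rho_\star)_{\widehat{J}^c}=\mathbf 0$.

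Lemma \ref{lemma8} then applies with this $\widehat{J}$. With probability approaching one it gives the cone condition $\|\delta_{\widehat J^c}\|_1\le 3\|\delta_{\widehat J}\|_1$ and the basic inequality $\delta'\widehat G_k'\widehat\Omega\widehat G_k\delta\le 3r\|\delta\|_1$. Because $\widehat J$ satisfies the cardinality restriction and $\delta$ lies in the cone, Assumption \ref{assei} gives $C_E\|\delta_{\widehat J}\|_2^2\le \delta'\widehat G_k'\widehat\Omega\widehat G_k\delta$. By the cone condition and Cauchy--Schwarz,
\[
\|\delta\|_1\le 4\|\delta_{\widehat J}\|_1\le 4|\widehat J|^{1/2}\|\delta_{\widehat J}\|_2 .
\]
Chaining these, $C_E\|\delta_{\widehat J}\|_2^2\le 12 r|\widehat J|^{1/2}\|\delta_{\widehat J}\|_2$. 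Hence $\|\delta_{\widehat J}\|_2\le (12/C_E) r|\widehat J|^{1/2}$, so $\|\delta_{\widehat J}\|_2^2=O_p(r^2\varepsilon_N^{-2/(2\xi+1)})$ and also $\|\delta\|_1=O_p(r\,\varepsilon_N^{-2/(2\xi+1)})$.

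It remains to control the part of $\delta$ outside $\widehat J$, and this is the main obstacle. The restricted-eigenvalue condition only controls $\delta_{\widehat J}$, so the tail has to be handled separately. Because $\widehat J$ includes the $|J|$ largest coordinates of $\delta_{J^c}$, Lemma \ref{lemmabuh} with exponent $2$ and $s=|J|$, applied to the decreasing rearrangement of $|\delta_{J^c}|$, gives $\|\delta_{\widehat J^c}\|_2\le |J|^{-1/2}\|\delta_{J^c}\|_1$. The cone condition relative to $J$ also holds, by the same argument as in Lemma \ref{lemma8} with $J$ in place of $\widehat J$. Hence
\[
\|\delta_{J^c}\|_1\le 3\|\delta_J\|_1\le 3|J|^{1/2}\|\delta_{\widehat J}\|_2 ,
\]
which gives $\|\delta_{\widehat J^c}\|_2\le 3\|\delta_{\widehat J}\|_2$.

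Adding the two pieces yields $\|\delta\|_2^2\le 10\|\delta_{\widehat J}\|_2^2=O_p(r^2\varepsilon_N^{-2/(2\xi+1)})$. I expect the delicate bookkeeping to be here: ensuring that the same $\widehat J$ satisfies both the $2C^*$ cardinality bound in Assumption \ref{assei} and the cone condition, and treating the edge case where $|J^c|<|J|$, in which $\widehat J$ is the whole index set and the tail term vanishes trivially.
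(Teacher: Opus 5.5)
Your proposal is correct and follows essentially the same route as the paper. Both arguments apply Lemma \ref{lemma8} to $J$ and to $J$ augmented by the $|J|$ largest entries of $\delta_{J^c}$, use the sparse eigenvalue condition of Assumption \ref{assei} on the augmented set, and control the remaining tail with Lemma \ref{lemmabuh}. The only difference is that you bound $\|\delta_J\|_2$ by $\|\delta_{\widehat J}\|_2$ using $J\subseteq\widehat J$, whereas the paper applies the eigenvalue condition a second time on $J$ alone, so only the constants change.
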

\begin{proof}
  With probability approaching one, we have the following conclusions. By the definition of $J$, we have $(\rho_\star)_{J^c} = \mathbf{0}$, so in Lemma \ref{lemma8} we set $\widehat{J} = J$ to obtain $\|(\widehat{\rho}_k-\rho_\star)_{J^c}\|_1 \leq 3\|(\widehat{\rho}_k-\rho_\star)_{J}\|_1$. By Lemma \ref{lemma9}, we also have $|J| \leq C^*\varepsilon_N^{-2/(2\xi+1)}$. Thus, by the sparse eigenvalue condition in Assumption \ref{assei} and the first conclusion of Lemma \ref{lemma8}, 
  \[
  \begin{aligned}
    & \|(\widehat{\rho}_k-\rho_\star)_{J}\|_2^2 \leq C_E^{-1}(\widehat{\rho}_k-\rho_\star)^\prime \widehat{G}_k^\prime \widehat{\Omega}\widehat{G}_k(\widehat{\rho}_k-\rho_\star) \leq 3rC_E^{-1}\|\widehat{\rho}_k-\rho_\star\|_1 \\
    \leq & 3rC_E^{-1}(\|(\widehat{\rho}_k-\rho_\star)_{J}\|_1+\|(\widehat{\rho}_k-\rho_\star)_{J^c}\|_1) \leq 3rC_E^{-1}(\|(\widehat{\rho}_k-\rho_\star)_{J}\|_1+ 3\|(\widehat{\rho}_k-\rho_\star)_{J}\|_1)\\
    = & 12rC_E^{-1}\|(\widehat{\rho}_k-\rho_\star)_{J}\|_1 \leq 12rC_E^{-1} \sqrt{|J|} \|(\widehat{\rho}_k-\rho_\star)_{J}\|_2 \leq 12rC_E^{-1}\sqrt{C^*}\varepsilon_N^{-1/(2\xi+1)}\|(\widehat{\rho}_k-\rho_\star)_{J}\|_2.
  \end{aligned}
  \]
  Dividing both sides of the above inequality by $\|(\widehat{\rho}_k-\rho_\star)_{J}\|_2$ gives
  \[
  \|(\widehat{\rho}_k-\rho_\star)_{J}\|_2 \leq 12rC_E^{-1}\sqrt{C^*}\varepsilon_N^{-1/(2\xi+1)}.
  \]

  Next, let $N$ denote the indices corresponding to the largest $|J|$ entries in $(\widehat{\rho}_k-\rho_\star)_{J^c}$, so that $N \subset J^c$ and $|N| = |J|$. Define $\widetilde{J} = J \cup N$, we have $|\widetilde{J}| = |J| + |N| = 2|J|$ and $\widetilde{J}^c \subset J^c$. Thus, we have $(\rho_\star)_{\widetilde{J}^c}=\mathbf{0}$ and we can set $\widehat{J} = \widetilde{J}$. Then, it follows by Lemma \ref{lemma8} that $\|(\widehat{\rho}_k-\rho_\star)_{\widetilde{J}^c}\|_1 \leq 3\|(\widehat{\rho}_k-\rho_\star)_{\widetilde{J}}\|_1$. And by Lemma \ref{lemma9}, we have $|\widetilde{J}|\leq 2C^*\varepsilon_N^{-2/(2\xi+1)}$. Thus, by Assumption \ref{assei} and the first conclusion of Lemma \ref{lemma8}, 
  \[
  \begin{aligned}
    & \|(\widehat{\rho}_k-\rho_\star)_{\widetilde{J}}\|_2^2 \leq C_E^{-1}(\widehat{\rho}_k-\rho_\star)^\prime \widehat{G}_k^\prime \widehat{\Omega}\widehat{G}_k(\widehat{\rho}_k-\rho_\star) \leq 3rC_E^{-1}\|\widehat{\rho}_k-\rho_\star\|_1 \\
    \leq & 3rC_E^{-1}(\|(\widehat{\rho}_k-\rho_\star)_{\widetilde{J}}\|_1+\|(\widehat{\rho}_k-\rho_\star)_{\widetilde{J}^c}\|_1) \leq 3rC_E^{-1}(\|(\widehat{\rho}_k-\rho_\star)_{\widetilde{J}}\|_1+ 3\|(\widehat{\rho}_k-\rho_\star)_{\widetilde{J}}\|_1)\\
    = & 12rC_E^{-1}\|(\widehat{\rho}_k-\rho_\star)_{\widetilde{J}}\|_1 \leq 12rC_E^{-1} \sqrt{|\widetilde{J}|} \|(\widehat{\rho}_k-\rho_\star)_{\widetilde{J}}\|_2 \\
    \leq & 12rC_E^{-1}\sqrt{2C^*}\varepsilon_N^{-1/(2\xi+1)}\|(\widehat{\rho}_k-\rho_\star)_{\widetilde{J}}\|_2.
  \end{aligned}
  \]
  Dividing both sides of the above inequality by $\|(\widehat{\rho}_k-\rho_\star)_{\widetilde{J}}\|_2$ gives 
  \[
  \|(\widehat{\rho}_k-\rho_\star)_{\widetilde{J}}\|_2 \leq 12rC_E^{-1}\sqrt{2C^*}\varepsilon_N^{-1/(2\xi+1)},
  \]
  that is, $\|(\widehat{\rho}_k-\rho_\star)_{\widetilde{J}}\|_2 = O_p(r\varepsilon_N^{-1/(2\xi+1)})$. 

  Then, in Lemma \ref{lemmabuh}, we set $b = (\widehat{\rho}_k-\rho_\star)_{J^c}$ and $s = |N| = |J|$. Note that $J^c\backslash N = \widetilde{J}^c$, it follows by Lemma \ref{lemmabuh} that 
  \[
  \begin{aligned}
    & \|(\widehat{\rho}_k-\rho_\star)_{\widetilde{J}^c}\|_2 \leq |J|^{-1/2} \|(\widehat{\rho}_k-\rho_\star)_{J^c}\|_1 \leq |J|^{-1/2} 3\|(\widehat{\rho}_k-\rho_\star)_{J}\|_1 \\
    \leq & |J|^{-1/2} 3 \sqrt{|J|} \|(\widehat{\rho}_k-\rho_\star)_{J}\|_2 \leq 36rC_E^{-1}\sqrt{C^*}\varepsilon_N^{-1/(2\xi+1)},
  \end{aligned}
  \]
  which implies $\|(\widehat{\rho}_k-\rho_\star)_{\widetilde{J}^c}\|_2 = O_p(r\varepsilon_N^{-1/(2\xi+1)})$.

  Finally, combining the above two inequalities, we have 
  \[
  \begin{aligned}
    & \|\widehat{\rho}_k-\rho_\star\|_2^2 = \|(\widehat{\rho}_k-\rho_\star)_{\widetilde{J}}\|_2^2 + \|(\widehat{\rho}_k-\rho_\star)_{\widetilde{J}^c}\|_2^2 \\
    \leq & O_p(r^2\varepsilon_N^{-2/(2\xi+1)}) + O_p(r^2\varepsilon_N^{-2/(2\xi+1)}) = O_p(r^2\varepsilon_N^{-2/(2\xi+1)}).
  \end{aligned}
  \]
\end{proof}

\begin{proof}[Proof of Theorem \ref{th1}]
By Lemma \ref{lemma10} and Assumption \ref{assei}, 
\[
\begin{aligned}
  & \int [b(z_{it})^\prime(\widehat{\rho}_k-\rho_\star)]^2 F_0(\rd z_{it}) = (\widehat{\rho}_k-\rho_\star)^\prime \E[b(Z_{it})b(Z_{it})^\prime](\widehat{\rho}_k-\rho_\star) = (\widehat{\rho}_k-\rho_\star)^\prime B (\widehat{\rho}_k-\rho_\star) \\
  \leq & \lambda_{max}(B)\|\widehat{\rho}_k-\rho_\star\|_2^2 \leq C_B \|\widehat{\rho}_k-\rho_\star\|_2^2 = O_p(r^2\varepsilon_N^{-2/(2\xi+1)}) = O_p((r/\varepsilon_N)^2\varepsilon_N^{4\xi/(2\xi+1)}).
\end{aligned}
\]

Then, by the triangle inequality, Lemma \ref{lemma3}, and $\varepsilon_N = o(r)$,
\begin{align*}
  & \int [\widehat{\alpha}_k(Z_{it}) - \alpha_0(Z_{it})]^2 F_0(\rd z_{it}) \\
  = & \int [b(Z_{it})^\prime\widehat{\rho}_k - b(Z_{it})^\prime \rho_\star + b(Z_{it})^\prime \rho_\star - \alpha_0(Z_{it})]^2 F_0(\rd z_{it}) \\
  \leq & 2\int [b(Z_{it})^\prime(\widehat{\rho}_k-\rho_\star)]^2 F_0(\rd z_{it}) + 2\int [\alpha_0(Z_{it}) - b(Z_{it})^\prime \rho_\star]^2 F_0(\rd z_{it}) \\
  = & O_p((r/\varepsilon_N)^2\varepsilon_N^{4\xi/(2\xi+1)}) + O(\varepsilon_N^{4\xi/(2\xi+1)}) = O_p((r/\varepsilon_N)^2\varepsilon_N^{4\xi/(2\xi+1)}) = O_p(r^2\varepsilon_N^{-2/(2\xi+1)}).
\end{align*}
\end{proof}

\subsection{Proof of Theorem \ref{th2} and Theorem \ref{prop1}}
We first prove some lemmas. 
\begin{lemma}
  \label{lemma13}
  If Assumption \ref{asscp} to Assumption \ref{assei} are satisfied and $\varepsilon_N = o(r)$, for all $k$,
  \[
  \|\widehat{\rho}_k\|_1 = O_p(1).
  \]
\end{lemma}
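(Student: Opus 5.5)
The plan is to bound $\|\widehat{\rho}_k\|_1$ by the triangle inequality,
\[
\|\widehat{\rho}_k\|_1 \leq \|\rho_\star\|_1 + \|\widehat{\rho}_k-\rho_\star\|_1 ,
\]
and to control the two pieces separately. The first piece is already handled: \eqref{eqrs} in the proof of Lemma \ref{lemma7} gives $\|\rho_\star\|_1 = O(1)$. All remaining work is on $\|\widehat{\rho}_k-\rho_\star\|_1$, for which I would reuse the cone argument from the proof of Lemma \ref{lemma10}.

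Let $J$ be the support of $\rho_\star$. By Lemma \ref{lemma8} with $\widehat{J}=J$, with probability approaching one,
\[
\|(\widehat{\rho}_k-\rho_\star)_{J^c}\|_1 \leq 3\|(\widehat{\rho}_k-\rho_\star)_{J}\|_1 .
\]
Hence
\[
\|\widehat{\rho}_k-\rho_\star\|_1 \leq 4\|(\widehat{\rho}_k-\rho_\star)_{J}\|_1 \leq 4\sqrt{|J|}\,\|(\widehat{\rho}_k-\rho_\star)_{J}\|_2 .
\]
The proof of Lemma \ref{lemma10} already gives $\|(\widehat{\rho}_k-\rho_\star)_{J}\|_2 \leq 12 r C_E^{-1}\sqrt{C^*}\,\varepsilon_N^{-1/(2\xi+1)}$ with probability approaching one. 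Lemma \ref{lemma9} gives $\sqrt{|J|}\leq \sqrt{C^*}\,\varepsilon_N^{-1/(2\xi+1)}$. Combining these,
\[
\|\widehat{\rho}_k-\rho_\star\|_1 = O_p\big(r\,\varepsilon_N^{-2/(2\xi+1)}\big).
\]

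The obstacle is converting this into $O_p(1)$. That requires $r\,\varepsilon_N^{-2/(2\xi+1)} = O(1)$, which is not literally implied by $\varepsilon_N = o(r)$ alone. I would handle it in one of two ways.

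The first option is to use the standing consistency requirement from Assumption \ref{assm}, namely $r^2\varepsilon_N^{-2/(2\xi+1)}\to 0$. Since the lemma is stated to feed into Theorems \ref{th2} and \ref{prop1}, which assume Assumption \ref{assm}, I would read that rate condition as in force. Writing $r\,\varepsilon_N^{-2/(2\xi+1)} = \big(r\,\varepsilon_N^{-1/(2\xi+1)}\big)\cdot \varepsilon_N^{-1/(2\xi+1)}$ still leaves a diverging factor, so this route alone may not close.

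The second option is therefore the one I would try first: avoid the $\sqrt{|J|}$ route and bound $\|\widehat{\rho}_k\|_1$ directly from the basic inequality \eqref{eqb5}. That inequality gives
\[
2\|\widehat{\rho}_k\|_1 \leq \|\widehat{\rho}_k-\rho_\star\|_1 + 2\|\rho_\star\|_1 \leq \|\widehat{\rho}_k\|_1 + 3\|\rho_\star\|_1 ,
\]
so
\[
\|\widehat{\rho}_k\|_1 \leq 3\|\rho_\star\|_1 = O(1)
\]
with probability approaching one. This uses only Lemma \ref{lemma8}'s derivation, which needs $\varepsilon_N=o(r)$ together with Lemmas \ref{lemma2}, \ref{lemmagh} and \ref{lemma7}, plus \eqref{eqrs}. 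It sidesteps the rate issue entirely, so I expect the proof to reduce to this short argument.
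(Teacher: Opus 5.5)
Your second route is correct and is essentially the paper's proof. The paper likewise starts from the basic inequality \eqref{eqb4}, with coefficient $o_p(r)$ in place of your $r$, which changes nothing. It then bounds $\|\widehat{\rho}_k-\rho_\star\|_1 \le \|\widehat{\rho}_k\|_1+\|\rho_\star\|_1$ to get $\|\widehat{\rho}_k\|_1 \le 3\|\rho_\star\|_1$ with probability approaching one, and concludes via \eqref{eqrs}. You were right to abandon the $\sqrt{|J|}$ detour, since it only gives the non-closing rate $O_p\big(r\varepsilon_N^{-2/(2\xi+1)}\big)$.
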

\begin{proof}
  By \eqref{eqb4} and $\varepsilon_N = o(r)$, 
  \[
  \|\widehat{\Omega}^{1/2}\widehat{G}_k (\widehat{\rho}_k-\rho_\star)\|_2^2 + 2r\|\widehat{\rho}_k\|_1 \leq O_p(\varepsilon_N)\|\widehat{\rho}_k-\rho_\star\|_1 + 2r\|\rho_\star\|_1 \leq o_p(r)\|\widehat{\rho}_k-\rho_\star\|_1 + 2r\|\rho_\star\|_1,
  \]
  which implies 
  \[
  2r\|\widehat{\rho}_k\|_1 \leq ro_p(1)\|\widehat{\rho}_k-\rho_\star\|_1 + 2r\|\rho_\star\|_1 \leq ro_p(1)(\|\widehat{\rho}_k\|_1 + \|\rho_\star\|_1) + 2r\|\rho_\star\|_1.
  \]
  Dividing both sides by $2r$, we obtain $\|\widehat{\rho}_k\|_1 \leq o_p(1)(\|\widehat{\rho}_k\|_1 + \|\rho_\star\|_1) + \|\rho_\star\|_1$, so with probability approaching one, 
  \[
  \|\widehat{\rho}_k\|_1 \leq \frac{1}{2}(\|\widehat{\rho}_k\|_1 + \|\rho_\star\|_1) + \|\rho_\star\|_1.
  \]
  Rearranging the terms and noting that $\|\rho_\star\|_1 = O(1)$ by \eqref{eqrs} complete the proof.
\end{proof}
\begin{lemma}
  If Assumption \ref{assa} to \ref{asscr} are satisfied, we have 
  \[
  \begin{aligned}
    & \frac{1}{\sqrt{N}}\sum_{k=1}^K \sum_{i\in\mathcal{F}_k} \Big\{ m(W_{it}^*,\Delta \widehat{\gamma}_k^*) + \widehat{\alpha}_k(Z_{it})[\Delta Y_{it}^* - \Delta \widehat{\gamma}_k^*(V_{it})] \Big\} \\
    = & \frac{1}{\sqrt{N}}\sum_{k=1}^K \sum_{i\in\mathcal{F}_k} \Big\{ m(W_{it}^*,\Delta \gamma_0^*) + \alpha_0(Z_{it})[\Delta Y_{it}^* - \Delta \gamma_0^*(V_{it})] \Big\} + o_p(1).
  \end{aligned}
  \]
\end{lemma}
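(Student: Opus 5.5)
Fix a fold $k$ and condition on the auxiliary sample $\mathcal{F}^c_{k,k'}$, which determines $\widehat{\gamma}_k$ and $\widehat{\alpha}_k$ and is independent of $\mathcal{F}_k\cup\mathcal{F}_{k'}$. Since $K$ is fixed, it is enough to show that the fold-$k$ remainder $R_k=n_k^{-1/2}\sum_{i\in\mathcal{F}_k}\{\psi(W^*_{it},\widehat{\gamma}_k,\widehat{\alpha}_k)-\psi(W^*_{it},\gamma_0,\alpha_0)\}$ is $o_p(1)$. Write $\widehat{\phi}_k=\Delta\widehat{\gamma}_k-\Delta\gamma_0$ and $\widehat{\phi}_k^*$ for its fold-$k'$ demeaned version. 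Using $\Delta Y^*_{it}-\Delta\gamma_0^*(V_{it})=\Delta\varepsilon^*_{it}$ and linearity of $m$, we get $R_k=R_{1k}+R_{2k}+R_{3k}$. Here $R_{1k}=n_k^{-1/2}\sum_i\{m(W^*_{it},\widehat{\phi}_k^*)-\alpha_0(Z_{it})\widehat{\phi}_k^*(V_{it})\}$, $R_{2k}=n_k^{-1/2}\sum_i(\widehat{\alpha}_k-\alpha_0)(Z_{it})\Delta\varepsilon^*_{it}$, and $R_{3k}=-n_k^{-1/2}\sum_i(\widehat{\alpha}_k-\alpha_0)(Z_{it})\widehat{\phi}^*_k(V_{it})$. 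This is the standard three-term DML decomposition (as in \citet{chernozhukov2022locally}), adapted to account for the cross-fold demeaning.

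\emph{Terms $R_{1k}$ and $R_{2k}$.} I will use conditional mean/variance bounds. For $R_{2k}$, condition also on $Z$ for fold $k$. Because $\mathcal{F}_{k'}$ is independent of fold $k$ and \eqref{eqer} holds, the conditional mean of $\Delta\varepsilon^*_{it}$ vanishes. Its conditional second moment is bounded by Assumption \ref{assa}, and the cross-products involve only the $O(1/n_{k'})$ shared demeaning term. Hence $\E[R_{2k}^2\mid\cdot]\lesssim\|\widehat{\alpha}_k-\alpha_0\|^2\to_p0$ by Assumption \ref{assm} (or by Theorem \ref{th1}). For $R_{1k}$, \eqref{eqrie} applied with direction $\widehat{\phi}_k$ (legitimate since $\widehat{\phi}_k$ is fixed given the auxiliary sample) gives conditional mean zero. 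Its conditional variance is bounded by $2\|\partial\widehat{\phi}_k/\partial D_{i,t-s}\|^2+2C_\alpha^2\|\widehat{\phi}^*_k\|^2$, and both pieces tend to zero by Assumption \ref{assm}.

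\emph{Demeaning correction.} The demeaning subtracts $\bar{\widehat{\phi}}_{k'}=n_{k'}^{-1}\sum_{j\in\mathcal{F}_{k'}}\widehat{\phi}_k(V_{jt})$, which is common to all $i\in\mathcal{F}_k$. Its contribution to $R_{1k}$ is exactly zero, because $m$ kills constants. Its contribution to $R_{3k}$ is $\bar{\widehat{\phi}}_{k'}\cdot n_k^{-1/2}\sum_i(\widehat{\alpha}_k-\alpha_0)(Z_{it})$. Since $\E\alpha_0=0$, this equals $\sqrt{n_k}\,\bar{\widehat{\phi}}_{k'}\,\E[\widehat{\alpha}_k]$ plus a term that is $o_p(1)\cdot O_p(1)$. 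Because $\E[\widehat{\alpha}_k]=\E[\widehat{\alpha}_k-\alpha_0]$ and $\bar{\widehat{\phi}}_{k'}=\E\widehat{\phi}_k+O_p(\|\widehat{\phi}_k\|/\sqrt{N})$, this piece is $o_p(1)$ provided $\sqrt{N}\,|\E(\widehat{\alpha}_k-\alpha_0)|\,|\E\widehat{\phi}_k|\to_p0$. Since $\E\widehat{\phi}_k=\E[\E(\widehat{\phi}_k\mid Z)]$, this follows from Assumption \ref{asscr} by Cauchy--Schwarz.

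\emph{Term $R_{3k}$ (main obstacle).} After demeaning, the remaining piece is $-n_k^{-1/2}\sum_i(\widehat{\alpha}_k-\alpha_0)(Z_{it})\widehat{\phi}_k(V_{it})$. Its conditional mean is $-\sqrt{n_k}\E[(\widehat{\alpha}_k-\alpha_0)(Z)\E(\widehat{\phi}_k(V)\mid Z)]$. By Cauchy--Schwarz this is bounded by $\sqrt{N}\|\widehat{\alpha}_k-\alpha_0\|\,\|\E[\widehat{\phi}_k\mid Z]\|$, which is $o_p(1)$ by Assumption \ref{asscr}; this is exactly where the projected rather than the mean-square norm enters. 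The difficulty lies in the centered fluctuation, whose variance is $\E[(\widehat{\alpha}_k-\alpha_0)^2\widehat{\phi}_k^2]$. Only mean-square consistency is available for each factor, so I need a uniform bound on one of them. I would use $\sup|\widehat{\alpha}_k-\alpha_0|\le C_b\|\widehat{\rho}_k\|_1+C_\alpha=O_p(1)$, from Lemma \ref{lemma13}, Assumption \ref{assbd} and Assumption \ref{assa}. This gives variance $O_p(1)\|\widehat{\phi}_k\|^2\to_p0$ by Assumption \ref{assm}. Conditional Chebyshev then makes each centered piece $o_p(1)$, and summing over the $K$ folds completes the proof.
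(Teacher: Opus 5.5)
Your proposal is correct and is essentially the paper's proof. It uses the same decomposition: your $R_{1k}$ is the paper's $\widehat{R}_{1ki}+\widehat{R}_{2ki}$, and your $R_{2k}$ and $R_{3k}$ are its $\widehat{R}_{3ki}$ and $\widehat{\delta}_i$. The tools match too: conditional Chebyshev given $\mathcal{W}^c_{k,k'}$, the Riesz identity and \eqref{eqer} for the conditional means, Lemma~\ref{lemma13} with Assumption~\ref{assbd} for the variance of the product term, and Assumption~\ref{asscr} via Cauchy--Schwarz for its mean. The only difference is that you split off the fold-$k'$ average $\bar{\widehat{\phi}}_{k'}$ explicitly, whereas the paper computes the induced cross-covariances directly.

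One small correction: the demeaning contribution to $R_{1k}$ is not exactly zero. The $m$ part does kill the constant, but the $\alpha_0\widehat{\phi}_k^*$ part leaves $\bar{\widehat{\phi}}_{k'}\, n_k^{-1/2}\sum_{i\in\mathcal{F}_k}\alpha_0(Z_{it})$. This term is still $o_p(1)\cdot O_p(1)=o_p(1)$, because $\E[\alpha_0(Z_{it})]=0$ and $\bar{\widehat{\phi}}_{k'}\pto 0$.
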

\begin{proof}
For all $k\in\{1,\ldots,K\}$ and $i\in\mathcal{F}_k$, define
\begin{gather*}
\widehat{R}_{1ki} = m(W_{it}^*,\Delta \widehat{\gamma}_k^*) - m(W_{it}^*,\Delta \gamma_0^*)= \frac{\partial \widehat{\gamma}_k(V_{it})}{\partial D_{i,t-s}} - \frac{\partial \gamma_0(V_{it})}{\partial D_{i,t-s}},\\
\widehat{R}_{2ki} = \alpha_0(Z_{it})[\Delta \gamma_0^*(V_{it}) - \Delta \widehat{\gamma}_k^*(V_{it})],\\
\widehat{R}_{3ki} = [\widehat{\alpha}_k(Z_{it})-\alpha_0(Z_{it})][\Delta Y_{it}^* - \Delta \gamma_0^*(V_{it})], \\
\widehat{\delta}_{i} = [\widehat{\alpha}_k(Z_{it})-\alpha_0(Z_{it})][\Delta \gamma_0^*(V_{it}) - \Delta \widehat{\gamma}_k^*(V_{it})].
\end{gather*}
It can be verified that 
\begin{align}
  & \{m(W_{it}^*,\Delta \widehat{\gamma}_k^*) + \widehat{\alpha}_k(Z_{it})[\Delta Y_{it}^* - \Delta \widehat{\gamma}_k^*(V_{it})]\} - \{m(W_{it}^*,\Delta \gamma_0^*) + \alpha_0(Z_{it})[\Delta Y_{it}^* - \Delta \gamma_0^*(V_{it})]\} \notag \\
  = & \widehat{R}_{1ki} + \widehat{R}_{2ki} + \widehat{R}_{3ki} + \widehat{\delta}_i. \label{final3}
\end{align}

Define $\wc\coloneq(W_{it})_{i\in\mathcal{F}_{k,k^\prime}^c}$. Note that we use $\wc$ to construct $\widehat{\gamma}_k$, so $\widehat{\gamma}_k$ is a function of $\wc$. Therefore, by the fact that given $\gamma_0$ and $\widehat{\gamma}_k$, $\widehat{R}_{1ki}$ depends only on $V_{it}$ and the i.i.d. assumption, we find that $\forall i,j\in\mathcal{F}_k, i\neq j$, $\widehat{R}_{1ki}$ and $\widehat{R}_{1kj}$ are conditionally independent given $\wc$. Consequently, 
\begin{align*}
  & \E\bigg(\bigg[\frac{1}{\sqrt{N}}\sum_{i\in\mathcal{F}_k}\left(\widehat{R}_{1ki}-\E(\widehat{R}_{1ki}\mid \wc)\right)\bigg]^2\Big| \wc \bigg) = \frac{n_k}{N} \Var(\widehat{R}_{1ki} \mid \wc) \\
  \leq & \E(\widehat{R}_{1ki}^2 \mid \wc) = \int \bigg[\frac{\partial \widehat{\gamma}_k(v_{it})}{\partial D_{i,t-s}} - \frac{\partial \gamma_0(v_{it})}{\partial D_{i,t-s}}\bigg]^2 F_0(\rd v_{it}) \pto 0,
\end{align*}
where the convergence in probability follows from Assumption \ref{assm}. Then, by the conditional Markov inequality, $\forall \varepsilon$,
\[
P_N = \rP\bigg(\bigg|\frac{1}{\sqrt{N}}\sum_{i\in\mathcal{F}_k}\left(\widehat{R}_{1ki}-\E(\widehat{R}_{1ki}\mid \wc)\right)\bigg| > \varepsilon \Big| \wc \bigg) \pto 0.
\]
Since the sequence $\{|P_N|\}$ is uniformly integrable, $P_N \pto 0$ implies $\E(|P_N|) \to 0$. Therefore, 
\begin{equation}
\label{equi}
\rP\bigg(\bigg|\frac{1}{\sqrt{N}}\sum_{i\in\mathcal{F}_k}\left(\widehat{R}_{1ki}-\E(\widehat{R}_{1ki}\mid \wc)\right)\bigg| > \varepsilon \bigg)=\E(P_N)=\E(|P_N|) \to 0,  
\end{equation}
that is,
\begin{equation}
\label{cp1}
\frac{1}{\sqrt{N}}\sum_{i\in\mathcal{F}_k}\left(\widehat{R}_{1ki}-\E(\widehat{R}_{1ki}\mid \wc)\right) \pto 0.
\end{equation}

Next, we investigate $\widehat{R}_{2ki}$. Note that given $\gamma_0$ and $\widehat{\gamma}_k$, $\widehat{R}_{2ki}$ depends on $V_{it}$ and $(V_{mt})_{m\in\mathcal{F}_{k^\prime}}$, so $\forall i,j \in \mathcal{F}_k$, $\widehat{R}_{2ki}$ and $\widehat{R}_{2kj}$ are not conditionally independent. We have 
\begin{align*}
& \E\bigg(\bigg[\frac{1}{\sqrt{N}}\sum_{i\in\mathcal{F}_k}\left(\widehat{R}_{2ki}-\E(\widehat{R}_{2ki}\mid \wc)\right)\bigg]^2\Big| \wc \bigg)\\
& = \frac{n_k}{N} \Var(\widehat{R}_{2ki} \mid \wc) + \frac{n_k(n_k-1)}{N}\Cov(\widehat{R}_{2ki},\widehat{R}_{2kj} \mid \wc).
\end{align*}
Define $A_i = \Delta \gamma_0(V_{it})-\Delta \widehat{\gamma}_k(V_{it})$. By definition, $\widehat{R}_{2ki}=\alpha_0(Z_{it})(A_i-\frac{1}{n_{k^\prime}}\sum_{u\in\mathcal{F}_{k^\prime}}A_u)$. Therefore, we can write 
\begin{align*}
& \Cov(\widehat{R}_{2ki},\widehat{R}_{2kj} \mid \wc)\\
= & \Cov(\alpha_0(Z_{it})A_i,\alpha_0(Z_{jt})A_j \mid \wc) - \Cov\Big(\alpha_0(Z_{it})A_i,\alpha_0(Z_{jt})\frac{1}{n_{k^\prime}}\sum_{u\in\mathcal{F}_{k^\prime}}A_u \mid \wc\Big) \\ 
& - \Cov\Big(\alpha_0(Z_{jt})A_j,\alpha_0(Z_{it})\frac{1}{n_{k^\prime}}\sum_{u\in\mathcal{F}_{k^\prime}}A_u \mid \wc\Big) \\
& + \Cov\Big(\alpha_0(Z_{it})\frac{1}{n_{k^\prime}}\sum_{u\in\mathcal{F}_{k^\prime}}A_u,\alpha_0(Z_{jt})\frac{1}{n_{k^\prime}}\sum_{u\in\mathcal{F}_{k^\prime}}A_u \mid \wc\Big) \\
= & \E\Big(\alpha_0(Z_{it})\alpha_0(Z_{jt})\Big(\frac{1}{n_{k^\prime}}\sum_{u\in\mathcal{F}_{k^\prime}}A_u\Big)^2 \mid \wc \Big) - \Big[\E\Big(\alpha_0(Z_{it})\Big(\frac{1}{n_{k^\prime}}\sum_{u\in\mathcal{F}_{k^\prime}}A_u\Big)\mid \wc \Big)\Big]^2\\
= & [\E(\alpha_0(Z_{it})\mid\wc)]^2 \E\Big[\Big(\frac{1}{n_{k^\prime}}\sum_{u\in\mathcal{F}_{k^\prime}}A_u\Big)^2 \mid \wc \Big] \\
& - [\E(\alpha_0(Z_{it})\mid\wc)]^2 \Big[\E\Big(\frac{1}{n_{k^\prime}}\sum_{u\in\mathcal{F}_{k^\prime}}A_u \mid \wc \Big)\Big]^2 \\
= & [\E(\alpha_0(Z_{it}))]^2 \frac{1}{n_{k^\prime}}\Var(A_i \mid \wc),
\end{align*}
where the second, third, and fourth inequalities hold because of the i.i.d. assumption and the fact that $\widehat{\gamma}_k$ is a function of $\wc$. Thus, 
\begin{align*}
& \E\bigg(\bigg[\frac{1}{\sqrt{N}}\sum_{i\in\mathcal{F}_k}\left(\widehat{R}_{2ki}-\E(\widehat{R}_{2ki}\mid \wc)\right)\bigg]^2\Big| \wc \bigg) \\
= & \frac{n_k}{N} \Var(\widehat{R}_{2ki} \mid \wc) + \frac{n_k(n_k-1)}{Nn_{k^\prime}} [\E(\alpha_0(Z_{it}))]^2 \Var(A_i \mid \wc) \\
\leq & \E(\widehat{R}_{2ki}^2\mid \wc) + [\E(\alpha_0(Z_{it}))]^2 \E(A_i^2 \mid \wc) \\
\leq & C_\alpha^2 \E([\Delta \gamma_0^*(V_{it}) - \Delta \widehat{\gamma}_k^*(V_{it})]^2 \mid \wc) + C_\alpha^2 \E(A_i^2 \mid \wc)\\
= & C_\alpha^2 \int [\Delta \widehat{\gamma}_k^*(v_{it}) - \Delta \gamma_0^*(v_{it})]^2 F_0(\rd w_{it}^*) + C_\alpha^2\int [\Delta \widehat{\gamma}_k(v_{it}) - \Delta \gamma_0(v_{it})]^2 F_0(\rd \overline{v}_{i,t-1:t})\\
& \hspace{-.5cm}\pto 0,
\end{align*}
where the convergence in probability holds because of Assumption \ref{assm} and the fact that $\int [\Delta \widehat{\gamma}_k(v_{it}) - \Delta \gamma_0(v_{it})]^2 F_0(\rd \overline{v}_{i,t-1:t}) \pto 0$ implies $\int [\Delta \widehat{\gamma}_k^*(v_{it}) - \Delta \gamma_0^*(v_{it})]^2 F_0(\rd w_{it}^*) \pto 0$ by the definition of $\widehat{\gamma}_k^*$ and $\gamma_0^*$, the i.i.d. assumption, and the Cauchy-Schwarz inequality. Then, similar to the derivation of \eqref{cp1}, it can be proved that 
\begin{equation}
  \label{cp2}
  \frac{1}{\sqrt{N}}\sum_{i\in\mathcal{F}_k}\left(\widehat{R}_{2ki}-\E(\widehat{R}_{2ki}\mid \wc)\right) \pto 0.
\end{equation}

For $\widehat{R}_{3ki}$, we have 
\begin{align*}
& \E\bigg(\bigg[\frac{1}{\sqrt{N}}\sum_{i\in\mathcal{F}_k}\left(\widehat{R}_{3ki}-\E(\widehat{R}_{3ki}\mid \wc)\right)\bigg]^2\Big| \wc \bigg)\\
& = \frac{n_k}{N} \Var(\widehat{R}_{3ki} \mid \wc) + \frac{n_k(n_k-1)}{N}\Cov(\widehat{R}_{3ki},\widehat{R}_{3kj} \mid \wc).
\end{align*}
Note that $\widehat{R}_{3ki} = [\widehat{\alpha}_k(Z_{it})-\alpha_0(Z_{it})]\Delta \varepsilon_{it}^*$, similar to the calculation of $\Cov(\widehat{R}_{2ki},\widehat{R}_{2kj} \mid \wc)$, we have 
\begin{align*}
& \Cov(\widehat{R}_{3ki},\widehat{R}_{3kj} \mid \wc) = \big(\E[\widehat{\alpha}_k(Z_{it})-\alpha_0(Z_{it}) \mid \wc]\big)^2 \frac{1}{n_{k^\prime}}\Var(\Delta \varepsilon_{it})\\
\leq & \E\big[\big(\widehat{\alpha}_k(Z_{it})-\alpha_0(Z_{it})\big)^2 \mid \wc\big]\frac{1}{n_{k^\prime}}\E[(\Delta \varepsilon_{it})^2].
\end{align*}
Moreover, 
\begin{align*}
  & \E(\widehat{R}_{3ki}^2 \mid \wc) = \E[[\widehat{\alpha}_k(Z_{it})-\alpha_0(Z_{it})]^2[\Delta Y_{it}^* - \Delta \gamma_0^*(V_{it})]^2 \mid \wc] \\
  = & \E(\widehat{R}_{3ki}^2 \mid \wc) = \E[[\widehat{\alpha}_k(Z_{it})-\alpha_0(Z_{it})]^2\E\{[\Delta Y_{it}^* - \Delta \gamma_0^*(V_{it})]^2 \mid Z_{it}, \wc\} \mid \wc],
\end{align*}
where by Assumption \ref{assa} $\E\{[\Delta Y_{it}^* - \Delta \gamma_0^*(V_{it})]^2 \mid Z_{it}, \wc\} = \E\{[\Delta Y_{it}^* - \Delta \gamma_0^*(V_{it})]^2 \mid Z_{it}\} \leq 4C_z$. This is because  
\begin{align*}
  & \E\{[\Delta Y_{it}^* - \Delta \gamma_0^*(V_{it})]^2 \mid Z_{it}\} = \E\Big[\Big(\Delta \varepsilon_{it}-\frac{1}{n_{k^\prime}}\sum_{m\in\mathcal{F}_{k^\prime}}\Delta \varepsilon_{mt}\Big)^2 \mid Z_{it}\Big] \\
  \leq & 2\E[(\Delta \varepsilon_{it})^2 \mid Z_{it}] + 2\E\Big[\Big(\frac{1}{n_{k^\prime}}\sum_{m\in\mathcal{F}_{k^\prime}}\Delta \varepsilon_{mt}\Big)^2 \mid Z_{it}\Big] \\
  = & 2\E[(\Delta \varepsilon_{it})^2 \mid Z_{it}] + 2\E\Big[\Big(\frac{1}{n_{k^\prime}}\sum_{m\in\mathcal{F}_{k^\prime}}\Delta \varepsilon_{mt}\Big)^2\Big] \leq 2\E[(\Delta \varepsilon_{it})^2 \mid Z_{it}] + 2\frac{1}{n_{k^\prime}}\sum_{m\in\mathcal{F}_{k^\prime}}\E[(\Delta \varepsilon_{mt})^2]
\end{align*} 
and Assumption \ref{assa} implies $\E[(\Delta \varepsilon_{it})^2] \leq C_z$.

Hence, 
\begin{align*}
& \E\bigg(\bigg[\frac{1}{\sqrt{N}}\sum_{i\in\mathcal{F}_k}\left(\widehat{R}_{3ki}-\E(\widehat{R}_{3ki}\mid \wc)\right)\bigg]^2\Big| \wc \bigg)\\
\leq & \E(\widehat{R}_{3ki}^2\mid \wc) + \E[[\widehat{\alpha}_k(Z_{it})-\alpha_0(Z_{it})]^2 \mid \wc]\E[(\Delta \varepsilon_{it})^2] \\
\leq & 4C_z \E[[\widehat{\alpha}_k(Z_{it})-\alpha_0(Z_{it})]^2 \mid \wc] + \E[[\widehat{\alpha}_k(Z_{it})-\alpha_0(Z_{it})]^2 \mid \wc] C_z\\
= & 5C_z \int [\widehat{\alpha}_k(z_{it}) - \alpha_0(z_{it})]^2 F_0(\rd z_{it}) \pto 0,
\end{align*}
where the convergence in probability follows by Assumption \ref{assm}. Similar to the derivation of \eqref{cp1}, this would imply that 
\begin{equation}
  \label{cp3}
  \frac{1}{\sqrt{N}}\sum_{i\in\mathcal{F}_k}\left(\widehat{R}_{3ki}-\E(\widehat{R}_{3ki}\mid \wc)\right) \pto 0.
\end{equation}

Note that by the i.i.d. assumption, for any non-random function $\Delta \gamma^*$, \eqref{eqrie} would imply \\$\E[m(W_{it}^*,\Delta \gamma^*) - \alpha_0(Z_{it})\Delta \gamma^*(V_{it}) \mid \wc] = 0$. Thus, because $\Delta \widehat{\gamma}_k^*$ is a non-random function conditioned on $\wc$, we have 
\begin{equation}
  \label{cp4}
  \begin{aligned}
  & \E[\widehat{R}_{1ki} + \widehat{R}_{2ki} \mid \wc] = \E[m(W_{it}^*,\Delta \widehat{\gamma}_k^*)-\alpha_0(Z_{it})\Delta \widehat{\gamma}_k^*(V_{it})\mid \wc] \\
  & - \E[m(W_{it}^*,\Delta \gamma_0^*)-\alpha_0(Z_{it})\Delta \gamma_0^*(V_{it}) \mid \wc] = 0.
  \end{aligned}
\end{equation}
In addition, by \eqref{eqer},
\begin{equation}
  \label{cp5}
  \begin{aligned}
  & \E[\widehat{R}_{3ki} \mid \wc] = \E[[\widehat{\alpha}_k(Z_{it})-\alpha_0(Z_{it})][\Delta Y_{it}^* - \Delta \gamma_0^*(V_{it})] \mid \wc] \\
  = & \E[[\widehat{\alpha}_k(Z_{it})-\alpha_0(Z_{it})] \E[\Delta \varepsilon_{it}^* \mid \wc,Z_{it}] \mid \wc] \\
  = & \E[[\widehat{\alpha}_k(Z_{it})-\alpha_0(Z_{it})] \E[\Delta \varepsilon_{it}^* \mid Z_{it}] \mid \wc] = 0.
  \end{aligned}
\end{equation} 
Consequently, it follows from \eqref{cp1}, \eqref{cp2}, \eqref{cp3}, \eqref{cp4}, and \eqref{cp5} that 
\begin{equation}
\label{final1}
\frac{1}{\sqrt{N}}\sum_{i\in\mathcal{F}_k}\left(\widehat{R}_{1ki}+\widehat{R}_{2ki}+\widehat{R}_{3ki}\right) \pto 0.
\end{equation}

Next, consider $\widehat{\delta}_i$. We have 
\begin{align*}
& \Cov(\widehat{\delta}_i, \widehat{\delta}_j \mid \wc) = \big(\E[\widehat{\alpha}_k(Z_{it}) - \alpha_0(Z_{it}) \mid \wc]\big)^2 \frac{1}{n_{k^\prime}}\Var(A_i\mid\wc)\\
& \leq \E\big[\big(\widehat{\alpha}_k(Z_{it}) - \alpha_0(Z_{it})\big)^2 \mid \wc\big]\frac{1}{n_{k^\prime}}\E(A_i^2\mid\wc),
\end{align*}
which implies 
\begin{align}
& \E\bigg(\bigg[\frac{1}{\sqrt{N}}\sum_{i\in\mathcal{F}_k}\left(\widehat{\delta}_i-\E(\widehat{\delta}_i \mid \wc)\right)\bigg]^2\Big| \wc \bigg) \notag\\
= & \frac{n_k}{N} \Var(\widehat{\delta}_i \mid \wc) + \frac{n_k(n_k-1)}{N}\Cov(\widehat{\delta}_i,\widehat{\delta}_j \mid \wc) \notag\\
\leq & \E(\widehat{\delta}_i^2\mid\wc) + \E\big[\big(\widehat{\alpha}_k(Z_{it}) - \alpha_0(Z_{it})\big)^2 \mid \wc\big]\E(A_i^2\mid\wc) \notag\\
= & \int [\widehat{\alpha}_k(z_{it}) - \alpha_0(z_{it})]^2[\Delta \widehat{\gamma}_k^*(v_{it}) - \Delta \gamma_0^*(v_{it})]^2 F_0(\rd w_{it}^*) \label{l1}\\
& + \int [\widehat{\alpha}_k(z_{it}) - \alpha_0(z_{it})]^2 F_0(\rd z_{it}) \int [\Delta \widehat{\gamma}_k(v_{it}) - \Delta \gamma_0(v_{it})]^2 F_0(\rd \overline{v}_{i,t-1:t}). \label{l2}
\end{align}
By Assumption \ref{assbd} and Lemma \ref{lemma13}, $\widehat{\alpha}_k(Z_{it}) = O_p(1)$, which together with $\alpha_0(Z_{it}) \leq C_\alpha$ implies $[\widehat{\alpha}_k(z_{it}) - \alpha_0(z_{it})]^2 = O_p(1)$. Also, note that Assumption \ref{assm} implies $\int [\Delta \widehat{\gamma}_k^*(v_{it}) - \Delta \gamma_0^*(v_{it})]^2 F_0(\rd w_{it}^*) \pto 0$. Thus, $\eqref{l1} \pto 0$. By Assumption \ref{assm}, $\eqref{l2} \pto 0$. Therefore, similar to the derivation of \eqref{cp1}, 
\begin{equation}
  \label{cp6}
  \frac{1}{\sqrt{N}}\sum_{i\in\mathcal{F}_k} \Big(\widehat{\delta}_i - \E\Big(\widehat{\delta}_i \mid \wc\Big)\Big) \pto 0.
\end{equation}

Note that 
\begin{align*}
  & \frac{1}{\sqrt{N}} \sum_{i\in\mathcal{F}_k} \E\Big(\widehat{\delta}_i \mid \wc\Big)= \frac{n_k}{\sqrt{N}} \E\Big(\widehat{\delta}_i \mid \wc\Big) \leq \sqrt{N} \E\Big(\widehat{\delta}_i \mid \wc\Big) \\
  = & \sqrt{N} \int [\widehat{\alpha}_k(z_{it}) - \alpha_0(z_{it})][\Delta \widehat{\gamma}_k^*(v_{it}) - \Delta \gamma_0^*(v_{it})] F_0(\rd w_{it}^*) \\
  = & \sqrt{N} \int [\widehat{\alpha}_k(z_{it}) - \alpha_0(z_{it})]\Big\{\int [\Delta \widehat{\gamma}_k^*(v_{it}) - \Delta \gamma_0^*(v_{it})]F_{0,W_{it}^* \mid z_{it}}(\rd w_{it}^*) \Big\} F_0(\rd z_{it}) \\
  \leq & \sqrt{N} \bigg\{\int [\widehat{\alpha}_k(z_{it}) - \alpha_0(z_{it})]^2 F_0(\rd z_{it})\bigg\}^{1/2} \\
  & \hspace{3cm} \bigg\{\int \Big\{\int [\Delta \widehat{\gamma}_k^*(v_{it}) - \Delta \gamma_0^*(v_{it})]F_{0,W_{it}^* \mid z_{it}}(\rd w_{it}^*) \Big\}^2 F_0(\rd z_{it})\bigg\}^{1/2},
\end{align*}
and by the Cauchy-Schwarz inequality, Assumption \ref{asscr} would imply that the last line converges to 0 in probability, where $F_{0,W_{it}^* \mid z_{it}}$ is the conditional distribution function of $W_{it}^*$ given $Z_{it}=z_{it}$. Therefore, it follows by \eqref{cp6} that 
\begin{equation}
  \label{final2}
  \frac{1}{\sqrt{N}}\sum_{i\in\mathcal{F}_k} \widehat{\delta}_i \pto 0.
\end{equation}

Finally, it follows from \eqref{final1}, \eqref{final2}, and \eqref{final3} that 
\begin{align*}
  & \frac{1}{\sqrt{N}}\sum_{k=1}^K \sum_{i\in\mathcal{F}_k} \Big\{ m(W_{it}^*,\Delta \widehat{\gamma}_k^*) + \widehat{\alpha}_k(Z_{it})[\Delta Y_{it}^* - \Delta \widehat{\gamma}_k^*(V_{it})] \Big\} \\
  & - \frac{1}{\sqrt{N}}\sum_{k=1}^K \sum_{i\in\mathcal{F}_k} \Big\{ m(W_{it}^*,\Delta \gamma_0^*) + \alpha_0(Z_{it})[\Delta Y_{it}^* - \Delta \gamma_0^*(V_{it})] \Big\} \\
  = & \sum_{k=1}^K \frac{1}{\sqrt{N}} \sum_{i\in\mathcal{F}_k} \Big(\widehat{R}_{1ki} + \widehat{R}_{2ki} + \widehat{R}_{3ki} + \widehat{\delta}_i\Big) = \sum_{k=1}^K o_p(1) = o_p(1).
\end{align*}
\end{proof}

\begin{proof}[Proof of Theorem \ref{th2}]
Note that 
\begin{align*}
  & \frac{1}{\sqrt{N}} \sum_{k=1}^K \sum_{i\in\mathcal{F}_k} \Big\{ m(W_{it}^*,\Delta \gamma_0^*) - \theta_{0t}(s) + \alpha_0(Z_{it})[\Delta Y_{it}^* - \Delta \gamma_0^*(V_{it})] \Big\} \\
  = & \frac{1}{\sqrt{N}} \sum_{k=1}^K \sum_{i\in\mathcal{F}_k} \bigg\{ \frac{\partial \gamma_0(V_{it})}{\partial D_{i,t-s}} - \theta_{0t}(s) + \alpha_0(Z_{it})\Big(\Delta \varepsilon_{it} - \frac{1}{n_{k^\prime}}\sum_{u\in\mathcal{F}_{k^\prime}}\Delta \varepsilon_{ut}\Big) \bigg\}\\
  = & \frac{1}{\sqrt{N}} \sum_{i=1}^N \bigg\{\frac{\partial \gamma_0(V_{it})}{\partial D_{i,t-s}} - \theta_{0t}(s) + \alpha_0(Z_{it})\Delta \varepsilon_{it}\bigg\} - \frac{1}{\sqrt{N}} \sum_{k=1}^K \Big\{\sum_{i\in\mathcal{F}_k} \alpha_0(Z_{it}) \frac{1}{n_{k^\prime}}\sum_{u\in\mathcal{F}_{k^\prime}}\Delta \varepsilon_{ut}\Big\}.
\end{align*}
Define $S_\alpha^{(k)} = \sum_{i\in\mathcal{F}_k}\alpha_0(Z_{it})$, $S_{\alpha c}^{(k)} = \sum_{i\in\mathcal{F}_k}[\alpha_0(Z_{it})-\E[\alpha_0(Z_{it})]]$, and $S_\varepsilon^{(k)} = \sum_{i\in\mathcal{F}_k}\Delta \varepsilon_{it}$. Thus, we can rewrite the second term as 
\begin{align*}
  & \frac{1}{\sqrt{N}} \sum_{k=1}^K \Big\{\sum_{i\in\mathcal{F}_k} \alpha_0(Z_{it}) \frac{1}{n_{k^\prime}}\sum_{u\in\mathcal{F}_{k^\prime}}\Delta \varepsilon_{ut}\Big\} = \frac{1}{\sqrt{N}} \sum_{k=1}^K \frac{1}{n_{k^\prime}}S_\alpha^{(k)} S_\varepsilon^{(k^\prime)} \\
  = & \frac{1}{\sqrt{N}} \sum_{k=1}^K \frac{1}{n_{k^\prime}}[S_{\alpha c}^{(k)}+n_k\E[\alpha_0(Z_{it})]] S_\varepsilon^{(k^\prime)} \\
  = & \frac{1}{\sqrt{N}} \sum_{k=1}^K \frac{1}{n_{k^\prime}} S_{\alpha c}^{(k)}S_\varepsilon^{(k^\prime)} + \frac{\E[\alpha_0(Z_{it})]}{\sqrt{N}} \sum_{k=1}^K \frac{n_k}{n_{k^\prime}} S_\varepsilon^{(k^\prime)} \\
  = & \frac{1}{\sqrt{N}} \sum_{k=1}^K \frac{1}{n_{k^\prime}} S_{\alpha c}^{(k)}S_\varepsilon^{(k^\prime)} + \frac{\E[\alpha_0(Z_{it})]}{\sqrt{N}} \sum_{k=1}^K S_\varepsilon^{(k^\prime)} + o_p(1).
\end{align*}
By the CLT, $\displaystyle \frac{1}{\sqrt{n_k}} S_{\alpha c}^{(k)} = O_p(1)$ and $\displaystyle \frac{1}{\sqrt{n_{k^\prime}}}S_\varepsilon^{(k^\prime)} = O_p(1)$. Thus 
\[
\frac{1}{\sqrt{N}} \sum_{k=1}^K \frac{1}{n_{k^\prime}} S_{\alpha c}^{(k)}S_\varepsilon^{(k^\prime)} = \frac{1}{\sqrt{N}} \sum_{k=1}^K \frac{\sqrt{n_k}}{\sqrt{n_{k^\prime}}}\frac{1}{\sqrt{n_k}} S_{\alpha c}^{(k)} \frac{1}{\sqrt{n_{k^\prime}}}S_\varepsilon^{(k^\prime)} = O_p\Big(\frac{1}{\sqrt{N}}\Big) = o_p(1),
\]
which by the CLT and the Slutsky's theorem implies 
\begin{align*}
  & \sqrt{N} \Big(\widehat{\theta}_{0t}^d(s) - \theta_{0t}(s)\Big) \\
  = & \frac{1}{\sqrt{N}}\sum_{i=1}^N \Big\{\frac{\partial \gamma_0(V_{it})}{\partial D_{i,t-s}} - \theta_{0t}(s) + \alpha_0(Z_{it})\Delta \varepsilon_{it}\Big\} - \frac{\E[\alpha_0(Z_{it})]}{\sqrt{N}} \sum_{k=1}^K S_\varepsilon^{(k)} + o_p(1)\\
  = & \frac{1}{\sqrt{N}}\sum_{i=1}^N \Big(\frac{\partial \gamma_0(V_{it})}{\partial D_{i,t-s}} - \theta_{0t}(s) + \alpha_0(Z_{it})\Delta \varepsilon_{it} - \E[\alpha_0(Z_{it})]\Delta \varepsilon_{it}\Big) + o_p(1) \\
  & \hspace{-0.5cm} \dto N(0,\Psi).
\end{align*}
Note that we can use the CLT because Assumption \ref{assa} implies the variance of $\Big(\frac{\partial \gamma_0(V_{it})}{\partial D_{i,t-s}} - \theta_{0t}(s) + \alpha_0(Z_{it})\Delta \varepsilon_{it} - \E[\alpha_0(Z_{it})]\Delta \varepsilon_{it}\Big)$ is finite. 
\end{proof}
\begin{proof}[Proof of Theorem \ref{prop1}]
Denote $\widehat{\psi}_i = \frac{\partial \widehat{\gamma}_k(V_{it})}{\partial D_{i,t-s}} - \widehat{\theta}_{0t}^d(s) + \big[\widehat{\alpha}_k(Z_{it}) - \frac{1}{n_k}\sum_{j\in\mathcal{F}_k}\widehat{\alpha}_k(Z_{jt})\big][\Delta Y_{it}^* - \Delta \widehat{\gamma}_k^*(V_{it})]$ and $\widetilde{\psi}_i = \frac{\partial \gamma_0(V_{it})}{\partial D_{i,t-s}} - \theta_{0t}(s) + \big[\alpha_0(Z_{it}) - \E[\alpha_0(Z_{it})]\big][\Delta Y_{it}^* - \Delta \gamma_0^*(V_{it})]$.

Note that $\widetilde{\psi}_i = \frac{\partial \gamma_0(V_{it})}{\partial D_{i,t-s}} - \theta_{0t}(s) + \big[\alpha_0(Z_{it}) - \E[\alpha_0(Z_{it})]\big]\big[\Delta \varepsilon_{it} - \frac{1}{n_{k^\prime}}\sum_{u\in\mathcal{F}_{k^\prime}}\Delta \varepsilon_{ut}\big]$ and we denote $\psi_i = \frac{\partial \gamma_0(V_{it})}{\partial D_{i,t-s}} - \theta_{0t}(s) + \big[\alpha_0(Z_{it}) - \E[\alpha_0(Z_{it})]\big]\Delta \varepsilon_{it}$, $B_i = \alpha_0(Z_{it}) - \E[\alpha_0(Z_{it})]$, and $\overline{\varepsilon}_{k^\prime} = \frac{1}{n_{k^\prime}}\sum_{u\in\mathcal{F}_{k^\prime}}\Delta \varepsilon_{ut}$. Then, by the LLN, $\overline{\varepsilon}_{k^\prime} = o_p(1)$, so we have 
\begin{align}
  & \frac{1}{N} \sum_{i=1}^N \widetilde{\psi}_i^2 = \frac{1}{N} \sum_{k=1}^K \sum_{i\in\mathcal{F}_k} \big(\psi_i - B_i\overline{\varepsilon}_{k^\prime}\big)^2 \notag \\
  = & \frac{1}{N} \sum_{i=1}^N \psi_i^2 - \frac{2}{N} \sum_{k=1}^K \sum_{i\in\mathcal{F}_k} \psi_i B_i\overline{\varepsilon}_{k^\prime} + \frac{1}{N} \sum_{k=1}^K \sum_{i\in\mathcal{F}_k} B_i^2\overline{\varepsilon}_{k^\prime}^2 \notag \\
  = & \frac{1}{N} \sum_{i=1}^N \psi_i^2 - \frac{2}{N} \sum_{k=1}^K \sum_{i\in\mathcal{F}_k} o_p(1) + \frac{1}{N} \sum_{k=1}^K \sum_{i\in\mathcal{F}_k} o_p(1) \notag\\
  = & \frac{1}{N} \sum_{i=1}^N \psi_i^2 + o_p(1) = \E[\psi_i^2] + o_p(1). \label{eq31}
\end{align}

Then, we calculate 
\begin{equation}
  \label{eq32}
  \frac{1}{N} \sum_{i=1}^N \widehat{\psi}_i^2 = \frac{1}{N} \sum_{i=1}^N  \big(\widehat{\psi}_i-\widetilde{\psi}_i + \widetilde{\psi}_i)^2 =  \frac{1}{N} \sum_{i=1}^N (\widehat{\psi}_i-\widetilde{\psi}_i)^2 + \frac{2}{N} \sum_{i=1}^N (\widehat{\psi}_i-\widetilde{\psi}_i)\widetilde{\psi}_i + \frac{1}{N} \sum_{i=1}^N  \widetilde{\psi}_i^2.
\end{equation}
Note that 
\begin{equation}
  \label{eq21}
  (\widehat{\psi}_i-\widetilde{\psi}_i)^2  = (\widehat{R}_{1ki} + \widehat{R}_{2ki} + \widehat{R}_{3ki} + \widehat{R}_{4ki})^2 \leq C(\widehat{R}_{1ki}^2 + \widehat{R}_{2ki}^2 + \widehat{R}_{3ki}^2 + \widehat{R}_{4ki}^2),
\end{equation}
where 
\begin{gather*}
  \widehat{R}_{1ki} = \frac{\partial \widehat{\gamma}_k(V_{it})}{\partial D_{i,t-s}} - \frac{\partial \gamma_0(V_{it})}{\partial D_{i,t-s}}, \\
  \widehat{R}_{2ki} = \Big[\widehat{\alpha}_k(Z_{it}) - \frac{1}{n_k}\sum_{i\in\mathcal{F}_k}\widehat{\alpha}_k(Z_{it})\Big][\Delta \gamma_0(V_{it})^* - \Delta \widehat{\gamma}_k^*(V_{it})], \\
  \widehat{R}_{3ki} = \Big[\widehat{\alpha}_k(Z_{it}) - \frac{1}{n_k}\sum_{i\in\mathcal{F}_k}\widehat{\alpha}_k(Z_{it})- \big(\alpha_0(Z_{it}) - \E[\alpha_0(Z_{it})]\big)\Big][\Delta Y_{it}^* - \Delta \gamma_0^*(V_{it})], \\
  \widehat{R}_{4ki} = \theta_{0t}(s) - \widehat{\theta}_{0t}^d(s).
\end{gather*}

By Assumption \ref{assm}, $\E(\widehat{R}_{1ki}^2 \mid \wc) \pto 0$. Then, 
\begin{align*}
  & \E(\widehat{R}_{2ki}^2 \mid \wc) = \int \Big(\Big[b(z_{it})-\frac{1}{n_k}\sum_{i\in\mathcal{F}_k}b(z_{it})\Big]^\prime \widehat{\rho}_k\Big)^2[\Delta \widehat{\gamma}_k^*(v_{it}) - \Delta \gamma_0^*(v_{it})]^2 F_0(\rd w_{it}^*) \\
  \leq & 4C_b^2\|\widehat{\rho}_k\|_1^2 \int [\Delta \widehat{\gamma}_k^*(v_{it}) - \Delta \gamma_0^*(v_{it})]^2 F_0(\rd w_{it}^*) = O_p(1) \int [\Delta \widehat{\gamma}_k^*(v_{it}) - \Delta \gamma_0^*(v_{it})]^2 F_0(\rd w_{it}^*) \\
  = & O_p(1)o_p(1) = o_p(1).
\end{align*}
where the inequality follows by the almost surely bounded condition for $b(Z_{it})$ in Assumption \ref{assbd}, the second equality follows by Lemma \ref{lemma13}, and the third equality is implied by Assumption \ref{assm}. Next, note that 
\begin{align*}
  & \Big[\widehat{\alpha}_k(Z_{it}) - \frac{1}{n_k}\sum_{i\in\mathcal{F}_k}\widehat{\alpha}_k(Z_{it})- \big(\alpha_0(Z_{it}) - \E[\alpha_0(Z_{it})]\big)\Big]^2 \\
  \leq & C\Big\{[\widehat{\alpha}_k(Z_{it}) - \alpha_0(Z_{it})]^2 + \Big[\frac{1}{n_k}\sum_{i\in\mathcal{F}_k}\big[\widehat{\alpha}_k(Z_{it}) - \alpha_0(Z_{it})\big]\Big]^2 \\
  & \hspace{6cm} + \Big[\frac{1}{n_k}\sum_{i\in\mathcal{F}_k}\alpha_0(Z_{it}) - \E[\alpha_0(Z_{it})]\Big]^2\Big\}.
\end{align*}
By Assumption \ref{assm}, 
\[
\E\Big([\widehat{\alpha}_k(Z_{it}) - \alpha_0(Z_{it})]^2 \mid \wc\Big) = \int [\widehat{\alpha}_k(z_{it}) - \alpha_0(z_{it})]^2 F_0(\rd z_{it}) \pto 0.
\]
By the Cauchy-Schwarz inequality and Assumption \ref{assm}, 
\[
\E\Big(\Big[\frac{1}{n_k}\sum_{i\in\mathcal{F}_k}\big[\widehat{\alpha}_k(Z_{it}) - \alpha_0(Z_{it})\big]\Big]^2 \mid \wc\Big) \leq \E\Big([\widehat{\alpha}_k(Z_{it}) - \alpha_0(Z_{it})]^2 \mid \wc\Big) \pto 0.
\]
In addition,
\begin{align*}
  & \E\Big(\Big[\frac{1}{n_k}\sum_{i\in\mathcal{F}_k}\alpha_0(Z_{it}) - \E[\alpha_0(Z_{it})]\Big]^2 \mid \wc\Big) = \E\Big[\frac{1}{n_k}\sum_{i\in\mathcal{F}_k}\alpha_0(Z_{it}) - \E[\alpha_0(Z_{it})]\Big]^2 \\
  & = \frac{1}{n_k} \Var(\alpha_0(Z_{it})) \to 0,
\end{align*}
where the variance is finite because $\alpha_0(Z_{it}) \leq C_\alpha$ in Assumption \ref{assa}. Thus, we obtain $\E(\widehat{R}_{3ki}^2 \mid \wc) \pto 0$. 

Combining the above three results, we derive $\E(\widehat{R}_{1ki}^2 + \widehat{R}_{2ki}^2 + \widehat{R}_{3ki}^2 \mid \wc) \pto 0$, which, by the conditional Markov inequality, implies 
\begin{equation}
  \label{eq22}
  \frac{1}{N} \sum_{i=1}^N \Big(\widehat{R}_{1ki}^2 + \widehat{R}_{2ki}^2 + \widehat{R}_{3ki}^2 \Big) \pto 0.
\end{equation}
By Theorem \ref{th2}, $\widehat{\theta}_{0t}^d(s) - \theta_{0t}(s) = o_p(1)$, implying 
\begin{equation}
  \label{eq23}
  \frac{1}{N}\sum_{i=1}^N \widehat{R}_{4ki}^2 \pto 0.
\end{equation}
Therefore, it follows from \eqref{eq21}, \eqref{eq22}, and \eqref{eq23} that 
\begin{equation}
  \label{eq33}
  \frac{1}{N}\sum_{i=1}^N (\widehat{\psi}_i-\widetilde{\psi}_i)^2 = o_p(1).
\end{equation}

By the Cauchy-Schwarz inequality, 
\begin{equation}
  \label{eq34}
  \frac{2}{N} \sum_{i=1}^N (\widehat{\psi}_i-\widetilde{\psi}_i)\widetilde{\psi}_i \leq 2 \sqrt{\frac{1}{N} \sum_{i=1}^N (\widehat{\psi}_i-\widetilde{\psi}_i)^2}\sqrt{\frac{1}{N} \sum_{i=1}^N \widetilde{\psi}_i^2} = 2\sqrt{o_p(1)}\sqrt{\E[\psi_i^2] + o_p(1)} = o_p(1).
\end{equation}

Finally, it follows from \eqref{eq31}, \eqref{eq32}, \eqref{eq33}, and \eqref{eq34} that 
\[
\frac{1}{N} \sum_{i=1}^N \widehat{\psi}_i^2 = \E[\psi_i^2] + o_p(1) = \Psi + o_p(1),
\]
which completes the proof.
\end{proof}

\subsection{Proof of theorems in the appendix}
\begin{proof}[Proof of Theorem \ref{ths1}]
From the proof of Theorem \ref{th2}, we have 
\[
\begin{aligned}
  & \sqrt{N} \Big(\widehat{\theta}_{0t}^d(s) - \theta_{0t}(s)\Big) \\
= & \frac{1}{\sqrt{N}}\sum_{i=1}^N \Big(\frac{\partial \gamma_0(V_{it})}{\partial D_{i,t-s}} - \theta_{0t}(s) + \alpha_{0ts}(Z_{it})\Delta \varepsilon_{it} - \E[\alpha_{0ts}(Z_{it})]\Delta \varepsilon_{it}\Big) + o_p(1).
\end{aligned}
\]
Combining the results for $s=0,\ldots,q$, by the multivariate CLT and the Slutsky's theorem, we have  
\[
\begin{aligned}
  & \sqrt{N} \begin{pmatrix}
    \widehat{\theta}_{0t}^d(0) - \theta_{0t}(0) \\
    \cdots \\
    \widehat{\theta}_{0t}^d(q) - \theta_{0t}(q)
  \end{pmatrix} \\
= & \frac{1}{\sqrt{N}}\sum_{i=1}^N \begin{pmatrix}
  \frac{\partial \gamma_0(V_{it})}{\partial D_{i,t}} - \theta_{0t}(0) + \alpha_{0t0}(Z_{it})\Delta \varepsilon_{it} - \E[\alpha_{0t0}(Z_{it})]\Delta \varepsilon_{it} \\
  \cdots \\
  \frac{\partial \gamma_0(V_{it})}{\partial D_{i,t-q}} - \theta_{0t}(q) + \alpha_{0tq}(Z_{it})\Delta \varepsilon_{it} - \E[\alpha_{0tq}(Z_{it})]\Delta \varepsilon_{it} 
\end{pmatrix} + o_p(1) \\
& \hspace{-0.5cm} \dto N(0, \Sigma).
\end{aligned}
\]
Therefore, 
\[
\sqrt{N} \Big(\widehat{\theta}_{0t} - \theta_{0t}\Big) = \sqrt{N} w^\prime \begin{pmatrix}
    \widehat{\theta}_{0t}^d(0) - \theta_{0t}(0) \\
    \cdots \\
    \widehat{\theta}_{0t}^d(q) - \theta_{0t}(q)
  \end{pmatrix} \dto N(0,w^\prime \Sigma w).
\]
\end{proof}

\begin{proof}[Proof of Theorem \ref{ths2}]
  The proof is similar to the proof of Theorem \ref{ths1} and is thus omitted.
\end{proof}
\end{appendices}

\bibliography{bibliography.bib}

\end{document}